\newtheorem{lemma}{Lemma}[section]
\newtheorem{theorem}[lemma]{Theorem}
\newtheorem{remark}[lemma]{Remark}
\newcommand{\pref}[1]{{\rm(\ref{#1})}}
\begin{document}
\baselineskip11pt

\begin{frontmatter}

\title{Fairing-PIA: Progressive iterative approximation for fairing curve and surface generation}

\author[label1]{Yini Jiang}
\author[label1,label2]{Hongwei Lin\corref{cor1}}
\ead{hwlin@zju.edu.cn}
\cortext[cor1]{Corresponding author at: School of Mathematical Science, Zhejiang University, Hangzhou, 310027, China.}
\address[label1]{School of Mathematical Science, Zhejiang University, Hangzhou, 310027, China}
\address[label2]{State Key Laboratory of CAD\&CG, Zhejiang University, Hangzhou, 310058, China}

\begin{abstract}
The fairing curves and surfaces are used extensively in geometric design, 
    modeling, and industrial manufacturing. 
However, the majority of conventional fairing approaches, 
    which lack sufficient parameters to improve fairness, 
    are based on energy minimization problems. 
In this study, we develop a novel progressive-iterative approximation method for fairing curve and surface generation~(fairing-PIA). 
Fairing-PIA is an iteration method that can generate a series of curves~(surfaces) 
    by adjusting the control points of B-spline curves~(surfaces). 
In fairing-PIA, each control point is endowed with an individual weight. 
Thus, the fairing-PIA has many parameters to optimize the shapes of curves and surfaces. 
Not only a fairing curve~(surface) can be generated 
    globally through fairing-PIA, 
    but also the curve~(surface) can be improved locally. 
Moreover, we prove the convergence of the developed fairing-PIA 
    and show that the conventional energy minimization fairing model is a special case of fairing-PIA. 
Finally, numerical examples indicate that 
    the proposed method is effective and efficient.
\end{abstract}

\begin{keyword}
		Curve and surface fairing,
		Progressive iterative approximation,
        Fairing-PIA,
        Energy minimization,
        Geometric iteration method
\end{keyword}

\end{frontmatter}


\section{Introduction}

The generation of fairing curves and surfaces from scattered data points 
    is a fundamental problem in many fields.
For example, in the area of geometric design, 
    generating smooth airfoils~\cite[]{Li2005}, 
    ship hulls forms~\cite[]{SARIOZ20062105}, 
    and car hoods~\cite[]{WESTGAARD2001619,Hashemian2018} is often necessary 
    because these geometric properties can further affect their performance characteristics. 
Recently, many scholars have been interested in the applications of fairing curves 
    in robot path planning~\cite[]{Gili2011,Song2021}; 
    thus, designing effective and flexible fairing algorithms has become an urgent concern.

Traditionally, two kinds of fairing methods exist: 
    global~\cite[]{Hagen1991,WANG2010} and local methods~\cite[]{Kjellander1983,FARIN198791,Wang2015}. 
In both fairing methods, 
    the fairing problem is usually modeled as an energy optimization problem 
    or constrained energy optimization problem. 
However, the energy optimization model lacks sufficient parameters 
    to adjust and improve the fairness of curves and surfaces finely. 
The capability of fine adjustment is a key ingredient in generating fairing curves and surfaces.

In this study, 
    we develop a novel progressive-iterative approximation method 
    for fairing curve and surface generation~(fairing-PIA) 
    to overcome the deficiency of the traditional fairing methods. 
A series of curves~(surfaces) can be generated through fairing-PIA 
    by updating the control points iteratively.
In each iteration, 
    we first construct the \emph{fitting vector} for a control point 
    and the \emph{fairing vector} for a control point.     
Then, the weighted sum of the two vectors is taken as the \emph{difference vector} for each control point. 
    Finally, the new control points are generated by adding the difference vectors to the old ones.
In this way, each control point is endowed with an individual weight. 
In total, considerable weights exist. 
These weights can be utilized to adjust 
    and improve the fairness of curves and surfaces finely. 
Moreover, we prove the convergence of fairing-PIA. 
We also show that the traditional energy optimization fairing model is a special case of fairing-PIA. 
In conclusion, the following are the main contributions of this study:
 \begin{itemize}
  \item We develop the fairing-PIA for fairing curve and surface generation. 
        In this method, many parameters can be utilized to improve the fairness finely.
  \item Fairing-PIA is both a global and local method; 
        thus, it can generate a global fairing result and adjust the curves~(surfaces) locally.
  \item Fairing-PIA is a flexible method 
        that allows users to change the weights for control points, 
        the knot vectors, or data parametrization in each iteration.
 \end{itemize}
In this paper, we present the fairing-PIA method with B-spline curves and surfaces.
However, fairing-PIA is valid for other blending curves and surfaces,
    including B\'{e}zier curves and surfaces, 
    NURBS curves and surface, T-spline surfaces.

The paper is organized as follows: 
    In Section 1.1, we present a brief overview of the related work. 
Section 2 contains the details of our algorithm for curve fairing. 
    Convergence is discussed in Section 3. 
Then, we propose a fairing algorithm for surfaces in Section 4. 
    Moreover, numerical examples are presented in Section 5. 
Finally, this paper is concluded in Section 6.

\subsection{Related work}

In this section, 
    we briefly review the related work on curve and surface fairing 
    and progressive-iterative approximation (PIA).

\textbf{Curve and surface fairing}:
The basic approaches to designing the fairing curve include poor point removal,
    energy minimization, and a combination of both methods.
Kjellander et al.~\cite[]{Kjellander1983} proposed an interactive fairing algorithm 
    to improve the smoothness of the curve by modifying the point where the offset vector is greater than the threshold value.
Farin et al.~\cite[]{FARIN198791} performed the knot removal and reinsertion tricks 
    for the first time to increase the efficiency of replacing points. 
However, such a local fairing method is intrinsically inefficient 
    when numerous abnormal points are encountered.

Energy minimization methods are proposed to overcome this limitation. 
The basic idea is to find the smoothest curve in the sense of minimal energy functional. 
The commonly used energy includes strain energy~\cite[]{Hagen1991}, 
    tension energy~\cite[]{Pottmann1990}, 
    Holladay functional~\cite[]{Campbell1957}, 
    jerk energy~\cite[]{Meier1987}, 
    and the variation of radius of curvature~\cite[]{10.5555/917424}.
The energy methods can perform global optimization for all points; 
    however, the calculation is time-consuming, 
    and the fitting effect cannot be controlled well. 
Hence, another report focuses on integrating the aforementioned two approaches.

Vassilev~\cite[]{Vassilev1996} developed an autonomous fairing technique based on energy minimization and point insertion.
Unlike the previous methods, 
    the technique developed only adds new points when necessary. 
Thus, the number of control points to be solved is greatly decreased. 
Inspired by this work, 
    researchers have presented numerous approaches~\cite[]{Zhang2001,Cali2010,LI2004499} to identify bad points efficiently 
    and update the control points derived from the energy minimization method. 
Wavelet and multiresolution decomposition techniques have been recently employed to 
    fair non-uniform rational B-splines (NURBS)~\cite[]{WANG2010,Aimin2011,Aimin2012}.
For more details on curve and surface fairing,
    please refer to Ref.~\cite[]{Josef1993}.

\textbf{PIA}:
Inspired by Qi and de Boor's work~\cite[]{Qi75,Deboor1979},
    Lin et al.~\cite[]{Lin04} initiated the study of PIA.
Compared with classical methods of curve and surface fitting, 
    PIA avoids the intrinsic lack of direct solving equations, 
    such as high computational cost, 
    by generating a series of fitting curves and surfaces after each iteration 
    to approximate the curve and surface, 
    which interpolate the given data points. 
Then, Lin et al.~\cite[]{LIN2005575} proved that blending curves and surfaces with normalized totally positive bases achieves convergence. 
Shi et al.~\cite[]{Shi06} popularized the result to NURBS curves and surfaces.

Additionally, several accelerating algorithms have been developed. 
Lu~\cite[]{LU2010} proposed the weighted PIA, 
    which can accelerate the convergence rate by introducing optimal weights for difference vectors. 
Wang~\cite[]{Wang2018} developed the Gauss-Seidel PIA to accelerate the traditional PIA method. 
Moreover, Deng and Lin~\cite[]{DENG201432} proposed the least square progressive iterative approximation (LSPIA) 
    and analyzed its convergence. 
In that method, the number of control points is allowed to be less than that of the data points. 
This approach is particularly suitable for handling large data sets.
Since then, numerous interesting applications have been found in diverse fields, 
    including subdivision surfaces~\cite[]{Chen2008,Deng2012,Hamza2021}, 
    implicit curve and surface reconstruction~\cite[]{Hamza2020}, 
    and single image super-resolution~\cite[]{Zhang2019}. 
For more details we refer the readers to~\cite[]{linreview2018}.

\section{Progressive-iterative approximate for curve fairing}
\label{section:curve_fairing}
In this section, we present the fairing-PIA method for B-spline curve fairing.
Given an ordered data set $\{\bm{Q}_i\}_{i=1}^m$,
    with parameters $\left\{t_i\right\}_{i=1}^m$ satisfying $t_1 \leq t_2 \leq \cdots \leq t_m$,
    the initial B-spline curve can be constructed as follows:
\begin{equation}
	\bm{P}^{[0]}(t)=\sum_{j=1}^n N_j(t)\bm{P}_j^{[0]},\quad t\in[t_1,t_m],
	\label{eq:init_curve}
\end{equation}
    where the initial control points $\{\bm{P}_j^{[0]}\}_{j=1}^n$ can be randomly selected,
    and $N_j(t)$ is the B-spline basis function corresponding to the $j$-th control point.
Suppose that the B-spline curve~\pref{eq:init_curve} is defined on the knot vector $\{\bar{t}_1,\bar{t}_2,\cdots,\bar{t}_{n+p+1}\}$,
    where $p$ is 
    \newpage \noindent
    the degree of the B-spline basis function, and $n$ is the number of control points.
In this case, 
    we define the \emph{fairing functional} as follows to generate the fairing curve:
\begin{equation}
	\mathcal{F}_{r,j}(f) = \int_{t_1}^{t_m}N_{r,j}(t)fdt,\quad j=1,2,\cdots,n,\;r=1,2,3,
	\label{eq:def_of_functional}
\end{equation}
    where $N_{r,j}(t)$ represents the $r$-th derivative of $N_j(t)$.
Given the local support property of B-spline functions, 
    the given data points $\{\bm{Q}_i\}_{i=1}^m$ 
    whose parameters fall in the support domain of the $j$-th basis function,
    i.e., $N_j(t) \ne 0$, are classified into the $j$-th group,
    corresponding to the $j$-th control point.
The index set of the data points in the $j$-th group is denoted as $I_j$.

First, we calculate the \emph{difference vectors for data points} in the first step of the iteration, i.e.,
\begin{equation*}
	\bm{d}_i^{[0]} = \bm{Q}_i - \bm{P}^{[0]}(t_i),\quad i=1,2,\cdots,m.
\end{equation*}
The \emph{fitting vectors for control points} are generated by taking a weighted sum of the difference vectors for the $j$-th group data points:
\begin{equation*}
	\bm{\delta}_j^{[0]} = \sum_{h\in I_j}N_j(t_h)\bm{d}_h^{[0]},\quad j=1,2,\cdots,n.
\end{equation*}
Next, the \emph{fairing vectors for control points} of the first iteration can be calculated as
\begin{equation*}
	\bm{\eta}_{j}^{[0]} =\sum_{l=1}^n \mathcal{F}_{r,l}\left(N_{r,j}(t)\right)\bm{P}_l^{[0]},\ j=1,2,\cdots,n.
\end{equation*}
Then, the new control points $\left\{\bm{P}_j^{[1]}\right\}_{j=1}^n$ after the first iteration comprise the contribution of three terms:
\begin{equation*}
	\bm{P}_j^{[1]} = \bm{P}_j^{[0]} + \mu_j
	\left[
	\left(1-\omega_j\right)\bm{\delta}_j^{[0]} - \omega_j\bm{\eta}_{j}^{[0]}\right],
\end{equation*}
    where $\mu_j$ is a \emph{normalization weight},
    and $\omega_j$ is a \emph{smoothing weight} corresponding to the $j$-th control point.
The larger the smoothing weight $\omega_j$ is,
    the smoother the generated curve is.
The new curve is obtained as follows by substituting the new control points into the curve~\eqref{eq:init_curve}:
\begin{equation*}
	\bm{P}^{[1]}(t)=\sum_{j=1}^nN_j(t)\bm{P}_j^{[1]},\quad t\in[t_1,t_m].
\end{equation*}

This procedure is performed iteratively.
After the $k$-th iteration,
   the $k$-th curve $\bm{P}^{[k]}(t)$ is generated:
\begin{equation*}
	\bm{P}^{[k]}(t)=\sum_{j=1}^nN_j(t)\bm{P}_j^{[k]},\quad t\in[t_1,t_m].
\end{equation*}
To construct the $(k+1)$-st curve $\bm{P}^{[k+1]}(t)$,
   we first calculate the $(k+1)$-st difference vectors for data points,
\begin{equation*}
	\bm{d}_i^{[k]} = \bm{Q}_i - \bm{P}^{[k]}(t_i),\quad i=1,2,\cdots,m.
	\label{eq:kst_data_diffvect}
\end{equation*}
Then, the fitting vectors for control points and the fairing vectors for control points are calculated by
\begin{equation*}
	\bm{\delta}_j^{[k]} = \sum_{h\in I_j}N_j(t_h)\bm{d}_h^{[k]},\quad j=1,2,\cdots,n,\ \text{and},
	\label{eq:kst_ctrl_diffvect}
\end{equation*}
\begin{equation*}
	\bm{\eta}_{j}^{[k]} =\sum_{l=1}^n \mathcal{F}_{r,l}\left(N_{r,j}(t)\right)\bm{P}_l^{[k]},\quad j=1,2,\cdots,n.
	\label{eq:kst_fairvect}
\end{equation*}
Finally, the control points of the $(k+1)$-st curve are produced by
\begin{equation}
	\bm{P}_j^{[k+1]} = \bm{P}_j^{[k]} + \mu_j
	\left[
	\left(1-\omega_j\right)\bm{\delta}_j^{[k]} - \omega_j\bm{\eta}_{j}^{[k]}
	\right],\ j=1,2,\cdots,n,
	\label{eq:kst_newctrl}
\end{equation}
   leading to the $(k+1)$-st B-spline curve,
\begin{equation*}
	\bm{P}^{[k+1]}(t)=\sum_{j=1}^nN_j(t)\bm{P}_j^{[k+1]},\quad t\in[t_1,t_m].
\end{equation*}
In this way, we obtain a sequence of curves $\left\{\bm{P}^{[k]}(t),\;k=1,2,3,\cdots\right\}$.
In the following section,
    we prove that the sequence of curves converges.
In the fairing-PIA stated above (Eq.~\eqref{eq:kst_newctrl}),
    $\mu_j$ is a normalization weight that guarantees the convergence of the algorithm,
    and $\omega_j$ is a smoothing weight that balances the fitting vector and fairing vector.
Fairing-PIA is much more flexible for adjusting the fairness than the traditional fairing methods 
    because each control point has an individual smoothing weight.

\section{Convergence analysis}
\label{section:proof_convergence}

In this section, we show that the fairing-PIA for
    curve fairing~\pref{eq:kst_newctrl} is convergent,
    and the traditional energy minimization fairing model is a special case of fairing-PIA.
It is rewritten in the following matrix form
    to show the convergence of the fairing-PIA method~\eqref{eq:kst_newctrl}:
\begin{equation}
	\begin{aligned}
		\bm{P}^{[k+1]} & = \bm{P}^{[k]} + \bm{\Lambda}\left[ \left(\bm{I}-\bm{\Omega}\right)\bm{N}^T\left(\bm{Q}-\bm{NP}^{[k]}\right)-
		\bm{\Omega}\bm{D}_{r}P^{[k]}\right]                                                                                          \\
		& = \left(\bm{I}-\bm{\Lambda A}\right)\bm{P}^{[k]} +
		\bm{\Lambda}\left(\bm{I}-\bm{\Omega}\right)\bm{N}^T\bm{Q},
	\end{aligned}
	\label{eq:kst_newctrl_mtx}
\end{equation}
    in which
\begin{align}
	\bm{P}^{[k]} & = \left[\bm{P}_1^{[k]},\bm{P}_2^{[k]},\cdots,\bm{P}_n^{[k]}\right]^{T}, \;
	\bm{Q}^{[k]}= \left[\bm{Q}_1^{[k]},\bm{Q}_2^{[k]},\cdots,\bm{Q}_m^{[k]}\right]^{T},
	\notag
	\\
	\bm{\Omega}  & = diag \left(\omega_1,\omega_2,\cdots,\omega_n\right),\;
	\bm{\Lambda} = diag \left(\mu_1,\mu_2,\cdots,\mu_n\right),
	\notag
	\\
	\bm{A}       & =\left(\bm{I}-\bm{\Omega}\right)\bm{N}^T\bm{N}+\bm{\Omega}\bm{D}_{r},
	\label{eq:mtx_A}
\end{align}
    and $\bm{N}$ and $\bm{D}_{r}$ denote the collocation matrix and Gram matrix, respectively,
\begin{align}
	\nonumber
	\bm{N}       & = \begin{bmatrix}
		N_1(t_1) & N_2(t_1) & \cdots & N_n(t_1) \\
		N_1(t_2) & N_2(t_2) & \cdots & N_n(t_2) \\
		\vdots   & \vdots   & \ddots & \vdots   \\
		N_1(t_m) & N_2(t_m) & \cdots & N_n(t_m) \\
	\end{bmatrix},                                                                                                                       \\
	\nonumber
  \bm{D}_{r} & =\begin{bmatrix}
    \mathcal{F}_{r,1}\left(N_{r,1}(t)\right) & \mathcal{F}_{r,1}\left(N_{r,2}(t)\right) & \cdots & \mathcal{F}_{r,1}\left(N_{r,n}(t)\right) \\
    \mathcal{F}_{r,2}\left(N_{r,1}(t)\right) & \mathcal{F}_{r,2}\left(N_{r,2}(t)\right) & \cdots & \mathcal{F}_{r,2}\left(N_{r,n}(t)\right) \\
        \vdots                                      & \vdots                                      & \ddots & \vdots                         \\
    \mathcal{F}_{r,n}\left(N_{r,1}(t)\right) & \mathcal{F}_{r,n}\left(N_{r,2}(t)\right) & \cdots & \mathcal{F}_{r,n}\left(N_{r,n}(t)\right)\end{bmatrix}.
\end{align}
Moreover, $diag(\cdot)$ denotes the diagonal matrix, 
    and $r$ can be chosen from the set $\left\{1,2,3\right\}$.

\begin{lemma}
	Let $\bm{A}=[a_{ij}]$ \eqref{eq:mtx_A} be a strict diagonally dominant matrix.
	If the diagonal elements of matrix $\bm{\Lambda}$ satisfy $0<\mu_{i}<\frac{2}{\sum_{j=1}^{n}|a_{ij}|},\;i=1,2,\cdots,n$,
	    then we obtain $0<\left\| \bm{I-\Lambda A} \right\|_{\infty}<1$.
	\label{lemma:norm}
\end{lemma}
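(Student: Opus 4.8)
The plan is to evaluate $\left\|\bm{I}-\bm{\Lambda A}\right\|_{\infty}$ directly as a maximum over absolute row sums and then reduce the bound, one row at a time, to an elementary estimate in the single scalar $\mu_i$. Since $\bm{\Lambda}$ is diagonal, the entries of $\bm{I}-\bm{\Lambda A}$ are $1-\mu_i a_{ii}$ on the diagonal and $-\mu_i a_{ij}$ off the diagonal; because $\mu_i>0$, the $i$-th absolute row sum is
\begin{equation*}
	f_i(\mu_i):=\left|1-\mu_i a_{ii}\right|+\mu_i\sum_{j\ne i}\left|a_{ij}\right|,
\end{equation*}
so that $\left\|\bm{I}-\bm{\Lambda A}\right\|_{\infty}=\max_{1\le i\le n}f_i(\mu_i)$. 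I would write $S_i:=\sum_{j\ne i}|a_{ij}|$ and $R_i:=\sum_{j=1}^{n}|a_{ij}|$, and observe that for the specific matrix $\bm{A}$ in \eqref{eq:mtx_A} each diagonal entry $a_{ii}$ is a convex combination of the nonnegative quantities $\sum_{h}N_i(t_h)^2$ and $\int_{t_1}^{t_m}N_{r,i}(t)^2\,dt$ (using that the smoothing weights satisfy $\omega_i\in[0,1]$), hence $a_{ii}\ge 0$; together with strict diagonal dominance this forces $a_{ii}=|a_{ii}|>0$. Consequently $R_i=a_{ii}+S_i$, strict diagonal dominance reads exactly $a_{ii}>S_i$ (equivalently $R_i<2a_{ii}$), and the hypothesis on the normalization weights becomes $0<\mu_i<2/R_i$.

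The core step is to bound $f_i(\mu_i)<1$ by splitting on the sign of $1-\mu_i a_{ii}$. If $0<\mu_i\le 1/a_{ii}$, then $f_i(\mu_i)=1-\mu_i a_{ii}+\mu_i S_i=1-\mu_i(a_{ii}-S_i)<1$, since $a_{ii}-S_i>0$ and $\mu_i>0$. If instead $1/a_{ii}<\mu_i<2/R_i$ (note $1/a_{ii}<2/R_i$ precisely because $S_i<a_{ii}$), then $f_i(\mu_i)=\mu_i a_{ii}-1+\mu_i S_i=\mu_i R_i-1<(2/R_i)R_i-1=1$. Thus $f_i(\mu_i)<1$ for every admissible $\mu_i$, and taking the maximum over $i$ yields $\left\|\bm{I}-\bm{\Lambda A}\right\|_{\infty}<1$.

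For the strict lower bound I would argue that $\bm{I}-\bm{\Lambda A}$ is not the zero matrix: if it were, $\bm{\Lambda A}=\bm{I}$ would make $\bm{A}=\bm{\Lambda}^{-1}$ diagonal, contradicting the fact that $\bm{N}^T\bm{N}$ (and hence $\bm{A}$) has nonzero entries adjacent to the diagonal because neighbouring B-spline basis functions have overlapping supports. Equivalently, picking any index $i$ with $S_i>0$ gives $f_i(\mu_i)\ge\mu_i S_i>0$, whence $\left\|\bm{I}-\bm{\Lambda A}\right\|_{\infty}>0$. Combining the two estimates yields $0<\left\|\bm{I}-\bm{\Lambda A}\right\|_{\infty}<1$.

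I expect the only genuine subtlety to be the sign of the diagonal entries: both the case split and the inequality $R_i<2a_{ii}$ rely on $a_{ii}>0$, which does not follow from abstract strict diagonal dominance alone but does hold here because of the Gram-type structure of \eqref{eq:mtx_A}. Once that point is secured, the remainder is a routine one-variable estimate with no further obstacles, and the same computation will reappear verbatim in the surface case.
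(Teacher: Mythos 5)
Your proof is correct and follows essentially the same route as the paper's: a direct row-by-row evaluation of the $\infty$-norm followed by a case split on the sign of $1-\mu_i a_{ii}$ (the paper uses three cases, separating out $\mu_i=1/a_{ii}$, where you merge it into the first). You are in fact slightly more careful on two points the paper glosses over --- justifying $a_{ii}>0$ from the Gram-type structure of $\bm{A}$ rather than merely asserting nonnegative diagonal entries, and actually establishing the strict lower bound $\left\|\bm{I}-\bm{\Lambda A}\right\|_{\infty}>0$ via the overlapping supports of adjacent basis functions, which the paper leaves implicit.
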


\begin{proof}
	Given that $\bm{A}$ is a strict diagonally dominant matrix with nonnegative diagonal entries,
	    we obtain	$\sum_{j=1}^{n}\left|a_{ij}\right|<2a_{ii}$.
  Consequently, 
	\begin{equation}
		0<\frac{1}{a_{ii}}<\frac{2}{\sum_{j=1}^{n}\left|a_{ij}\right|},\;i=1,2,\cdots,n.
		\label{eq:diag_dominant}
	\end{equation}
	On the other hand, 
      according to the definition of $\infty$-norm,
      we obtain
	\begin{equation}
		\left\| \bm{I-\Lambda A} \right\|_{\infty}
		= \underset{i}{\text{max}}\left(|1-\mu_ia_{ii}|+\sum_{j=1,\;j\ne i}^{n}\mu_i|a_{ij}|\right).
		\label{eq:nrm_def}
	\end{equation}
	We discuss below the range of $\left\| \bm{I-\Lambda A} \right\|_{\infty}$ under three cases:
	\begin{enumerate}
		\item[(i)] If $\mu_i = \frac{1}{a_{ii}},\;i=1,2,\cdots,n$,
		then Eq. \eqref{eq:nrm_def} can be simplified as
		\begin{equation*}
			\left\| \bm{I-\Lambda A} \right\|_{\infty}
			= \underset{i}{\text{max}}\left(\frac{\sum_{j=1,\;j\ne i}^{n}|a_{ij}|}{a_{ii}}\right)<1.
			\label{eq:lemma_case1}
		\end{equation*}
		
		\item[(ii)] If $0<\mu_i<\frac{1}{a_{ii}},\;i=1,2,\cdots,n$,
		then we obtain the following by substituting Eq. \eqref{eq:diag_dominant} into Eq. \eqref{eq:nrm_def}:
		\begin{equation*}
			\begin{aligned}
				\left\| \bm{I-\Lambda A} \right\|_{\infty}
				\nonumber
				& = \underset{i}{\text{max}}\left(1-\mu_ia_{ii}+\sum_{j=1,\;j\ne i}^{n}\mu_i|a_{ij}|\right)  \\
				\nonumber
				& = \underset{i}{\text{max}}\left(1+\mu_i\left( \sum_{j=1}^{n}|a_{ij}|-2a_{ii}\right)\right) \\
				& <1.
				\label{eq:lemma_case2}
			\end{aligned}
		\end{equation*}
		
		\item[(iii)] If $\frac{1}{a_{ii}}<\mu_i<\frac{2}{\sum_{j=1}^{n}|a_{ij}|},\;i=1,2,\cdots,n$,
		we can compute Eq.~\eqref{eq:nrm_def} in the following way:
		\begin{equation*}
			\begin{aligned}
				\left\| \bm{I-\Lambda A} \right\|_{\infty}
				\nonumber
				& = \underset{i}{\text{max}}\left(\mu_ia_{ii}-1+\sum_{j=1,\;j\ne i}^{n}\mu_i|a_{ij}|\right) \\
				\nonumber
				& = \underset{i}{\text{max}}\left(\mu_i\sum_{j=1}^{n}|a_{ij}|-1\right)                      \\
				& <1.
				\label{eq:lemma_case3}
			\end{aligned}
		\end{equation*}
	\end{enumerate}
	When $0<\mu_i<\frac{2}{\sum_{j=1}^{n}|a_{ij}|},\;i=1,2,\cdots,n$,
	    we obtain $0<\left\| \bm{I-\Lambda A} \right\|_{\infty}<1$ by
      combining the results of case (i)-(iii).
\end{proof}

\begin{theorem}
	When $\bm{A}$ is a strict diagonally dominant matrix,
	fairing-PIA~\pref{eq:kst_newctrl_mtx} is convergent.
	\label{thrm:diff_weight_convergent}
\end{theorem}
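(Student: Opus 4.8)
The plan is to read the matrix recurrence \pref{eq:kst_newctrl_mtx} as an affine stationary iteration
\begin{equation*}
	\bm{P}^{[k+1]} = \bm{M}\bm{P}^{[k]} + \bm{c}, \qquad \bm{M} := \bm{I}-\bm{\Lambda A}, \quad \bm{c} := \bm{\Lambda}\left(\bm{I}-\bm{\Omega}\right)\bm{N}^{T}\bm{Q},
\end{equation*}
and to prove convergence by exhibiting $\bm{M}$ as a contraction in the $\infty$-norm, after which the conclusion is just the Banach fixed-point theorem. Since left multiplication by $\bm{M}$ acts identically on each ambient coordinate of the control points, it is enough to treat $\bm{P}^{[k]}$ componentwise (or, equivalently, as a real $n\times d$ matrix on which $\bm{M}$ acts column by column), so that a scalar bound on $\left\|\bm{M}\right\|_{\infty}$ controls the whole vector-valued iteration.

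First I would check that Lemma \ref{lemma:norm} is applicable. Strict diagonal dominance of $\bm{A}$ gives $|a_{ii}| > \sum_{j\neq i}|a_{ij}| \ge 0$, hence $a_{ii}\neq 0$; combined with the standing assumption that $\bm{A}$ has nonnegative diagonal this yields $a_{ii}>0$, and therefore $\sum_{j=1}^{n}|a_{ij}|>0$ for every $i$, so the admissible interval $0<\mu_i<2/\sum_{j=1}^{n}|a_{ij}|$ is nonempty. Taking the normalization weights $\mu_i$ in this interval, Lemma \ref{lemma:norm} gives $q := \left\|\bm{I}-\bm{\Lambda A}\right\|_{\infty} < 1$; in particular every $\mu_i>0$, so $\bm{\Lambda}$ is invertible, which I will use to identify the limit.

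It then follows that $T(\bm{x}) = \bm{M}\bm{x}+\bm{c}$ is a $q$-contraction, so it has a unique fixed point $\bm{P}^{\infty}$ and $\left\|\bm{P}^{[k]}-\bm{P}^{\infty}\right\|_{\infty} \le q^{k}\left\|\bm{P}^{[0]}-\bm{P}^{\infty}\right\|_{\infty}\to 0$, i.e.\ the control points converge geometrically for any initial choice $\{\bm{P}_j^{[0]}\}$. Multiplying the resulting fixed-point identity $\bm{\Lambda A}\bm{P}^{\infty} = \bm{\Lambda}(\bm{I}-\bm{\Omega})\bm{N}^{T}\bm{Q}$ by $\bm{\Lambda}^{-1}$ shows $\bm{A}\bm{P}^{\infty} = (\bm{I}-\bm{\Omega})\bm{N}^{T}\bm{Q}$, so the limit curve $\bm{P}^{\infty}(t)=\sum_{j=1}^{n}N_j(t)\bm{P}_j^{\infty}$ is the B-spline curve whose control points solve that linear system, and $\bm{P}^{[k]}(t)\to\bm{P}^{\infty}(t)$ for every $t\in[t_1,t_m]$ by continuity of $\sum_j N_j(t)\bm{P}_j^{[k]}$ in the control points. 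I do not anticipate a genuine obstacle here: the analytic heart of the matter --- bounding the iteration matrix --- is already packaged in Lemma \ref{lemma:norm}, and what remains is the contraction principle together with the two bookkeeping points above, namely nonemptiness of the weight range and the coordinatewise decoupling of the iteration.
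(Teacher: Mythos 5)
Your proof is correct and takes essentially the same route as the paper: both rest entirely on Lemma \ref{lemma:norm} to obtain $\left\|\bm{I}-\bm{\Lambda A}\right\|_{\infty}<1$ and then conclude convergence of the stationary iteration to the solution of $\bm{AP}^{[\infty]}=\left(\bm{I}-\bm{\Omega}\right)\bm{N}^{T}\bm{Q}$. The paper unrolls the recursion and sums the resulting Neumann series where you invoke the Banach fixed-point theorem, but for a linear stationary iteration these are the same mechanism, and your added bookkeeping (nonemptiness of the $\mu_i$ range, invertibility of $\bm{\Lambda}$, coordinatewise decoupling) is sound.
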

\begin{proof}
	We obtain the iterative method from Eq. \eqref{eq:kst_newctrl_mtx} as follows:
	\begin{equation}
		\begin{aligned}
			\bm{P}^{[k+1]} & = \bm{P}^{[k]} + \bm{\Lambda}\left[ \left(\bm{I}-\bm{\Omega}\right)\bm{N}^T\left(\bm{Q}-\bm{NP}^{[k]}\right)-
			\bm{\Omega}\bm{D}_{r}\bm{P}^{[k]}\right]                                                                                                                                                                                 \\
			& = \left(\bm{I}-\bm{\Lambda A}\right)\bm{P}^{[k]} +
			\bm{\Lambda}\left(\bm{I}-\bm{\Omega}\right)\bm{N}^T\bm{Q}                                                                                                                                                             \\
			& = \left(\bm{I-\Lambda A}\right)^{2}\bm{P}^{[k-1]}+\bm{\Lambda}\left(\bm{I}-\bm{\Omega}\right)\bm{N}^T\bm{Q} + \left(\bm{I-\Lambda A}\right)\bm{\Lambda}\left(\bm{I}-\bm{\Omega}\right)\bm{N}^T\bm{Q} \\
			& = \cdots                                                                                                                                                                                             \\
			& = \left(\bm{I-\Lambda A}\right)^{k+1}\bm{P}^{[0]} + \sum_{l=0}^k\left(\bm{I-\Lambda A}\right)^l\bm{\Lambda}\left(\bm{I}-\bm{\Omega}\right)\bm{N}^T\bm{Q}.
		\end{aligned}
		\label{eq:kst_ctrk_iter}
	\end{equation}
	
	Based on Lemma \ref{lemma:norm}, the spectral radius of $\bm{I}-\bm{\Lambda A}$ satisfies
	\begin{equation*}
		0<\rho\left(\bm{I}-\bm{\Lambda A}\right)<\left\|\bm{I}-\bm{\Lambda A}\right\|_{\infty}<1,
	\end{equation*}
	and we obtain
	\begin{equation*}
		\lim_{k \to \infty}\left(\bm{I-\Lambda A}\right)^k = 0 ,\;
		\sum_{l=0}^{\infty}\left(\bm{I-\Lambda A}\right)^l = \left(\bm{\Lambda A}\right)^{-1}.
	\end{equation*}
	Hence, when $k\to\infty$, Eq. \eqref{eq:kst_ctrk_iter} tends to
	\begin{equation*}
		\begin{aligned}
			\bm{P}^{[\infty]} & = \left(\bm{\Lambda A}\right)^{-1}\bm{\Lambda}\left(\bm{I}-\bm{\Omega}\right)\bm{N}^T\bm{Q} \\
			& = \bm{A}^{-1}\left(\bm{I}-\bm{\Omega}\right)\bm{N}^T\bm{Q},
		\end{aligned}
	\end{equation*}
	which is the solution of the linear system as follows,
	\begin{equation*}
		\bm{AP}^{[\infty]}=\left(\bm{I}-\bm{\Omega}\right)\bm{N}^T\bm{Q}.
		\label{eq:energy_sys_mtx}
	\end{equation*}
\end{proof}

\begin{remark}
	In practice, we select the diagonal elements of matrix $\bm{\Lambda}$ as
	$\mu_{i}=\frac{1}{\sum_{j=1}^{n}|a_{ij}|},\;i=1,2,\cdots,n$.
	\label{rmk1:diff_weight_mu}
\end{remark}

In a special case of
   $\bm{\Omega} = \omega\bm{I}$,
   i.e., all smoothing weights
   $\omega_j, j=1,2,\cdots,n$~\pref{eq:kst_newctrl} are equal to $\omega$,
   the fairing-PIA \eqref{eq:kst_newctrl_mtx} degenerates to
\begin{align}
	\nonumber
	\bm{P}^{[k+1]} & = \bm{P}^{[k]} + \bm{\Lambda}\left[ \left(1-\omega\right)\bm{N}^T\left(\bm{Q}-\bm{NP}^{[k]}\right)-
	\omega\bm{D}_{r}P^{[k]}\right]                                                                                     \\
	& = \left(\bm{I}-\bm{\Lambda B}\right)\bm{P}^{[k]}+\left(1-\omega\right)\bm{\Lambda R}^T\bm{Q},
	\label{eq:equal_weight_ctrlpts}
\end{align}
   where
\begin{equation} \label{eq:mtx_B}
   \bm{B} = \left(1-\omega\right)\bm{N}^T\bm{N}+\omega \bm{{D}}_{r}.
\end{equation}
The matrices $\bm{N}^T\bm{N}$ and $\bm{{D}}_{r}$ are real symmetric.
The matrix $\bm{B}$ is also a real symmetric matrix.
In the following,
    we will prove that the special case \eqref{eq:equal_weight_ctrlpts} of fairing-PIA is equivalent to the traditional energy minimization fairing model.

In the traditional energy minimization fairing method,
    the goal is to find the minimum of the \emph{energy functional},
    which is defined as follows:
\begin{equation}
	E = \frac{\left(1-\omega\right)}{2}f_1 + \frac{\omega}{2}f_2,
	\label{eq:total_energy}
\end{equation}
where $f_1$ and $f_2$ represent the fitting term and the fairing term, respectively, defined as
\begin{align}
	f_1 & = \sum_{i=1}^m\left\|\bm{P}(t_i)-\bm{Q}_i\right\|^2, \label{eq:obj_fit} \\
  f_2 & = \int_{t_1}^{t_m} \|\bm{P}^{(r)}(t)\|^2dt,\quad r=1,2,3,
	\label{eq:obj_eng}
\end{align}
where $\bm{P}^{(r)}(t)$ represents the $r$-th derivative of $\bm{P}(t)$,
    and $\omega$ is the smoothing weight.
The larger the smoothing weight $\omega$ is,
    the smoother the generated curve is.

\begin{remark}
	When $r = 1, 2, 3$, 
	    the fairing term $f_2$ \eqref{eq:obj_eng} corresponds to three types of energy:
      stretch energy, strain energy, and jerk energy \cite[]{veltkamp1995modeling,zhang2001fairing,meier1987interpolating}.
	\label{rmk2:equal_weight_energymeaning}
\end{remark}

The substitution of Eq.~\eqref{eq:obj_eng} into Eq.~\eqref{eq:total_energy}
    makes the optimization problem become
\begin{equation} \label{eq:energy_min_model}
	\min_{\bm{P}_j}E=\min_{\bm{P}_j}
	\left[
	\frac{\left(1-\omega\right)}{2}\sum_{i=1}^m\left\|\bm{P}(t_i)-\bm{Q}_i\right\|^2
	+ \frac{\omega}{2}\int_{t_1}^{t_m} \|\bm{P}^{(r)}(t)\|^2dt
	\right].
\end{equation}
We obtain the following by $\frac{\partial E}{\partial \bm{P}_j} =0$, with $j=1,2,\cdots,n$: 
\begin{equation}
	\begin{aligned}
		\frac{\partial E}{\partial \bm{P}_j}
		& =\left(1-\omega\right)\sum_{i=1}^mN_j(t_i)\left(\sum_{l=1}^nN_l(t_i)\bm{P}_l-\bm{Q}_i\right)\\
		& +\omega\int_{t_1}^{t_m}\left(\sum_{l=1}^nN_{r,l}(t)N_{r,j}(t)\bm{P}_l\right)dt
		=0,
	\end{aligned}
	\label{eq:energ_partial}
\end{equation}
where $\omega$ is a smoothing parameter.

Rearranging Eq.~\eqref{eq:energ_partial} yields
\begin{equation}
	\begin{aligned}
		\left(1-\omega\right)\sum_{i=1}^mN_j(t_i)\bm{Q}_i
		& =\left(1-\omega\right)\sum_{i=1}^mN_j(t_i)\left(\sum_{l=1}^nN_l(t_i)\bm{P}_l\right)\\
		& +\omega\int_{t_1}^{t_m}\left(\sum_{l=1}^nN_{r,l}(t)N_{r,j}(t)\bm{P}_l\right)dt.
	\end{aligned}
	\label{eq:simp_energ_partial}
\end{equation}
Then, Eq.~\eqref{eq:simp_energ_partial} can be represented by a matrix as
\begin{equation}
	\bm{BP}=\left(1-\omega\right)\bm{N}^T\bm{Q},
	\label{eq:equal_weight_classical}
\end{equation}
    where $\bm{B}=\left(1-\omega\right)\bm{N}^T\bm{N}+\omega \bm{D}_{r}$~\eqref{eq:mtx_B}.
A fairing curve that minimizes the energy model~\pref{eq:total_energy} can be obtained 
    after the equation is solved and the obtained control points are substituted into the curve equation.

\begin{theorem}
	Let the diagonal elements of matrix $\bm{\Lambda}$ satisfy $0<\mu_i<\frac{2}{\lambda_{\text{max}}\left(\bm{B}\right)},\;i=1,2,\cdots,n$,
      where $\lambda_{\text{max}}$ is the largest eigenvalue of matrix $\bm{B}$~\pref{eq:mtx_B}.
	If $\bm{B}$ is a positive definite matrix,
	    then the special case of the fairing-PIA \eqref{eq:equal_weight_ctrlpts} converges to
	    the solution of the traditional energy minimization fairing model~\pref{eq:energy_min_model}.
	\label{thrm:equal_weight__definite_convergent}
\end{theorem}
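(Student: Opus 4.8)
The plan is to follow the template of the proof of Theorem~\ref{thrm:diff_weight_convergent}, reducing the whole statement to a spectral-radius estimate for the iteration matrix $\bm{I}-\bm{\Lambda B}$ of~\eqref{eq:equal_weight_ctrlpts}. Unrolling the recursion exactly as there gives $\bm{P}^{[k+1]} = (\bm{I}-\bm{\Lambda B})^{k+1}\bm{P}^{[0]} + \sum_{l=0}^{k}(\bm{I}-\bm{\Lambda B})^{l}(1-\omega)\bm{\Lambda}\bm{N}^{T}\bm{Q}$. If $\rho(\bm{I}-\bm{\Lambda B})<1$, then $(\bm{I}-\bm{\Lambda B})^{k+1}\to\bm{0}$ and $\sum_{l=0}^{\infty}(\bm{I}-\bm{\Lambda B})^{l}=(\bm{\Lambda B})^{-1}$, so $\bm{P}^{[k]}\to(\bm{\Lambda B})^{-1}(1-\omega)\bm{\Lambda}\bm{N}^{T}\bm{Q}=\bm{B}^{-1}(1-\omega)\bm{N}^{T}\bm{Q}$ (using that $\bm{\Lambda}$ is invertible since every $\mu_i>0$). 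This limit is precisely the solution of the normal equations~\eqref{eq:equal_weight_classical}; and since $\bm{B}$~\eqref{eq:mtx_B} is the (scaled) Hessian of the quadratic energy in~\eqref{eq:energy_min_model} and is assumed positive definite, $E$ is strictly convex, so this solution is the unique minimizer of the energy-minimization model, which is the desired conclusion.

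Everything therefore comes down to showing $\rho(\bm{I}-\bm{\Lambda B})<1$ under the hypothesis $0<\mu_i<2/\lambda_{\max}(\bm{B})$. The obstacle is that $\bm{\Lambda B}$ is a product of a diagonal matrix with a symmetric one and so is in general \emph{not} symmetric, while the stated bound involves $\lambda_{\max}(\bm{B})$ rather than the spectrum of $\bm{\Lambda B}$ itself. I would resolve this by symmetrization: because $\bm{\Lambda}=\mathrm{diag}(\mu_1,\dots,\mu_n)$ has strictly positive entries, $\bm{\Lambda}^{1/2}$ is well defined and invertible, and the similarity $\bm{\Lambda}^{1/2}(\bm{I}-\bm{\Lambda B})\bm{\Lambda}^{-1/2}=\bm{I}-\bm{\Lambda}^{1/2}\bm{B}\bm{\Lambda}^{1/2}$ produces a real symmetric matrix. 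Writing $\widetilde{\bm{B}}:=\bm{\Lambda}^{1/2}\bm{B}\bm{\Lambda}^{1/2}$, the matrix $\bm{I}-\bm{\Lambda B}$ thus has real spectrum, equal to that of $\bm{I}-\widetilde{\bm{B}}$.

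It then remains to locate the eigenvalues of $\widetilde{\bm{B}}$. Since $\widetilde{\bm{B}}$ is congruent to $\bm{B}$, it is positive definite, so all its eigenvalues are positive; and by submultiplicativity of the spectral norm, $\lambda_{\max}(\widetilde{\bm{B}})=\|\bm{\Lambda}^{1/2}\bm{B}\bm{\Lambda}^{1/2}\|_{2}\le\|\bm{\Lambda}^{1/2}\|_{2}^{2}\,\|\bm{B}\|_{2}=(\max_i\mu_i)\,\lambda_{\max}(\bm{B})<2$ by hypothesis (equivalently, estimate the Rayleigh quotient $x^{T}\widetilde{\bm{B}}x/x^{T}x$ via the substitution $y=\bm{\Lambda}^{1/2}x$). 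Hence every eigenvalue $\theta$ of $\widetilde{\bm{B}}$ lies in $(0,2)$, so each eigenvalue $1-\theta$ of $\bm{I}-\widetilde{\bm{B}}$ satisfies $|1-\theta|<1$, giving $\rho(\bm{I}-\bm{\Lambda B})=\rho(\bm{I}-\widetilde{\bm{B}})<1$ and completing the proof. The only genuinely delicate step is the symmetrization together with the norm estimate tying $\lambda_{\max}(\widetilde{\bm{B}})$ back to $(\max_i\mu_i)\lambda_{\max}(\bm{B})$; the remaining bookkeeping duplicates that of Theorem~\ref{thrm:diff_weight_convergent}.
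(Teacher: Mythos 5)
Your proposal is correct and follows essentially the same route as the paper: unroll the recursion, show the spectrum of $\bm{\Lambda B}$ is real and lies in $(0,2)$, conclude $\rho(\bm{I}-\bm{\Lambda B})<1$, and identify the limit $(1-\omega)\bm{B}^{-1}\bm{N}^T\bm{Q}$ with the solution of the normal equations \eqref{eq:equal_weight_classical}. The only difference is cosmetic: you symmetrize via the similarity $\bm{\Lambda}^{1/2}(\bm{\Lambda B})\bm{\Lambda}^{-1/2}=\bm{\Lambda}^{1/2}\bm{B}\bm{\Lambda}^{1/2}$ and invoke congruence, whereas the paper factors $\bm{B}=\bm{C}^T\bm{C}$ and passes to the positive definite matrix $\bm{C\Lambda C}^T$; the decisive upper bound $\|\bm{\Lambda}\|_2\|\bm{B}\|_2=(\max_i\mu_i)\,\lambda_{\max}(\bm{B})<2$ is the same in both.
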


\begin{proof}
	According to Eq. \eqref{eq:equal_weight_ctrlpts}, we have
	\begin{equation}
		\begin{aligned}
			\bm{P}^{[k+1]} & = \left(\bm{I}-\bm{\Lambda B}\right)\bm{P}^{[k]}+\left(1-\omega\right)\bm{\Lambda R}^T\bm{Q}                                                          \\
			& = \left(\bm{I}-\bm{\Lambda B}\right)^{k+1}\bm{P}^{[0]} + \left(1-\omega\right)\sum_{l=0}^k\left(\bm{I}-\bm{\Lambda B}\right)^l\bm{\Lambda R}^T\bm{Q},
		\end{aligned}
		\label{eq:equal_weight_kst_ctrk_iter}
	\end{equation}
	where $\bm{B}$ is a positive definite matrix.
	
	A real orthogonal matrix $\bm{U}$ and a positive diagonal matrix $\bm{S}$ exist,
	    such that $\bm{B} = \bm{USU}^T=\bm{US}^{\frac{1}{2}}\bm{S}^{\frac{1}{2}}\bm{U}^T$.
	We obtain $\bm{B} = \bm{C}^T\bm{C}$ by denoting $\bm{C}=\bm{S}^{\frac{1}{2}}\bm{U}^T$.
	Suppose that $\beta$ is an arbitrary eigenvalue of the matrix
    $\bm{\Lambda B} = \bm{\Lambda C}^T\bm{C}$ w.r.t. eigenvector $\bm{v}$, i.e.,
	\begin{equation}
		\bm{\Lambda B} \bm{v} = \bm{\Lambda C}^T\bm{Cv} = \beta\bm{v}.
		\label{eq:equal_weight_symm}
	\end{equation}
	In this case, 
      we obtain the following by multiplying both sides of Eq. \eqref{eq:equal_weight_symm} by $\bm{C}$:
	\begin{equation*}
		\bm{C\Lambda C}^T\left(\bm{Cv}\right) = \beta\left(\bm{Cv}\right),
	\end{equation*}
	    which means that $\beta$ is also an eigenvalue of the matrix $\bm{C\Lambda C}^T$ w.r.t. eigenvector $\bm{Cv}$.
	Moreover, for all nonzero vectors $\bm{x}\in R^{n}$, it holds
	\begin{equation*}
		\bm{x}^T\bm{C\Lambda C}^T\bm{x} =
		\left(\bm{x}^T\bm{C\Lambda}^{\frac{1}{2}}\right)\left(\bm{x}^T\bm{C\Lambda}^{\frac{1}{2}}\right)^T> 0.
	\end{equation*}
	Therefore, we conclude that the matrix $\bm{C\Lambda C}^T$ is a positive definite matrix,
 	    and all its eigenvalues are positive, i.e., $\beta>0$.
  Thus, the eigenvalues of matrix
      $\bm{\Lambda B}=\bm{\Lambda C}^T\bm{C}$
      are all positive real numbers.
	
	On the other hand, given that $0<\mu_i<\frac{2}{\lambda_{\text{max}}\left(\bm{B}\right)},\;i=1,2,\cdots,n$,
	    the eigenvalues of $\bm{\Lambda B}$ satisfy
	\begin{equation*}
		0 < \lambda\left(\bm{\Lambda B}\right)
		< \left\| \bm{\Lambda B}\right\|_2 
        \leq \left\| \bm{\Lambda}\right\|_2 \left\| \bm{B}\right\|_2
        = \lambda_{\text{max}}\left(\bm{\Lambda}\right)\lambda_{\text{max}}\left(\bm{B}\right)
		< 2.
	\end{equation*}
	Thus, the eigenvalues of $\left(\bm{I}-\bm{\Lambda B}\right)$ fulfill the condition
	$-1<\lambda\left(\bm{I}-\bm{\Lambda B}\right)<1$, leading to
	\begin{equation}
		\rho\left(\bm{I}-\bm{\Lambda B}\right)<1.
		\label{eq:equal_weight_spec_radius}
	\end{equation}
	We obtain the following from Eq. \eqref{eq:equal_weight_spec_radius}: 
	\begin{equation*}
		\lim_{k \to \infty}\left(\bm{I}-\bm{\Lambda B}\right)^k = 0 ,\;
		\sum_{l=0}^{\infty}\left(\bm{I}-\bm{\Lambda B}\right)^l = \left(\bm{\Lambda B}\right)^{-1}.
	\end{equation*}
	Hence, when $k\rightarrow\infty$, Eq. \eqref{eq:equal_weight_kst_ctrk_iter} tends to
	\begin{equation}
		\begin{aligned}
			\bm{P}^{[\infty]} & = \left(1-\omega\right)\left(\bm{\Lambda}\bm{B}\right)^{-1}\bm{\Lambda}\bm{N}^T\bm{Q} \\
			& = \left(1-\omega\right)\bm{B}^{-1}\bm{N}^T\bm{Q}.
		\end{aligned}
		\label{eq:equal_weight_limit}
	\end{equation}
	
	The combination of Eqs.~\eqref{eq:equal_weight_limit} and~\eqref{eq:equal_weight_classical}
      indicate that the fairing-PIA~\eqref{eq:equal_weight_ctrlpts} is convergent to
	    the solution of the traditional energy minimization fairing model~\pref{eq:energy_min_model}.
\end{proof}

 We deal with the case that matrix $\bm{B}$ is positive definite in the above Theorem~\ref{thrm:equal_weight__definite_convergent}.
 When matrix $B$ is positive semidefinite,
    the convergence proof is complicated and deferred to Appendix for clarity of presentation.

\begin{remark}
	For the case when matrix $\bm{\Omega}$~\eqref{eq:kst_newctrl_mtx} degrades to $\omega \bm{I}$~\eqref{eq:equal_weight_ctrlpts},
	    calculating the largest eigenvalue of the coefficient matrix $\bm{B}$ is time-consuming.
	Therefore, we choose $\mu_i=\mu=\frac{2}{\left\|\bm{B}\right\|_{\infty}},\;i=1,2,\cdots,n$,
	    which satisfies the condition in Theorem~\ref{thrm:equal_weight__definite_convergent}.
	\label{rmk3:equal_weight_mu}
\end{remark}

\section{Progressive-iterative approximate for surface fairing}
\label{section:surface_fairing}
The fairing-PIA method for curve fairing can be easily extended to the surface case.
In the following, we present the details of fairing-PIA for surface fairing.

Given an ordered data set $\{\bm{Q}_{ij}\}_{i=1,j=1}^{m_1,m_2}$,
    with parameters $\{s_i,t_j\}_{i=1,j=1}^{m_1,m_2}$ 
    satisfying $s_1\le s_2\le\cdots\le s_{m_1}$ and $t_1\le t_2\le\cdots\le t_{m_2}$,
    we rearrange the data points and their parameters in lexicographical order,
    i.e., $\{\bm{Q}_i\}_{i=1}^m$ with $\{(s_i,t_i)\}_{i=1}^{m}$,
    where $m = m_1m_2$.
The initial B-spline surface $\bm{P}^{[0]}\left(s,t\right)$ can be constructed as follows:
\begin{equation*}
\bm{P}^{[0]}\left(s,t\right) = \sum_{h=1}^{n_1}\sum_{l=1}^{n_2}N_h(s)N_l(t)\bm{P}_{hl}^{[0]},\quad 0\le s,t\le1,
\end{equation*}
where  $\{\bm{P}_{hl}\}_{h=1,l=1}^{n_1,n_2}$ are the initial control points,
   and $N_h(s)$ and $N_l(t)$ are the B-spline basis.
Similarly, the basis functions and control points are arranged in lexicographical order as follows:
\begin{equation*}
	\begin{aligned}
		\left [ N_1(s,t),N_2(s,t),\cdots,N_{n_1n_2}(s,t) \right ]
		& =\left[N_1(s)N_1(t),N_1(s)N_2(t),\cdots,N_{n_1}(s)N_{n_2}(t)\right], \\
		\left [ \bm{P}_1,\bm{P}_2,\cdots,\bm{P}_{n_1n_2} \right ]
		& =\left[\bm{P}_{11},\bm{P}_{12},\cdots,\bm{P}_{n_1,n_2}\right].
	\end{aligned}
\end{equation*}
Then, the initial surface can be rewritten as
\begin{equation}
	\bm{P}^{[0]}\left(s,t\right) = \sum_{j=1}^{n_1n_2}N_j(s,t)\bm{P}_{j}^{[0]}.
	\label{eq:init_surf}
\end{equation}

We define the fairing functionals analogous to Eq.~\eqref{eq:def_of_functional}.
Let
\begin{equation*}
	\begin{aligned}
		\mathcal{F}_{ss,j}(f) & = \int_{s_1}^{s_m}\int_{t_1}^{t_m}N_{ss,j}(s,t)fdsdt,   \\
		\mathcal{F}_{st,j}(f) & = \int_{s_1}^{s_m}\int_{t_1}^{t_m}N_{st,j}(s,t)fdsdt,\ \quad j=1,2,\cdots,n_1n_2,    \\
		\mathcal{F}_{tt,j}(f) & = \int_{s_1}^{s_m}\int_{t_1}^{t_m}N_{tt,j}(s,t)fdsdt,
	\end{aligned}
\end{equation*}
where $N_{ss,j}(s,t)$, $N_{st,j}(s,t)$ and $N_{tt,j}(s,t)$ represent the second-order partial derivatives of $N_j(s,t)$ with respect to $s$ and $t$.
Here, we take the fairing energy functional defined using
    the second-order derivatives as examples.
The definition of the fairing functional can be easily extended to other cases~\cite{wang1997energy}.

Similar to the curve case,
    the $k$-th surface $\bm{P}^{[k]}(s,t)$ is assumed to be generated after the $k$-th iteration.
In this case, 
    we first calculate the $(k+1)$-st difference vectors for data points as follows
    to construct the $(k+1)$-st surface $\bm{P}^{[k+1]}(s,t)$:
\begin{equation*}
	\bm{d}_i^{[k]} = \bm{Q}_i-\bm{P}^{[k]}(s_i,t_i),\quad i=1,2,\cdots,m.
\end{equation*}
Then, the fitting vectors for control points and the fairing vectors for control points are calculated by
\begin{equation*}
\begin{aligned}
	\bm{\delta}_j^{[k]} &= \sum_{h\in \hat{I_j}}N_j(s_h,t_h)\bm{d}_h^{[k]},\quad j=1,2,\cdots,n_1n_2,\;\text{and},\\
	\bm{\eta}_{j}^{[k]} &=\sum_{l=1}^{n_1n_2}
	\left[\mathcal{F}_{ss,l}\left(N_{ss,j}(s,t)\right)
	+ 2\mathcal{F}_{st,l}\left(N_{st,j}(s,t)\right)
	+ \mathcal{F}_{tt,l}\left(N_{tt,j}(s,t)\right)
	\right]\bm{P}_l^{[k]},
	\label{eq:kst_fairvect_surf}
\end{aligned}
\end{equation*}
where $\hat{I_j}$ is the index set of all parameters that fall in the support region of the $j$-th basis function.
Finally, the new control points of the $(k+1)$-st surface are produced by
\begin{equation}
	\bm{P}_j^{[k+1]} = \bm{P}_j^{[k]} + \mu_j
	\left[
	\left(1-\omega_j\right)\bm{\delta}_j^{[k]} - \omega_j\bm{\eta}_{j}^{[k]}
	\right],
	\label{eq:kst_newctrl_surf}
\end{equation}
leading to the $(k+1)$-st B-spline surface:
\begin{equation*}
	\bm{P}^{[k+1]}(s,t)=\sum_{j=1}^{n_1n_2}N_j(s,t)\bm{P}_j^{[k+1]}.
	\label{eq:kst_surf}
\end{equation*}
Here, the parameters $\omega_j$ and $\mu_j$ \eqref{eq:kst_newctrl_surf} are similar to those in Section~\ref{section:curve_fairing}.

In this way, 
    we obtain a sequence of surfaces $\left\{\bm{P}^{[k]}(s,t),\;k=1,2,3,\cdots\right\}$.
The convergence analysis of the fairing-PIA for the surface
    case~\pref{eq:kst_newctrl_surf} is analogous to that of the curve case in Section~\ref{section:proof_convergence}.

\begin{figure*}[!htb]
  \centering
  \subfigure[]{
    \includegraphics[width=1.5in]{./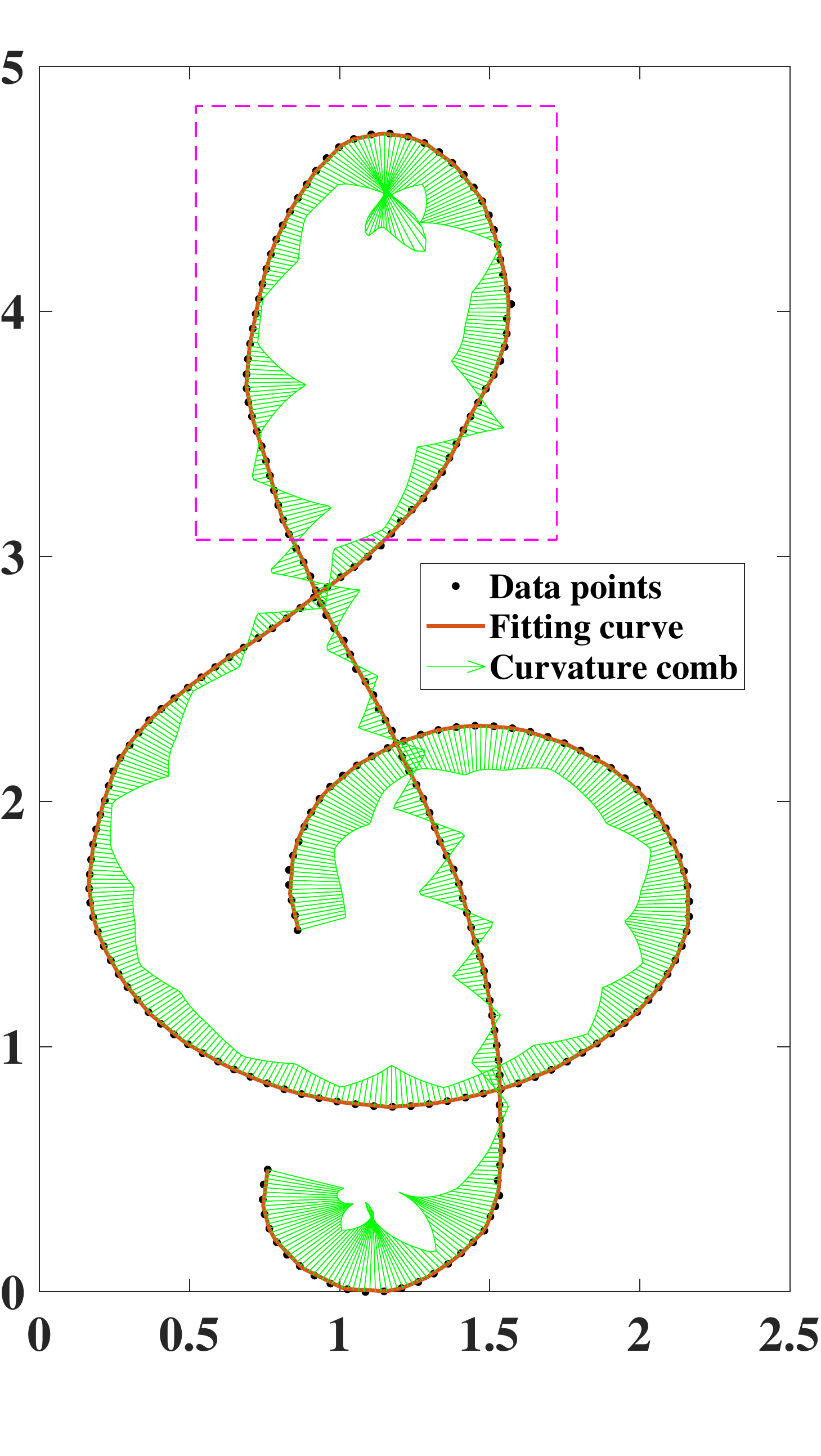}
    \label{fig:Note_fit_comb}
  }
  \subfigure[]{
    \includegraphics[width=1.5in]{./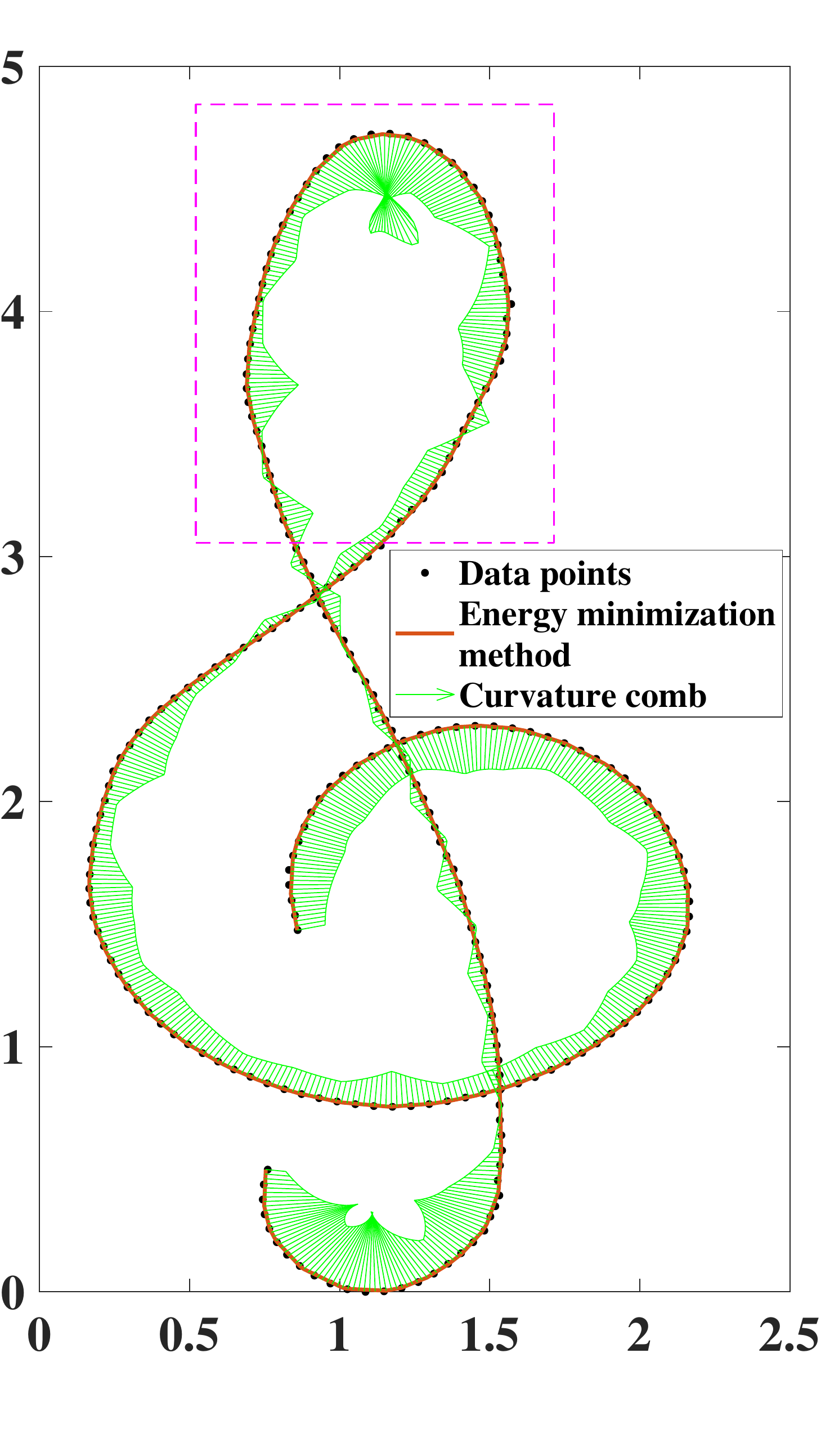}
    \label{fig:Note_energy_comb}
  }
  \subfigure[]{
    \includegraphics[width=1.5in]{./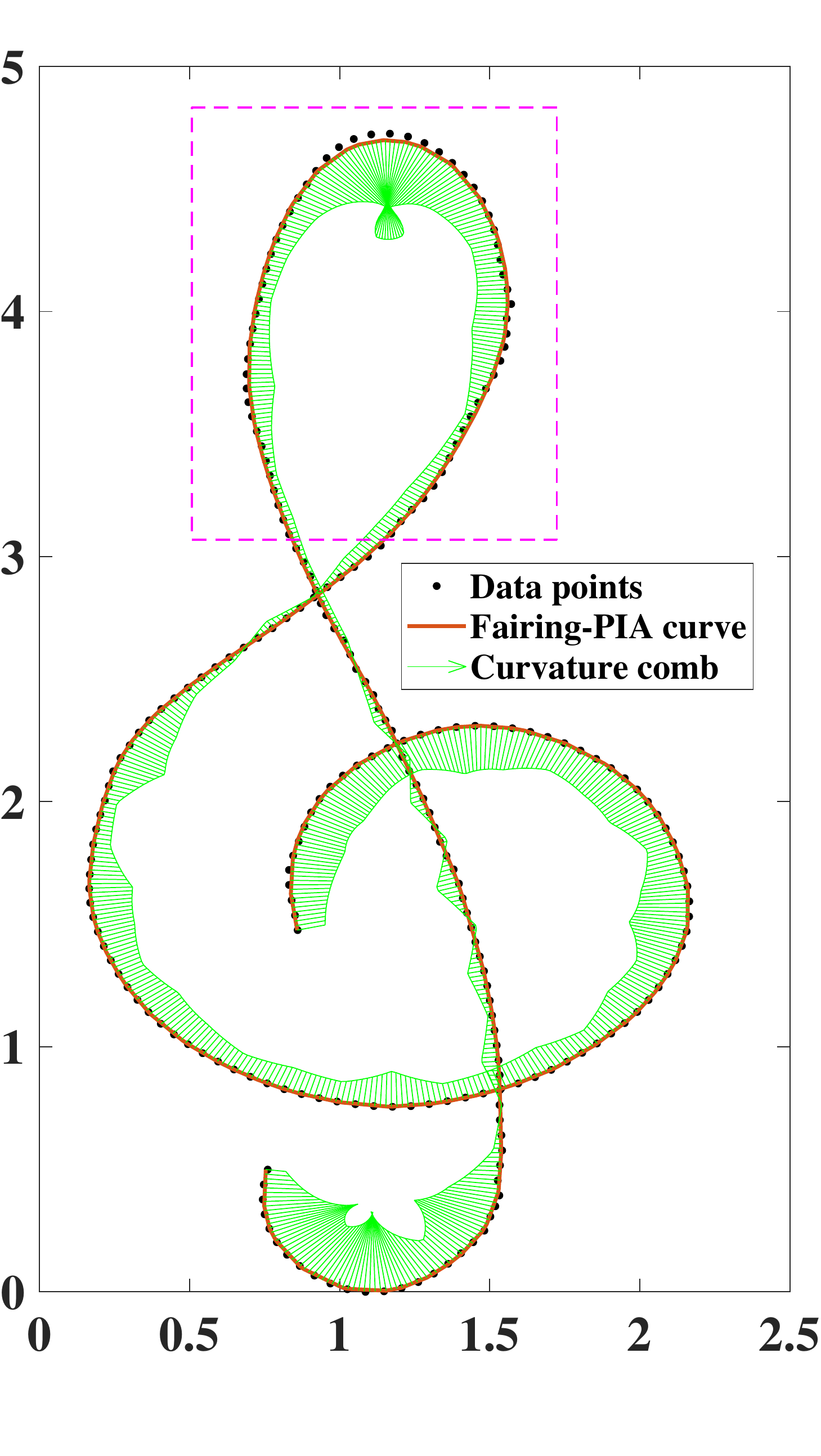}
    \label{fig:Note_FPIA_comb}
  }
  \\
  \subfigure[]{
    \includegraphics[width=1.5in]{./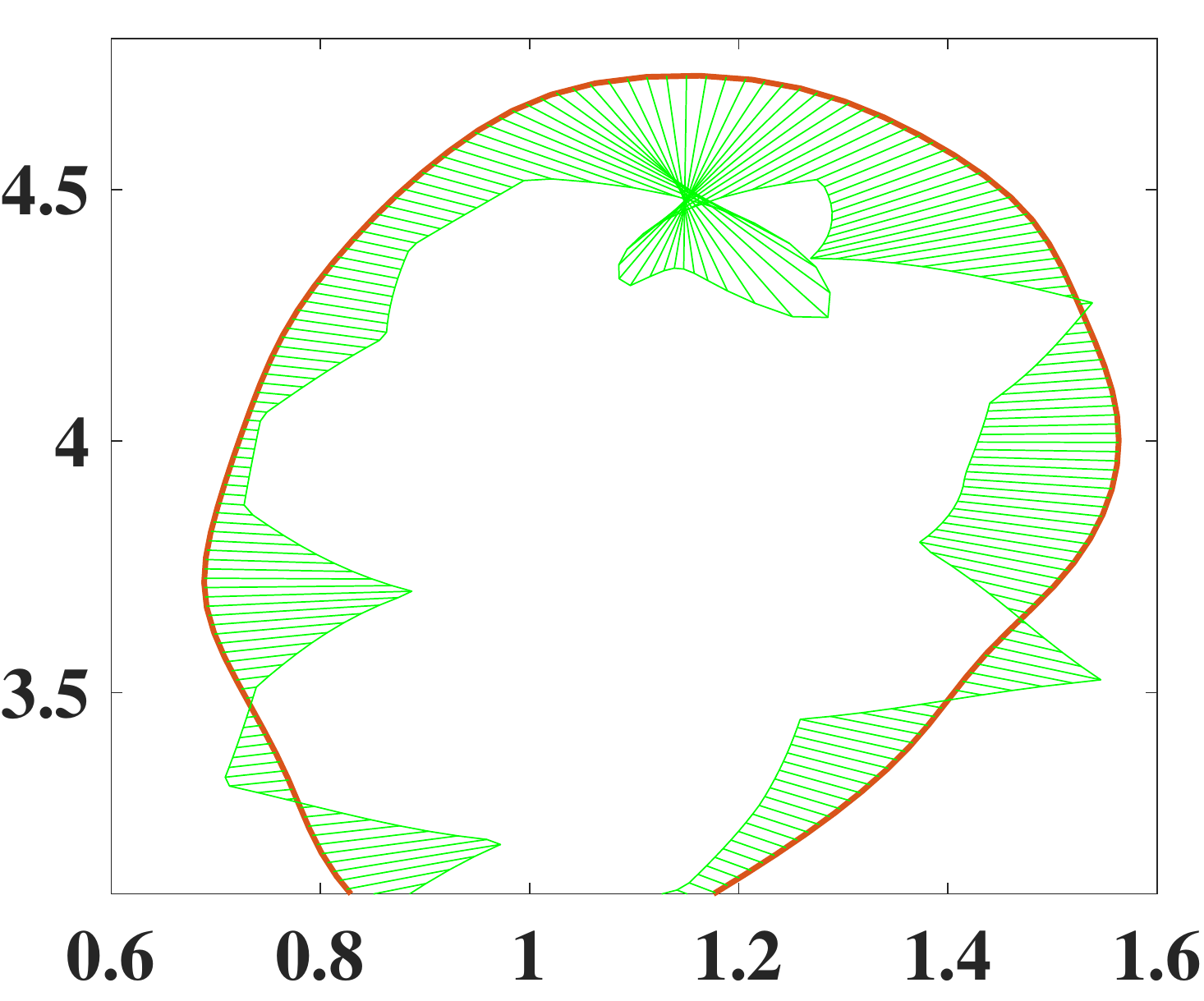}
    \label{fig:Note_fit_comb_zoom}
  }
  \subfigure[]{
    \includegraphics[width=1.5in]{./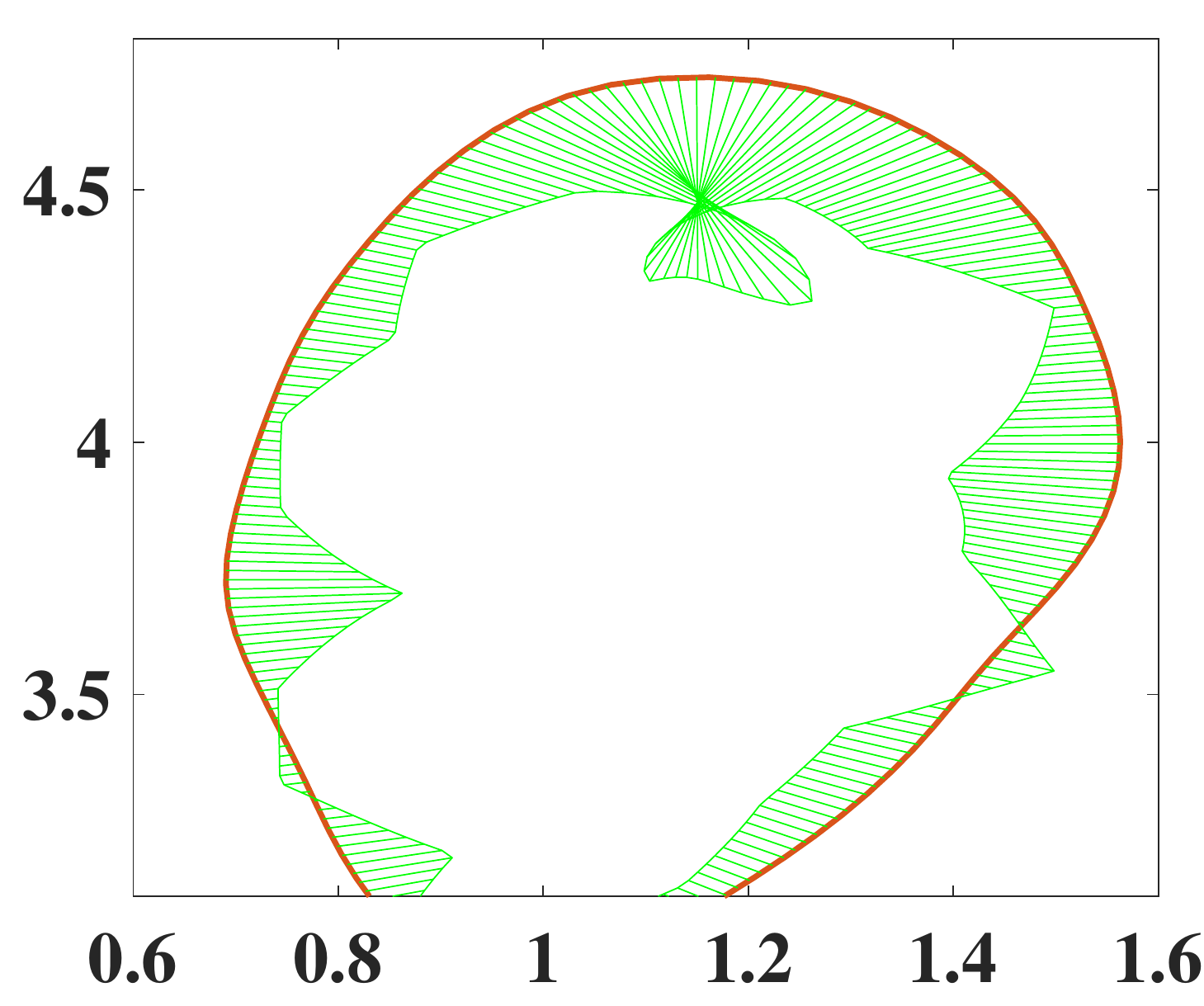}
    \label{fig:Note_energy_comb_zoom}
  }
  \subfigure[]{
    \includegraphics[width=1.5in]{./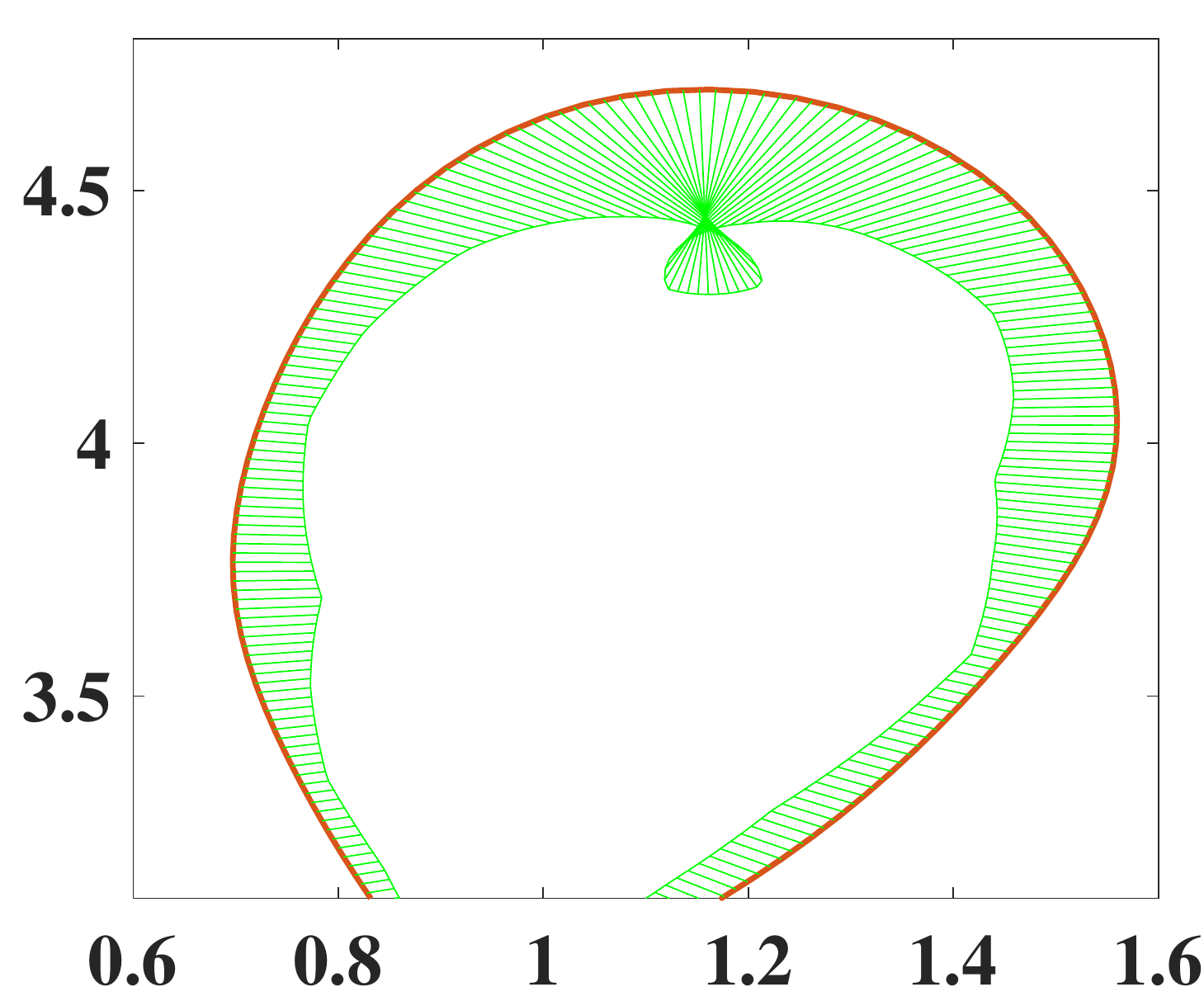}
    \label{fig:Note_FPIA_comb_zoom}
  }
  \caption
  { The \textit{Note} example.
    (a, d) The fitting curve and its curvature comb;
    (b, e) The fairing curve generated by the traditional energy minimization method and its curvature comb.
           The smoothing weight
           $\omega = 1\times 10^{-7}$~\eqref{eq:total_energy} and
           $r=2$~\eqref{eq:obj_eng};
    (c, f) The fairing curve generated by the fairing-PIA method and its
        curvature comb.
        The smoothing weights of the $38^{th}$ to $61^{st}$ control points are set to $3\times 10^{-6}$,
        and those of the other control points are set to $1\times 10^{-7}$.
    }
  \label{fig:Note_comb}
\end{figure*}

\section{Experiments and Discussions}
\label{section:results}
In this section,
    some numerical examples are presented to demonstrate the effectiveness and efficiency of the developed fairing-PIA method.
Moreover, we compare the fairing-PIA method with the traditional energy
    minimization method~\eqref{eq:total_energy},
    which is a special case of the fairing-PIA method,
    i.e., the fairing-PIA method with equal weights.

In the numerical examples,
    cubic B-spline curves are considered the fitting curves.
We choose the initial control points $\{\bm{P}_j^{[0]}\}_{j=1}^n$ from the data points, i.e.,
\begin{equation*}
  P^{[0]}_j=Q_{i_j},\ 1=i_1 < i_2 < \cdots < i_n = m.
\end{equation*}
Then, we construct a knot vector by using the following form:
\begin{equation*}
  \{0,0,0,0,\bar{t}_5,\cdots,\bar{t}_{n},1,1,1,1\},
\end{equation*}
    where $\bar{t}_i=\left(t_{i_{j-3}}+t_{i_{j-2}}+t_{i_{j-1}}\right)/3,\;j=5,\cdots,n$.
 Moreover, bi-cubic B-spline surfaces are considered the fitting surfaces,
    and the initial control points and knot vectors are formed similarly.

 On the one hand, 
    we employ the root mean square as the \emph{absolute fitting error},
    \begin{equation*}
        \bm{E}_{fit,a}^{[k]} = 
            \sqrt{\frac{\sum_{i=1}^m \left ( \bm{Q}_i - \bm{P}^{[k]}(t_i)\right )^2}{m}},
    \end{equation*}
    \newpage \noindent
    and Eq.~\pref{eq:obj_eng} as the \emph{absolute energy},
    \begin{equation*}
        \bm{E}_{eng,a} = \int_{t_1}^{t_m} \|\bm{P}^{(r)}(t)\|^2dt,\quad r=1,2,3,
    \end{equation*}
    to measure the fitting error and the energy of the curve and surface.
 On the other hand, 
    we define the \emph{relative fitting error} $E_{fit,r}^{[k]}$, 
    \emph{relative energy} $E_{eng,r}^{[k]}$, i.e.,
    \begin{equation*}
        E_{fit,r}^{[k]} = \frac{E_{fit,a}^{[k]}}{E_{fit,a}^{[0]}}, \quad
        E_{eng,r}^{[k]} = \frac{E_{eng,a}^{[k]}}{E_{eng,a}^{[0]}},
    \end{equation*}
    and the \emph{relative iteration error},
\begin{equation*}
    \bm{E}_{iter,r}^{[k]}  = \sqrt{\frac{\sum_{j=1}^n \left\|\left(\bm{I}-\bm{\Omega}\right)\bm{N}^T\bm{Q}(j,:)-\bm{AP}^{[k]}(j,:)\right\|^2}
    {\sum_{j=1}^n   \left\|\left(\bm{I}-\bm{\Omega}\right)\bm{N}^T\bm{Q}(j,:)-\bm{AP}^{[0]}(j,:)\right\|^2}},\; k=0,1,2,\cdots,
  \end{equation*}
    where matrices $\bm{\Omega}$, $\bm{N}$ and $\bm{A}$ are given in Eq.~\eqref{eq:kst_newctrl_mtx}
    to measure the iteration procedure of the fairing-PIA.
 In our implementation,
    the fairing-PIA iteration stops   
    when $\left | \bm{E}_{iter,r}^{[k+1]}-\bm{E}_{iter,r}^{[k]} \right | <10^{-6}$.
 We implement the algorithms in MATLAB and run them on a PC with an Intel Core i7-10700 2.90GHz CPU and 16 GB RAM.

\subsection{Planar curves}

In this section, 
    three examples of planar curve fairing are presented 
    to illustrate the effectiveness of the fairing-PIA algorithm. 
The three models are as follows:

\begin{itemize}[itemsep=0pt,topsep=0pt,parsep=0pt]
  \item \textit{Note}: 245 points sampled from a note-shaped curve
  \item \textit{Airfoil}: 49 points measured from a supercritical airfoil
  \item \textit{Starfish}: 100 points sampled uniformly from an analytical curve expressed by
        \begin{equation*}
          \left\{
          \begin{aligned}
            x(t) & = \left( 1+\frac{1}{5}\cos(5t) \right)\cos(t),\quad t\in [0,2\pi] \\
            y(t) & = \left( 1+\frac{1}{5}\cos(5t) \right)\sin(t).
          \end{aligned}
          \right.
        \end{equation*}
\end{itemize}
 In the three examples,
    we take $r = 2$ in the fairing functional~\pref{eq:def_of_functional},
    corresponding to the strain energy in the traditional method~\pref{eq:obj_eng}.

 In the example of the \textit{Note} model,
    we first generate the fitting curve with the curvature comb 
    (that is, the smoothing weight $\omega = 0$ is set in the energy functional~\pref{eq:total_energy},
    illustrated in Fig.~\ref{fig:Note_fit_comb}).
 Then, the fairing curve using the traditional method is produced by setting
    $\omega = 1 \times 10^{-7}$ and $r=2$ (strain energy) in the energy functional~\pref{eq:total_energy} 
    (refer to Fig.~\ref{fig:Note_energy_comb}).
 The fairness of the curve segment in the red box
    (refer to Figs.~\ref{fig:Note_energy_comb} and~\ref{fig:Note_energy_comb_zoom}) is undesirable
    and needs to be improved.
 Each control point can be assigned with an individual smoothing weight in the fairing-PIA.
 Thus, we set the smoothing weights of the $38^{th}$ to $61^{st}$ control points as $3 \times 10^{-6}$,
    thereby affecting the curve segment with undesirable fairness.
 The smoothing weights of the other control points are still set as $1 \times 10^{-7}$,
    and the fairing curve is generated using fairing-PIA.
 As illustrated in Figs.~\ref{fig:Note_FPIA_comb} and~\ref{fig:Note_FPIA_comb_zoom},
    the fairness of the curve segment in the red box is clearly improved with different smoothing weights.
 Moreover, the diagrams of iterations v.s. relative iterative error,
    iterations v.s. relative fitting error,
    and iterations v.s. relative strain energy are shown in Fig.~\ref{fig:Note_ErrandEnergyPlot}.
 The relative iterative error, 
    relative fitting error, 
    and relative strain energy are rapidly reduced at the early iterations.

\begin{figure*}[t]
  \centering
  \subfigure[]{
    \includegraphics[width=2.5in]{./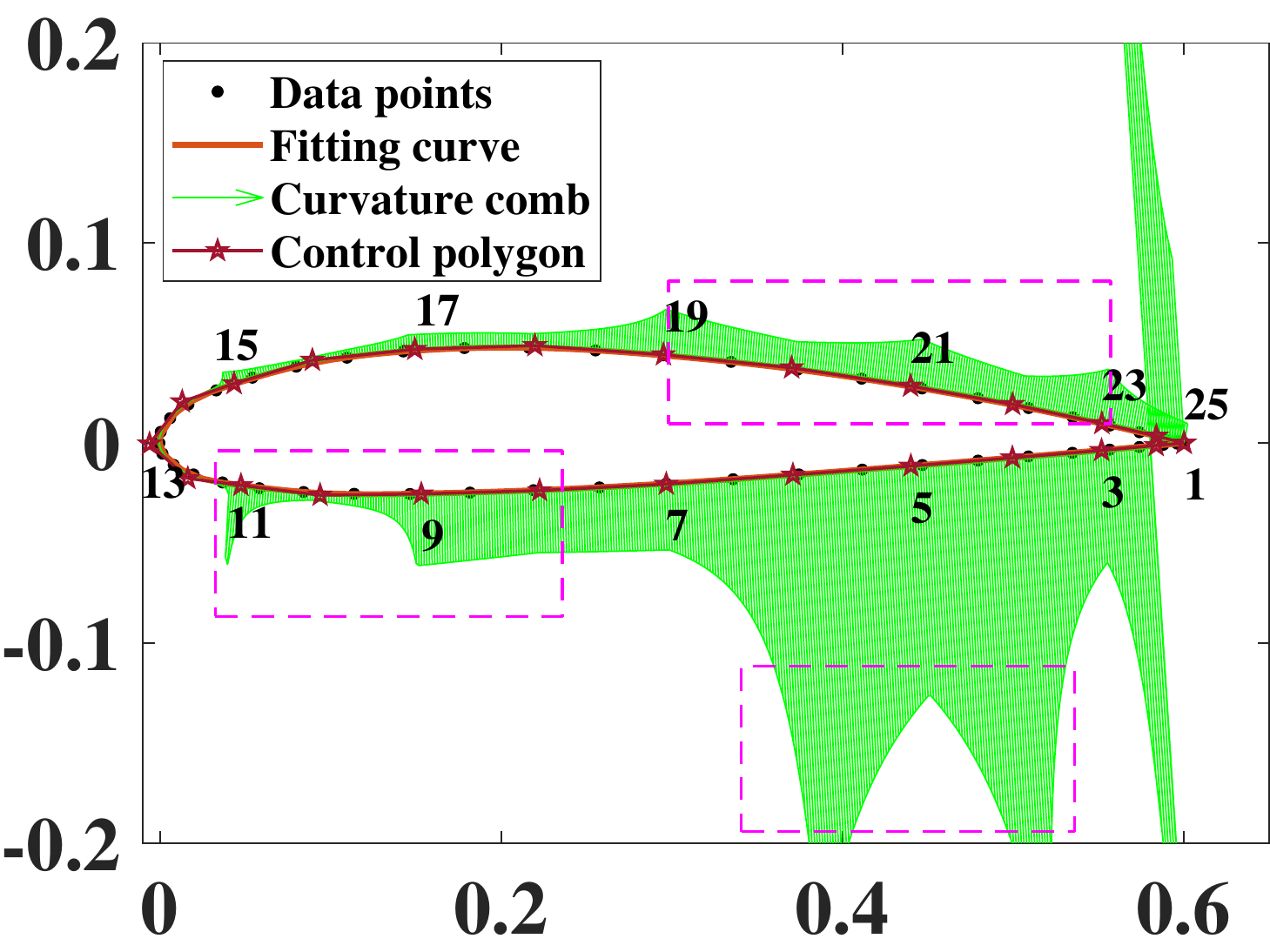}
    \label{fig:Airfoil_fit_comb}
  }
  \subfigure[]{
    \includegraphics[width=2.5in]{./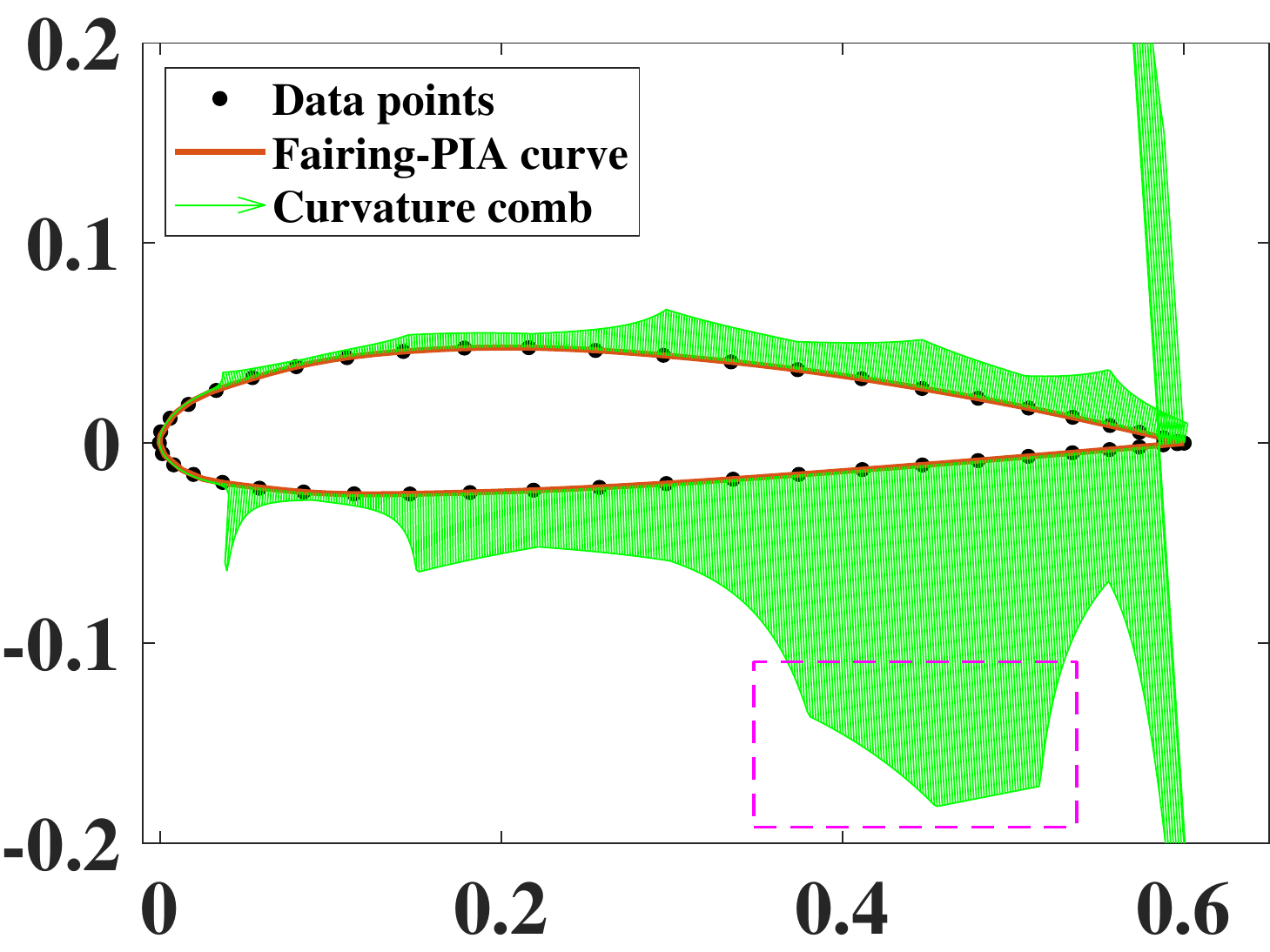}
    \label{fig:Airfoil_FPIA_comb_1}
  }

  \subfigure[]{
    \includegraphics[width=2.5in]{./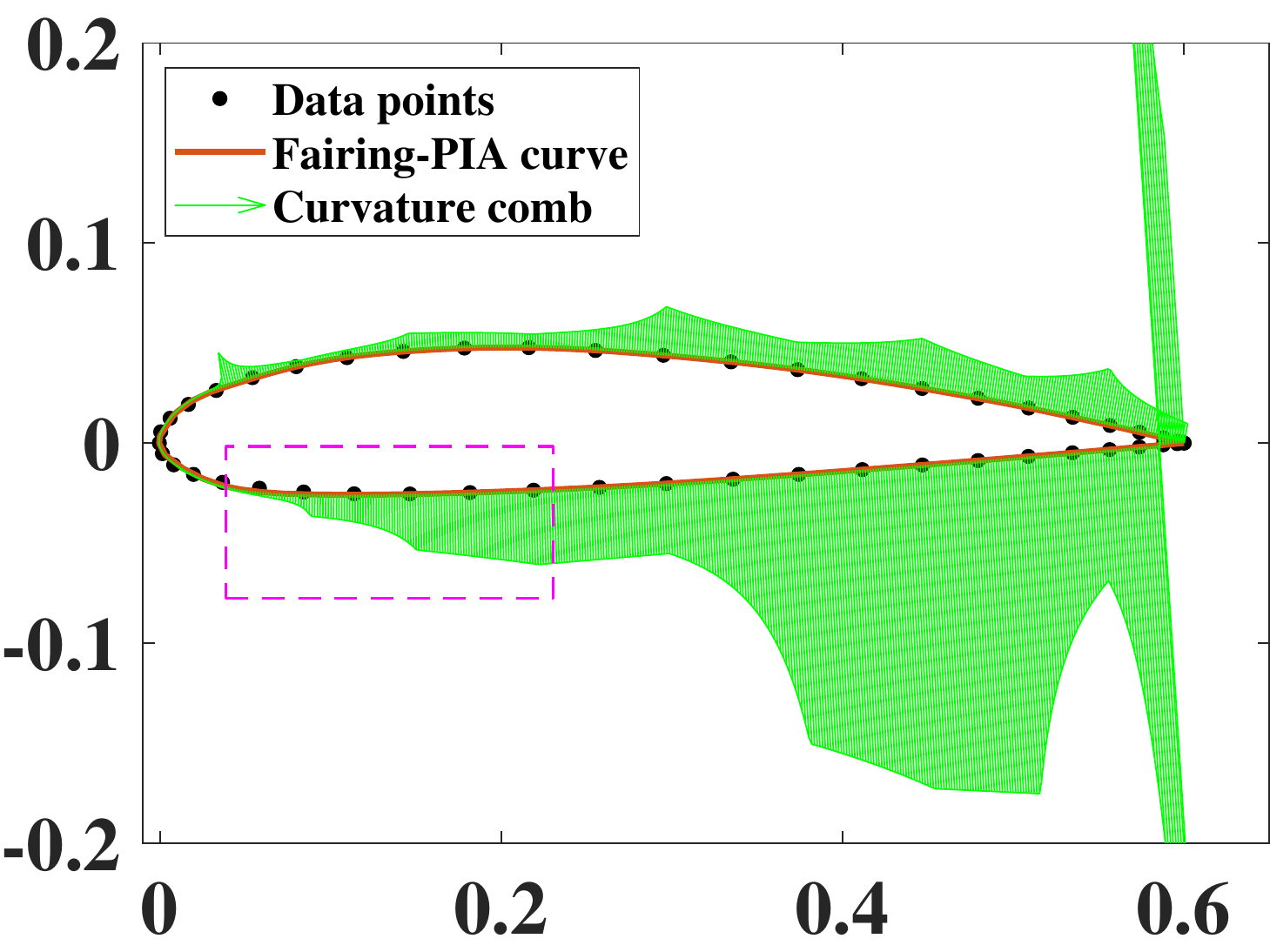}
    \label{fig:Airfoil_FPIA_comb_2}
  }
  \subfigure[]{
    \includegraphics[width=2.5in]{./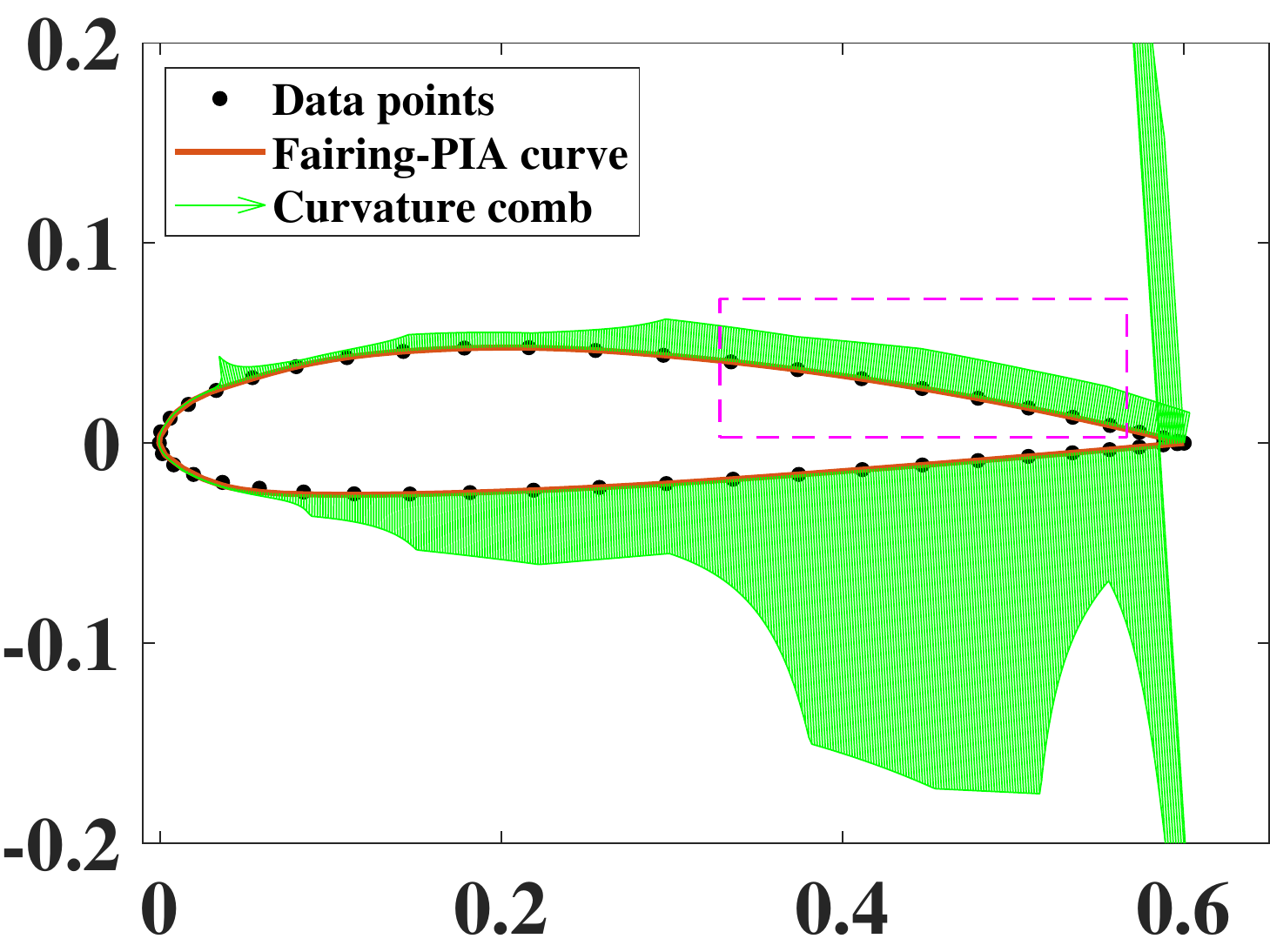}
    \label{fig:Airfoil_FPIA_comb_3}
  }

  \caption{ The \textit{Airfoil} example.
  (a) Data sets, the fitting curve with its curvature comb, 
    and its control polygon.
  (b) The first local fairing by adjusting the smoothing weights of the $4^{th}$-$7^{th}$ control points;
  (c) The second local fairing by adjusting the smoothing weights of the $8^{th}$-$11^{th}$ control points;
  (d) The third local fairing by adjusting the smoothing weights of the $19^{th}$-$23^{rd}$ control points.}
  \label{fig:Airfoil_comb}
\end{figure*}

\begin{figure*}[!htb]
  \centering
  \subfigure[]{
    \includegraphics[width=2.0in]{./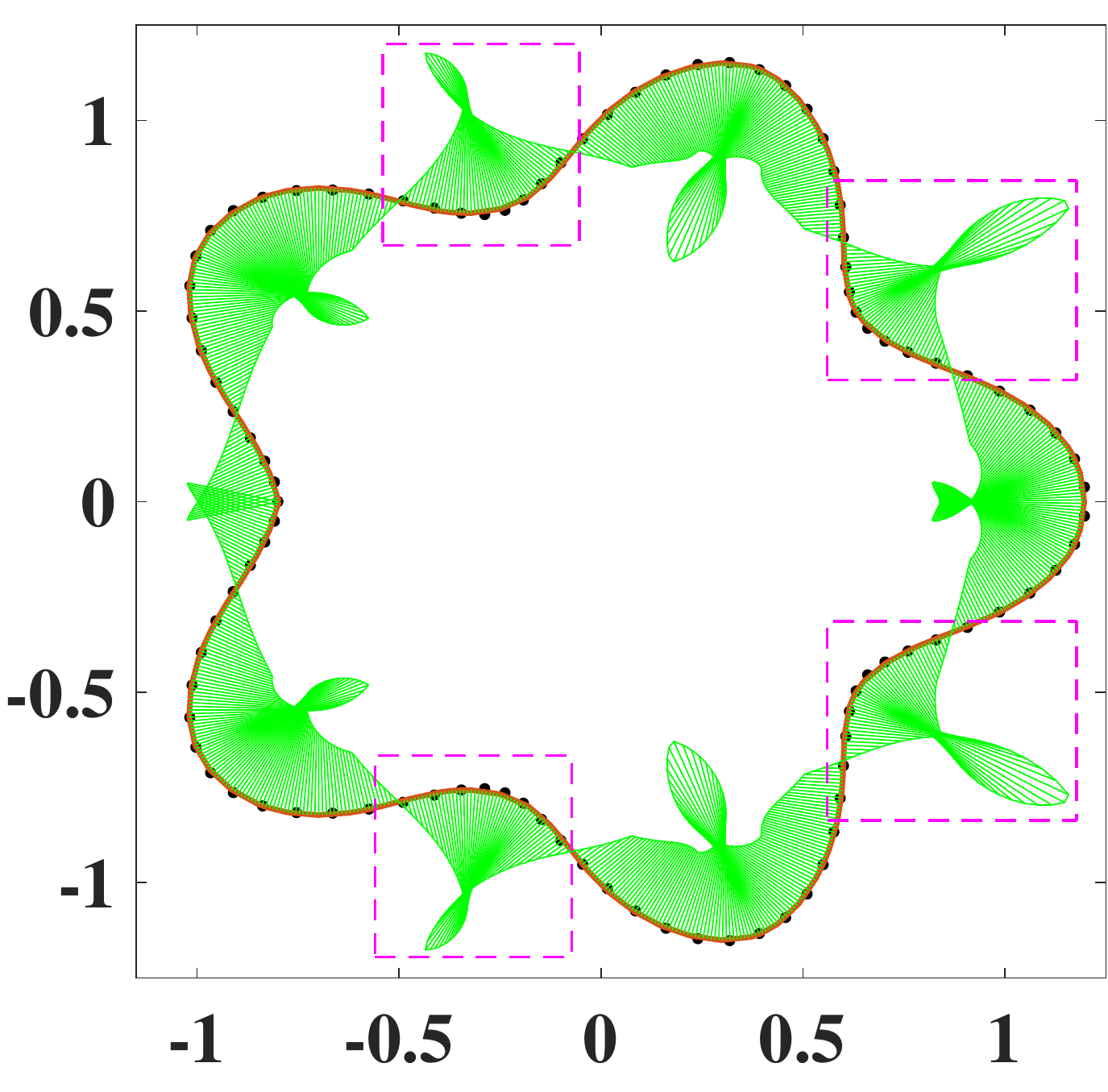}
    \label{fig:Star_fit_comb}
  }
  \subfigure[]{
    \includegraphics[width=2.0in]{./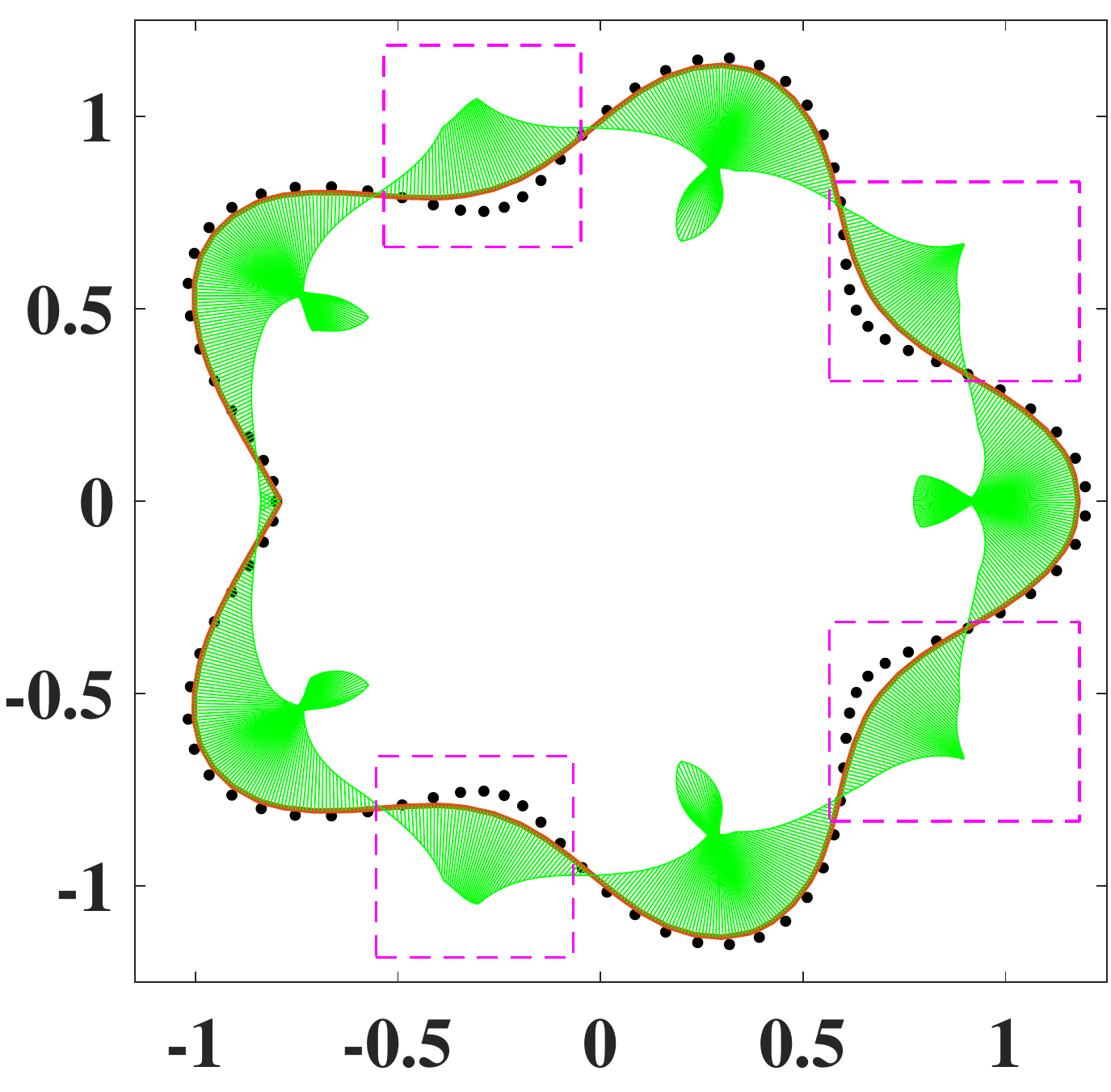}
    \label{fig:Star_FPIA_knotNoInsert_comb}
  }
  \subfigure[]{
    \includegraphics[width=2.0in]{./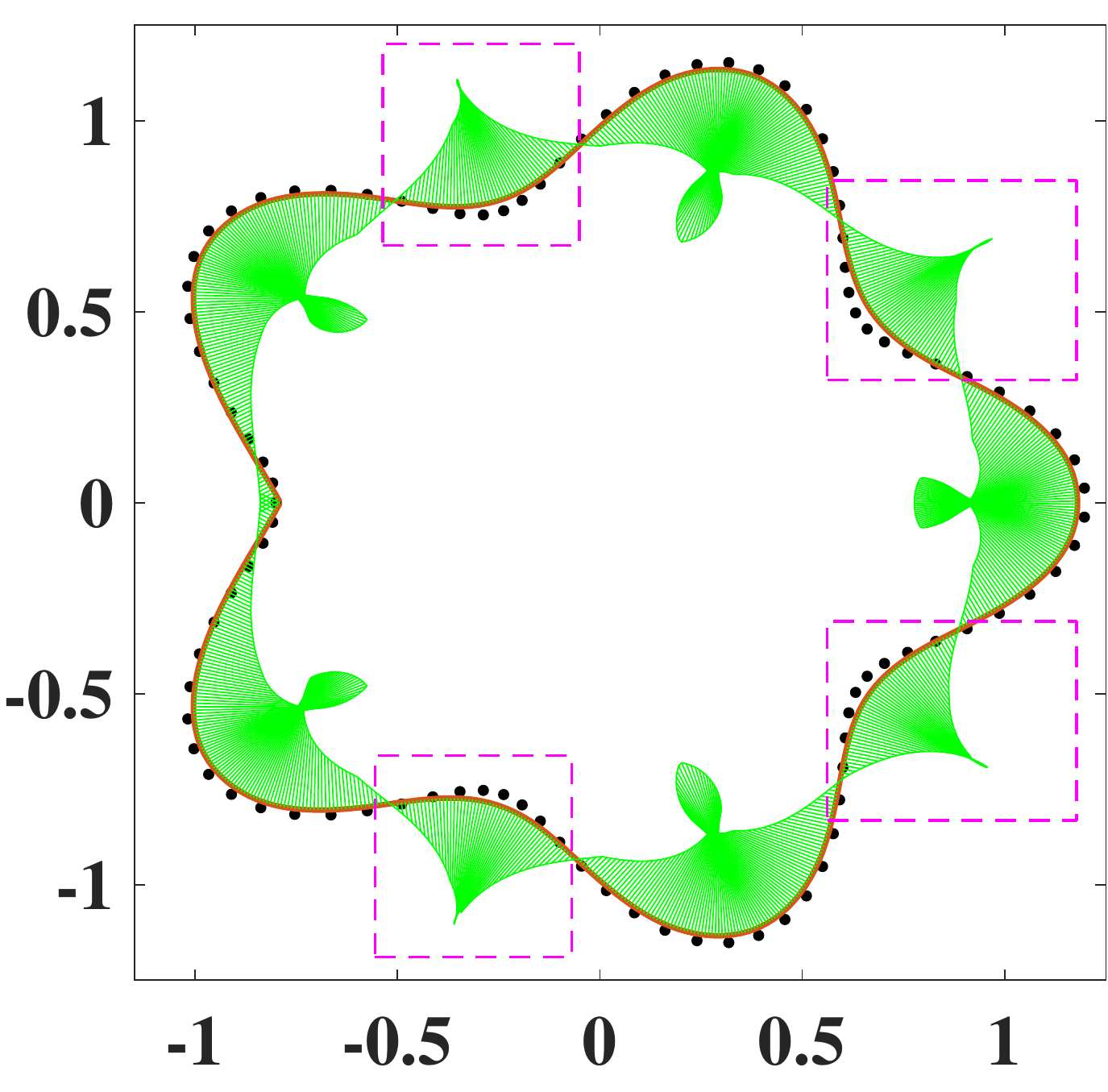}
    \label{fig:Star_FPIA_knotInsert_comb}
  }
  \caption{ The \textit{Starfish} example.
    (a) The fitting curve and its curvature comb;
    (b) The fairing curve generated by fairing-PIA method
        and its curvature comb,
        where the fairness of the curve segments in the red boxes is improved, but the fitting errors are increased;
    (c) The fairing curve generated by fairing-PIA method by inserting knots and decreasing corresponding smoothing weights,
     where the fitting errors of the curve segments in the red boxes are reduced.}
    \label{fig:Star_comb}
\end{figure*}

 In practice, the fairness of a curve is usually adjusted progressively and locally,
    i.e., segment by segment,
    using local fairing methods.
 In the example of the \textit{Airfoil} model,
    we demonstrate the segment-by-segment fairing and local fairing capabilities of the fairing-PIA method.
 The fitting curve with curvature comb is illustrated in Fig.~\ref{fig:Airfoil_fit_comb},
    where the smoothing weights for all control points are set as $0$.
 We aim to improve the fairness of the three segments in the red boxes segment by segment.
 In Fig.~\ref{fig:Airfoil_FPIA_comb_1},
    we first improve the fairness of the curve segment in the lower right box 
    by setting the smoothing weights of the $4^{th}$ to $7^{th}$ control points as $1 \times 10^{-5}$.
 The fairing-PIA is performed by taking the control points of the fitting curve 
    in Fig.~\ref{fig:Airfoil_fit_comb} as the initial control points.
 The fairness of the curve segment in the lower right box is improved 
    after the fairing-PIA iteration stops (Fig.~\ref{fig:Airfoil_FPIA_comb_1}).
 Second, Fig.~\ref{fig:Airfoil_FPIA_comb_2} illustrates that the fairness of the curve segment in the left box is improved
    using fairing-PIA by setting the smoothing weights of the $8^{th}$ to $11^{th}$ control points as $1 \times 10^{-5}$
    and taking the control points of curve in Fig.~\ref{fig:Airfoil_FPIA_comb_1} as the initial control points.
 Finally, the fairness of the curve segment in the upper right box can be improved similarly 
    using fairing-PIA by setting the smoothing weights of the $19^{th}$ to $23^{rd}$ as
    $1 \times 10^{-5}$ and taking the control points of the curve in Fig.~\ref{fig:Airfoil_FPIA_comb_2} as the initial control points (refer to Fig.~\ref{fig:Airfoil_FPIA_comb_3}).

 Moreover, the flexibility of fairing-PIA is demonstrated in the example of
    the \textit{Starfish} model (Fig.~\ref{fig:Star_comb}),
    where the new knots are inserted, and the smoothing weights are renewed in the iterations of fairing-PIA.
 In Fig.~\ref{fig:Star_fit_comb},
    the fitting curve with the curvature comb is presented, 
    where the fairness of the curve segments in the red boxes is undesirable and needs to be improved.
 The fairness of the curve segments in the red boxes is improved (Fig.~\ref{fig:Star_FPIA_knotNoInsert_comb})
    with fairing-PIA by setting the smoothing weights of the control points as $3\times 10^{-5}$,
    thereby affecting the curve segments in the red boxes.
 The smoothing weights of the other control points are set as $1 \times 10^{-5}$.
 However, the fitting errors of the curve segments in the red boxes increase.
 We insert two knots at each curve segment in the red box 
    to decrease the fitting error. 
 We also reduce the smoothing weights of the control points to $6\times 10^{-6}$,
    thereby affecting the curve segments in the red boxes.
 Then, a new round of fairing-PIA iterations is invoked 
    by taking the curve generated after the last round of fairing-PIA iterations as the initial curve (refer to Fig.~\ref{fig:Star_FPIA_knotNoInsert_comb}).
 After the new round of iterations stops,
    a new curve is produced (Fig.~\ref{fig:Star_FPIA_knotInsert_comb}),
    where the fitting errors of the curve segments in the red boxes are reduced.
 Moreover, the fairness of these curve segments is comparable with the fairness of the curve in Fig.~\ref{fig:Star_FPIA_knotNoInsert_comb}.
 The diagrams of iteration v.s. relative iterative error,
    iteration v.s. relative fitting error,
    and iteration v.s. the relative strain energy of the two rounds of iterations are shown in Fig.~\ref{fig:Star_ErrandEnergyPlot},
    where the fitting error is progressively decreased in the second round of iterations.

 Additionally, the statistics of the above three examples are listed in
    Table~\ref{TABLE:iteration_data},
    including the number of data points, number of control points,
    number of iterations, and running time.
 The absolute fitting error and absolute strain energy 
    are given in Table~\ref{TABLE:FitErrorandEnergy}.

 \begin{figure*}[!htb]
   \centering
  \subfigure[]{
    \includegraphics[width=2.6in]{./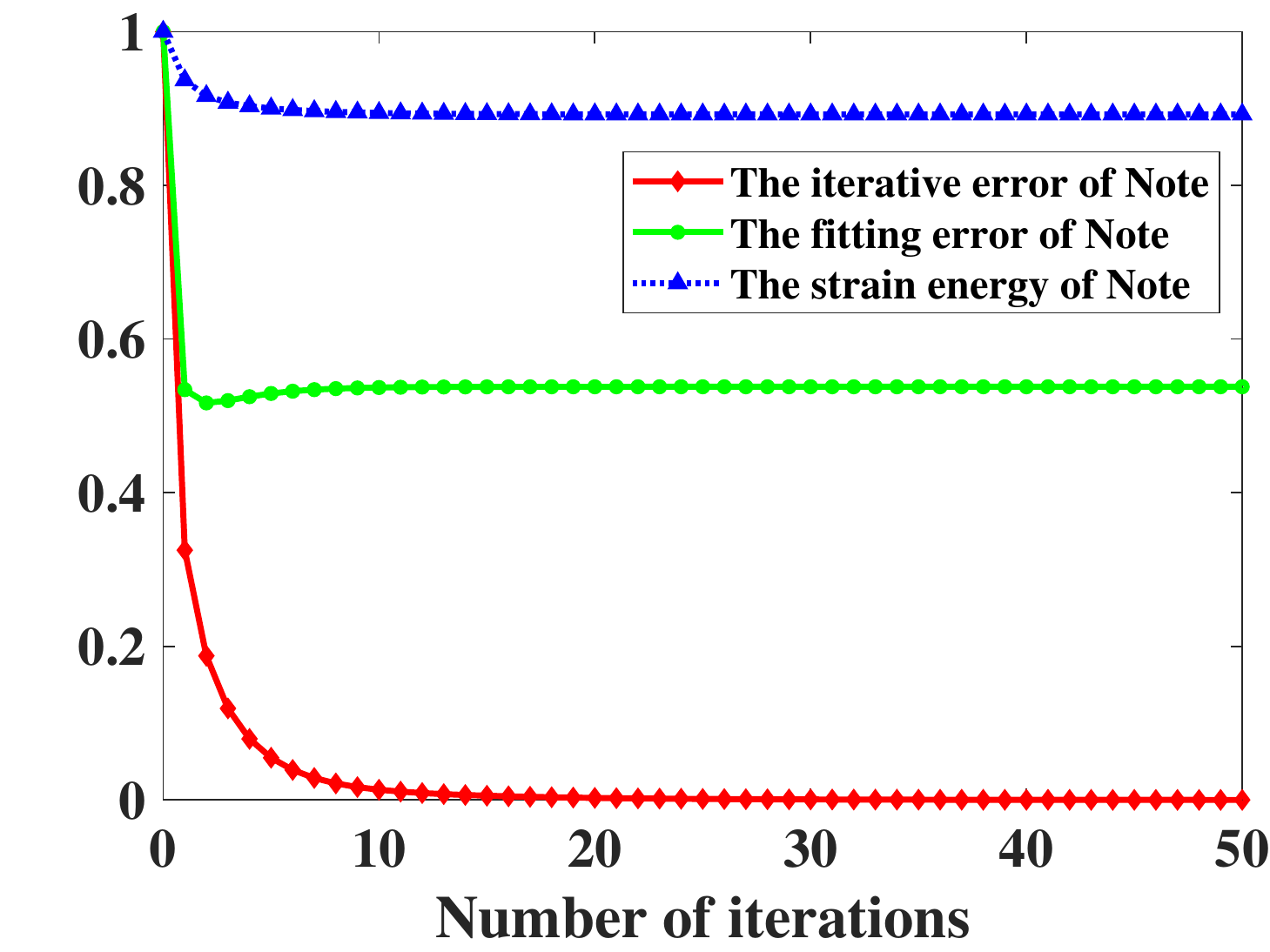}
    \label{fig:Note_ErrandEnergyPlot}
  }
  \subfigure[]{
    \includegraphics[width=2.6in]{./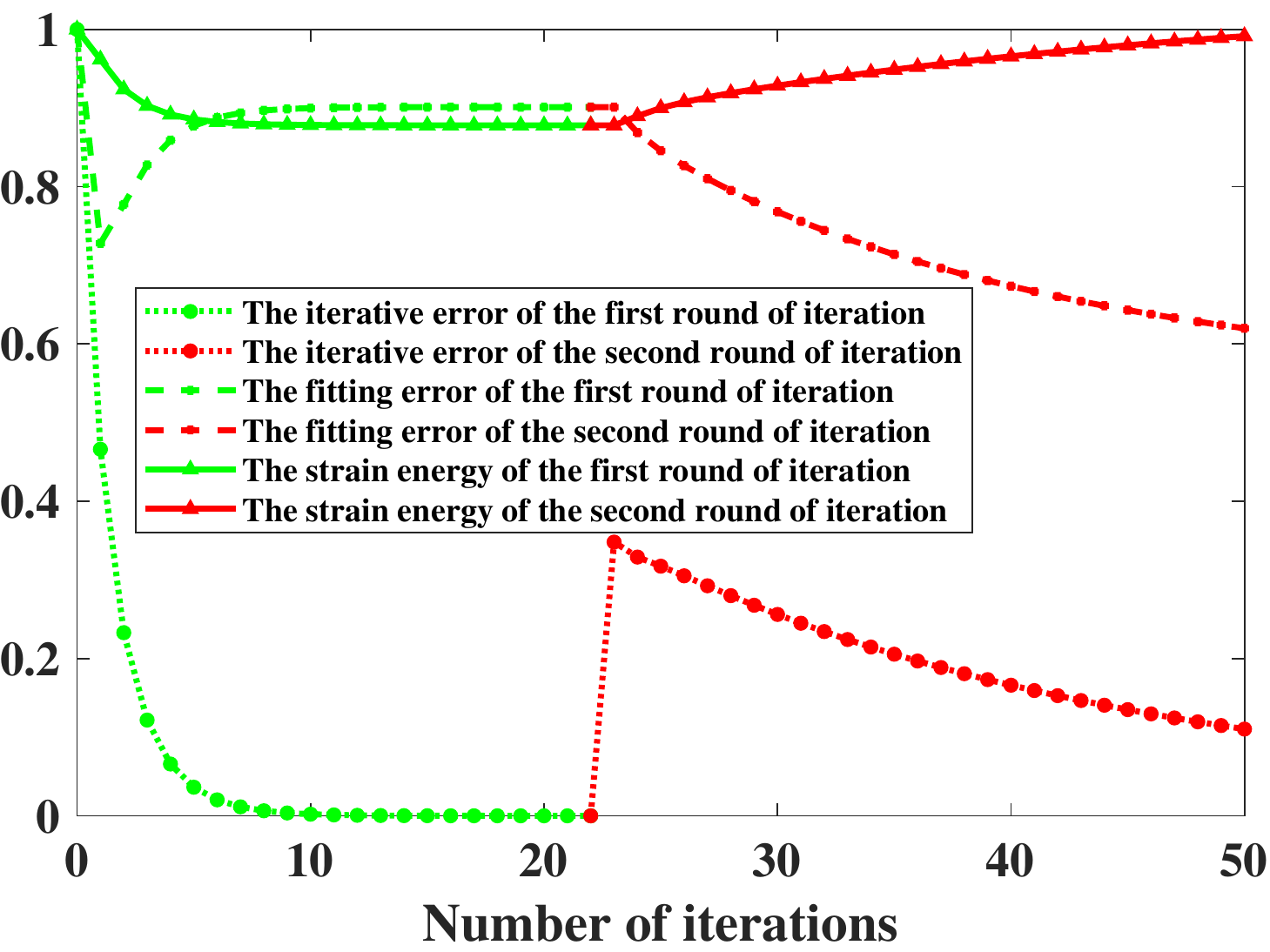}
    \label{fig:Star_ErrandEnergyPlot}
  }
   \caption{ The diagrams of iterations v.s. relative fitting error, relative iterative error and relative strain energy, respectively, for
   \textit{Note} model (a), and
   \textit{Starfish} model (b).
   }
   \label{fig:CurvePlots}
 \end{figure*}

\begin{table*}[!htb] 
	\centering
  \caption{Statistical data of fairing-PIA on curve and surface examples.}
  \begin{threeparttable}
  \begin{tabular}{llllll}
    \toprule
    Model &
    \# Data points\tnote{a} &
    \# Control points\tnote{b}&
    \# Iterations\tnote{c} &
    \multicolumn{2}{c}{Time(seconds)}
    \\
    \cmidrule(l){5-6} 
    &  &  & & \makecell[c]{Energy\\minimization\\ method} & Fairing-PIA  \\
    \midrule
    \textit{Note}    & 245 & 83 &  64 & 0.1509 & $\mathbf{0.1035}$ \\
    \textit{Airfoil} (The $1^{st}$ segment) & 49  & 25 & 32 & 0.01144 & $\mathbf{0.01032}$ \\
    \textit{Starfish} (The first round) & 100 & 34 & 23 & 0.02133 & $\mathbf{0.01911}$  \\
    \textit{Tooth}    & $81\times61$ & $32\times24$ & 248 & 39.3429 & $\mathbf{35.4474}$   \\
    \textit{Fan\_disk} & $41\times61$ & $16\times24$ & 85 & 7.4511 & $\mathbf{7.2305}$   \\
    \textit{Mannequin} & $121\times161$ & $48\times64$ & 143 & $\mathbf{1554.949}$6 & 1560.6648  \\
    \bottomrule
  \end{tabular}
  \begin{tablenotes}
    \item[a] Number of the data points.
    \item[b] Number of the control points.
    \item[c] Number of iterations.
  \end{tablenotes}
\end{threeparttable}
  \label{TABLE:iteration_data}
\end{table*}

\begin{table*}
  \centering
  \caption
  {
    The absolute fitting error and absolute energy of curve and surface examples.
  }
  \begin{threeparttable}
  \begin{tabular}{lllll}
   \toprule
    Model & Initial fitting error & Final fitting error\tnote{a} & Initial energy & Final energy\tnote{b}\\
   \midrule
    \textit{Note} & 0.01074 & $\mathbf{0.005778}$ & 90438.9606 & $\mathbf{80448.1437}$ \\
    \textit{Starfish} (The first round)     & 0.02725  & $\mathbf{0.02456}$   & 15090.0488 & $\mathbf{13275.4486}$ \\
    \textit{Viviani}\tnote{c} ($r=1$) & 0.02895 & $\mathbf{0.02210}$ & 1455.8978 & $\mathbf{1442.3715}$\\
    \textit{Viviani} ($r=2$) &  0.02895 & $\mathbf{0.01498}$ & $2.4161\times{10^5}$  & $\mathbf{1.9570\times{10^5}}$ \\
    \textit{Viviani} ($r=3$) &  0.02895 & $\mathbf{0.01712}$ & $7.7733\times{10^8}$  & $\mathbf{3.8296\times{10^7}}$ \\
    \textit{Tooth}      & 0.006903 & $\mathbf{0.006180}$ & 236.9693 & $\mathbf{70.3654}$ \\
    \textit{Fan\_disk}   & 0.01002  & $\mathbf{0.005743}$ & 73.3475  & $\mathbf{15.9637}$ \\
    \textit{Mannequin}  & 0.003143 & $\mathbf{0.001457}$ & 787.0678 & $\mathbf{350.0640}$ \\
   \bottomrule
  \end{tabular}
  \begin{tablenotes}
    \item[a] The fitting error obtained after the iteration stops.
    \item[b] The energy obtained after the iteration stops.
    \item[c] The stretch energy, strain energy and jerk energy correspond to $r=1, 2, 3$, respectively.
  \end{tablenotes}
  \end{threeparttable}
  \label{TABLE:FitErrorandEnergy}
\end{table*}

\begin{figure*}[!htb]
  \centering
  \subfigure{
    \rotatebox{90}{~~~~~~~~~~~~~~~~$r=1$}
    \begin{minipage}[t]{0.2\linewidth}
      \centering
      \includegraphics[width=0.7in]{./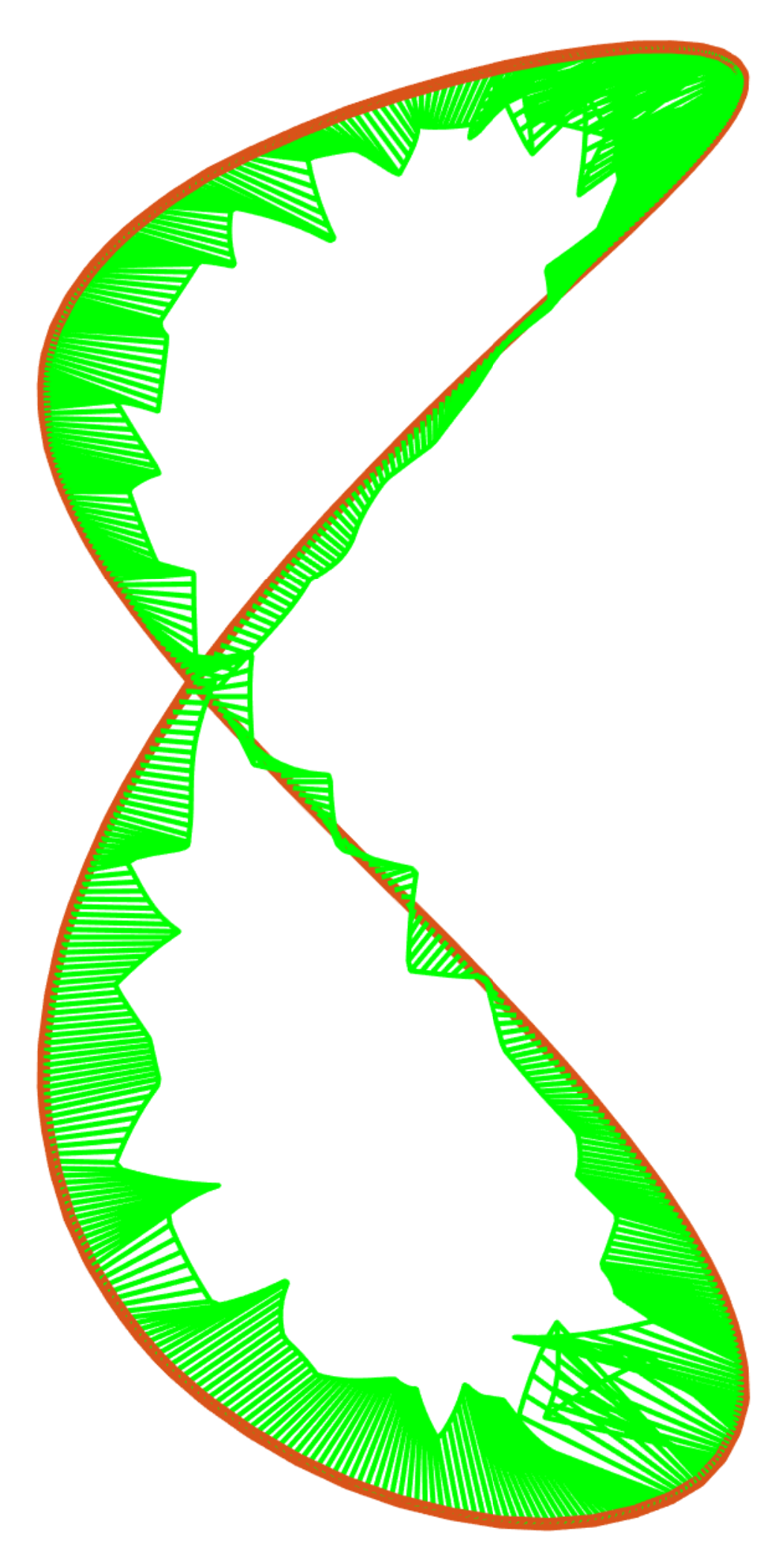}
      \label{fig:Vivi_r1_iter0_comb}
    \end{minipage}
  }
  \hspace{-5mm}
  \subfigure{
    \begin{minipage}[t]{0.2\linewidth}
      \centering
      \includegraphics[width=0.7in]{./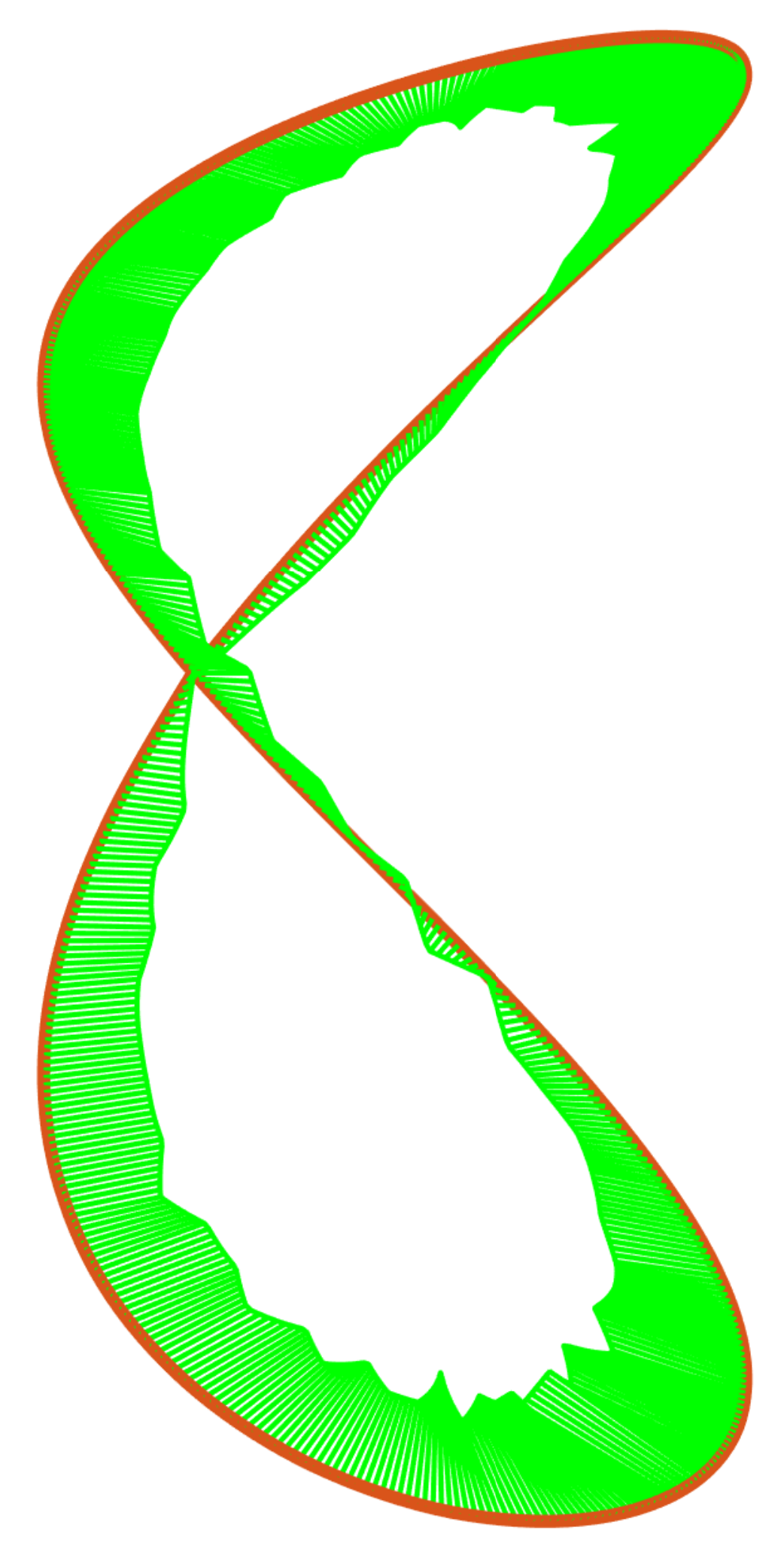}
      \label{fig:Vivi_r1_iter5_comb}
    \end{minipage}
  }
  \hspace{-5mm}
  \subfigure{
    \begin{minipage}[t]{0.2\linewidth}
      \centering
      \includegraphics[width=0.7in]{./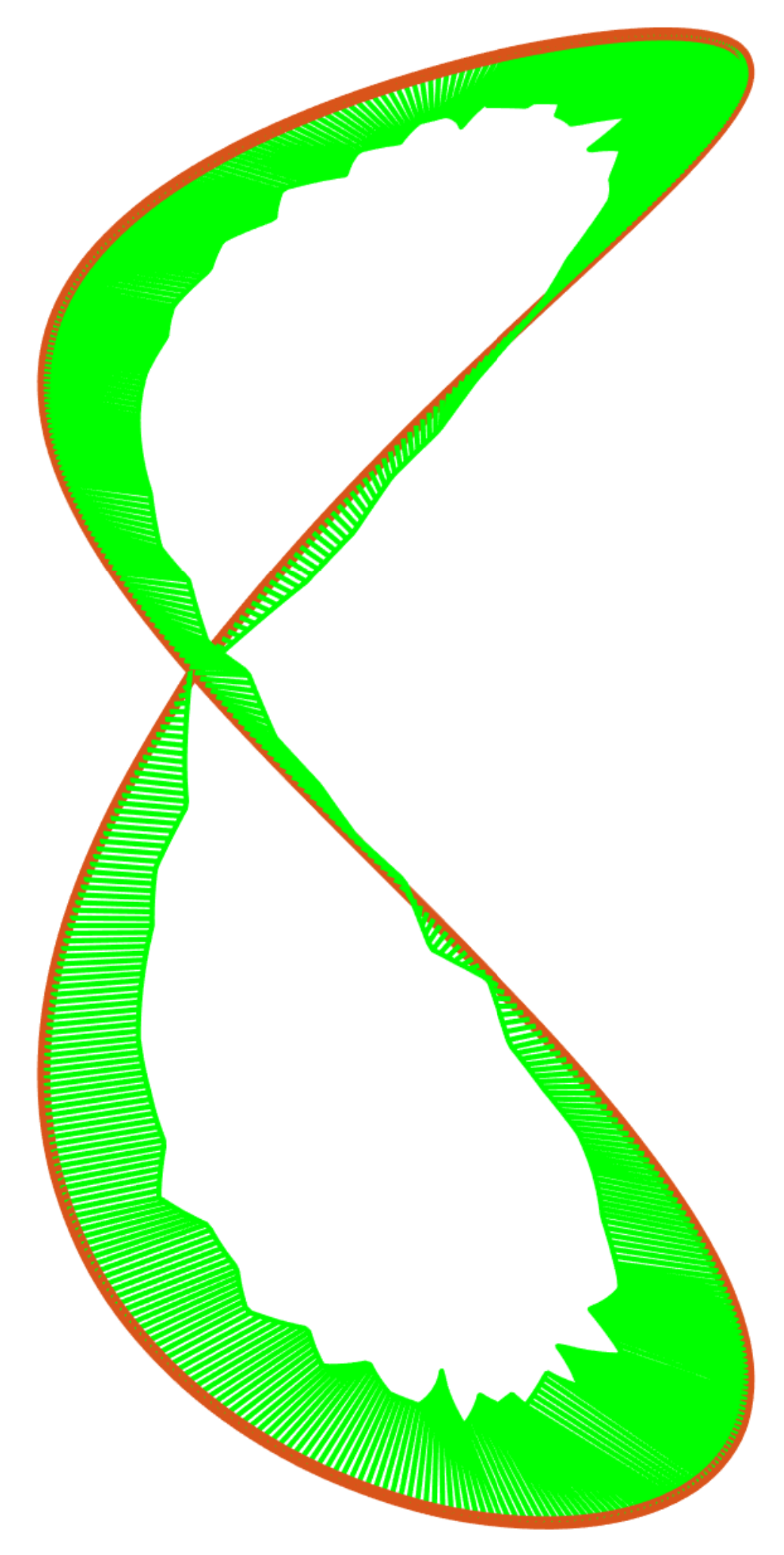}
      \label{fig:Vivi_r1_iter15_comb}
    \end{minipage}
  }
  \hspace{-5mm}
  \subfigure{
    \begin{minipage}[t]{0.2\linewidth}
      \centering
      \includegraphics[width=0.7in]{./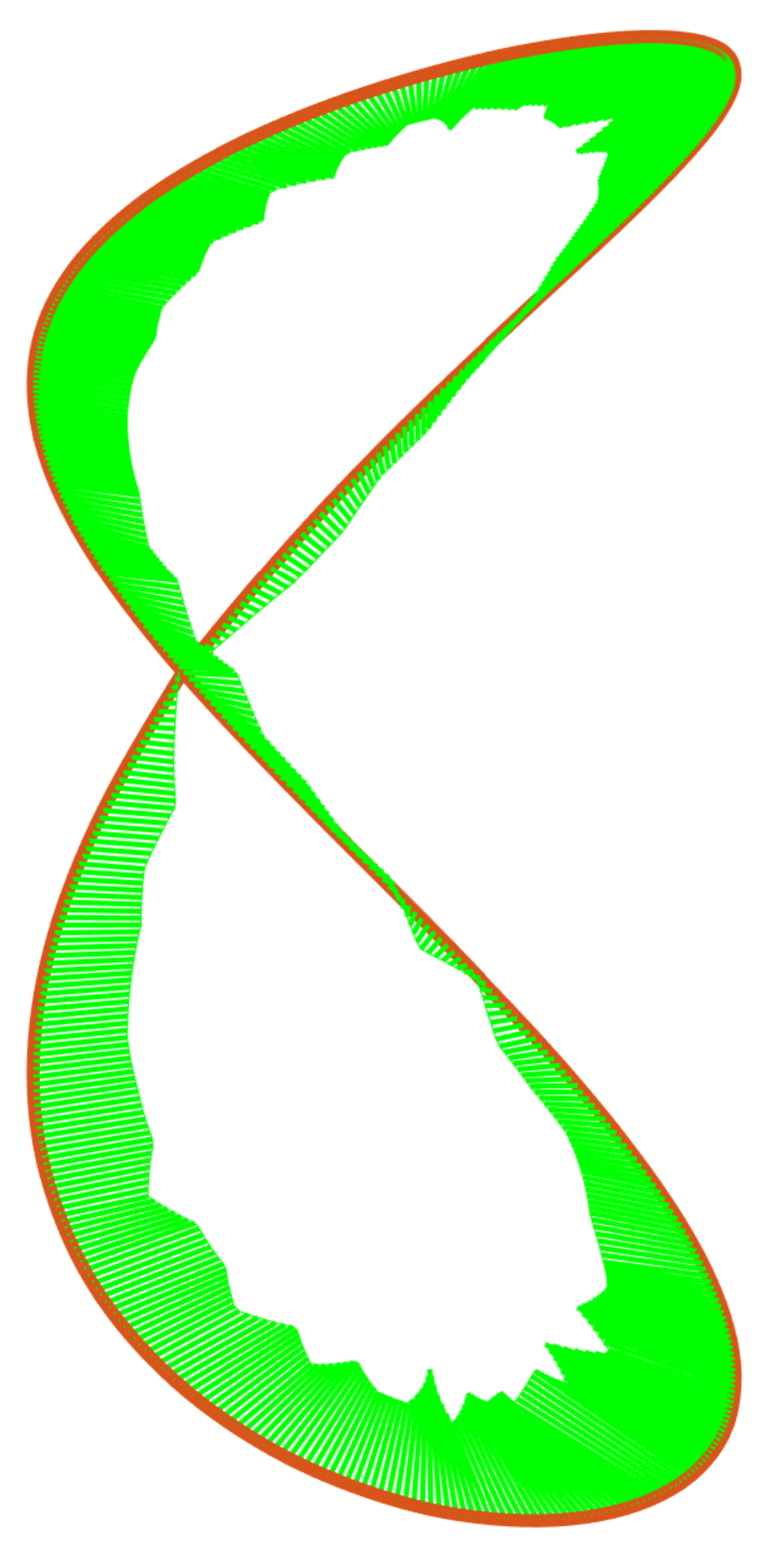}
      \label{fig:Vivi_r1_fairing_comb}
    \end{minipage}
  }
  \vspace{-4mm}

  \setcounter{subfigure}{0}
  \subfigure{
    \rotatebox{90}{~~~~~~~~~~~~~~~~$r=2$}
    \begin{minipage}[t]{0.2\linewidth}
      \centering
      \includegraphics[width=0.7in]{./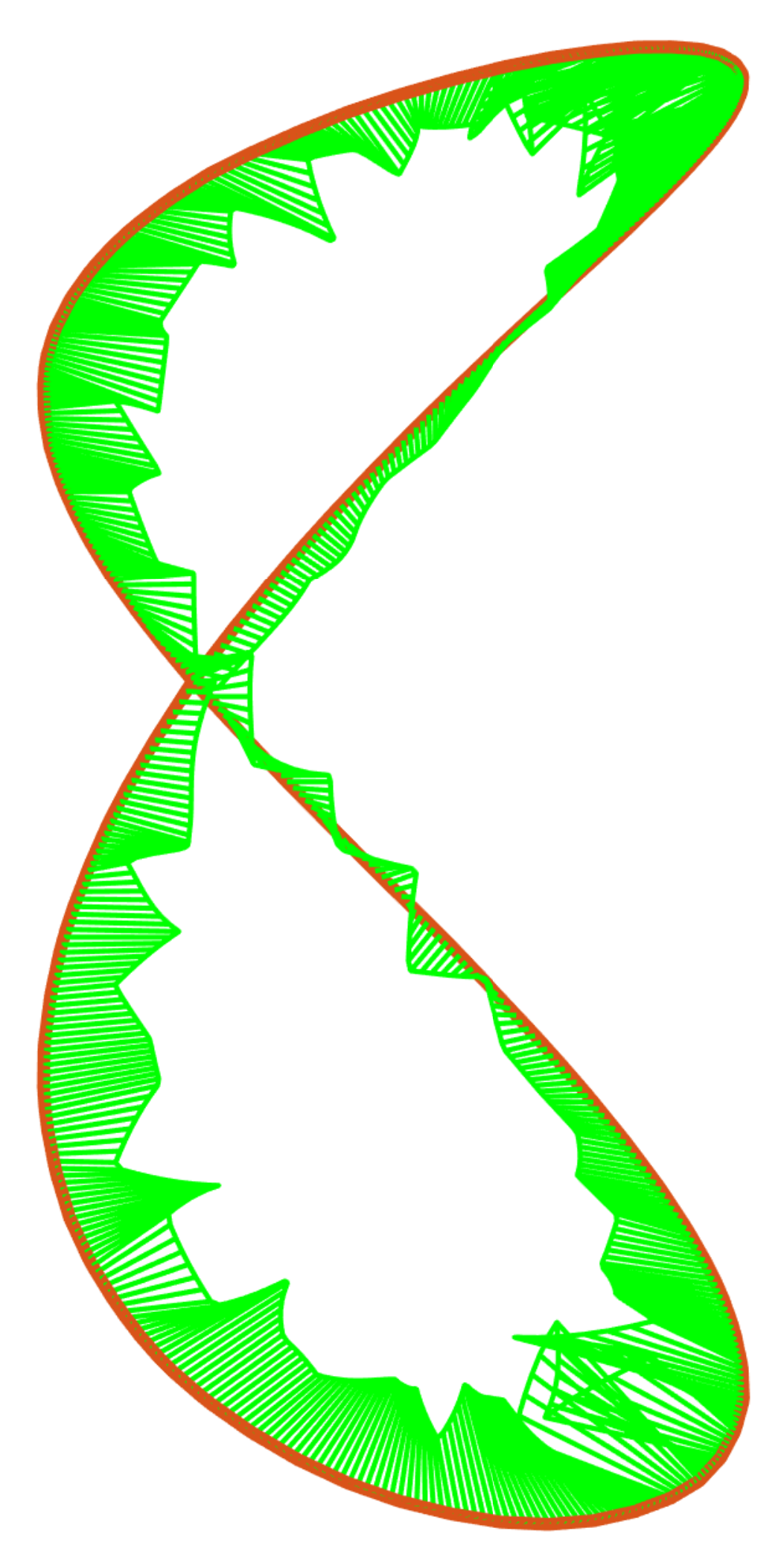}
      \label{fig:Vivi_r2_iter0_comb}
    \end{minipage}
  }
  \hspace{-5mm}
  \subfigure{
    \begin{minipage}[t]{0.2\linewidth}
      \centering
      \includegraphics[width=0.7in]{./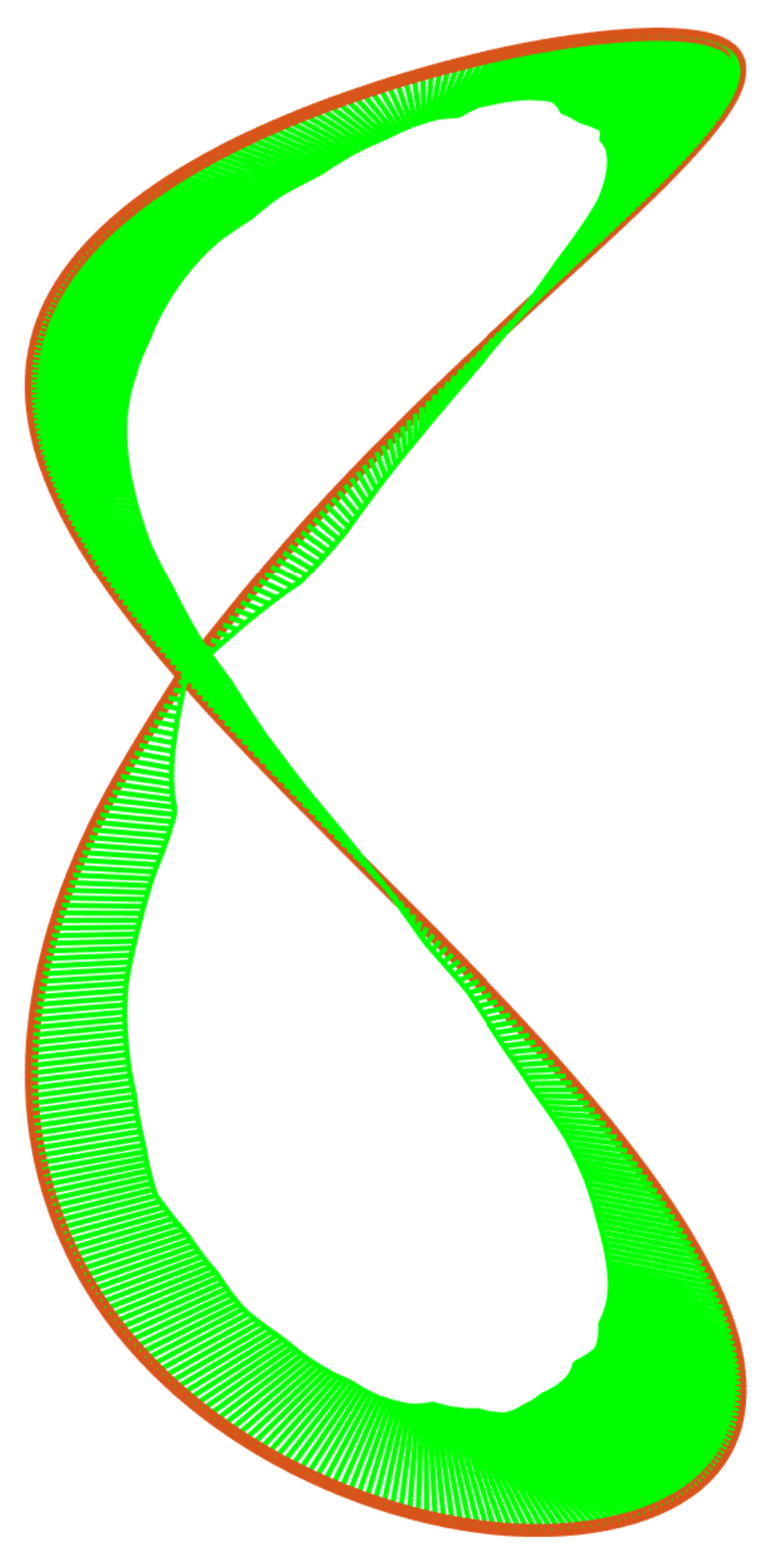}
      \label{fig:Vivi_r2_iter5_comb}
    \end{minipage}
  }
  \hspace{-5mm}
  \subfigure{
    \begin{minipage}[t]{0.2\linewidth}
      \centering
      \includegraphics[width=0.7in]{./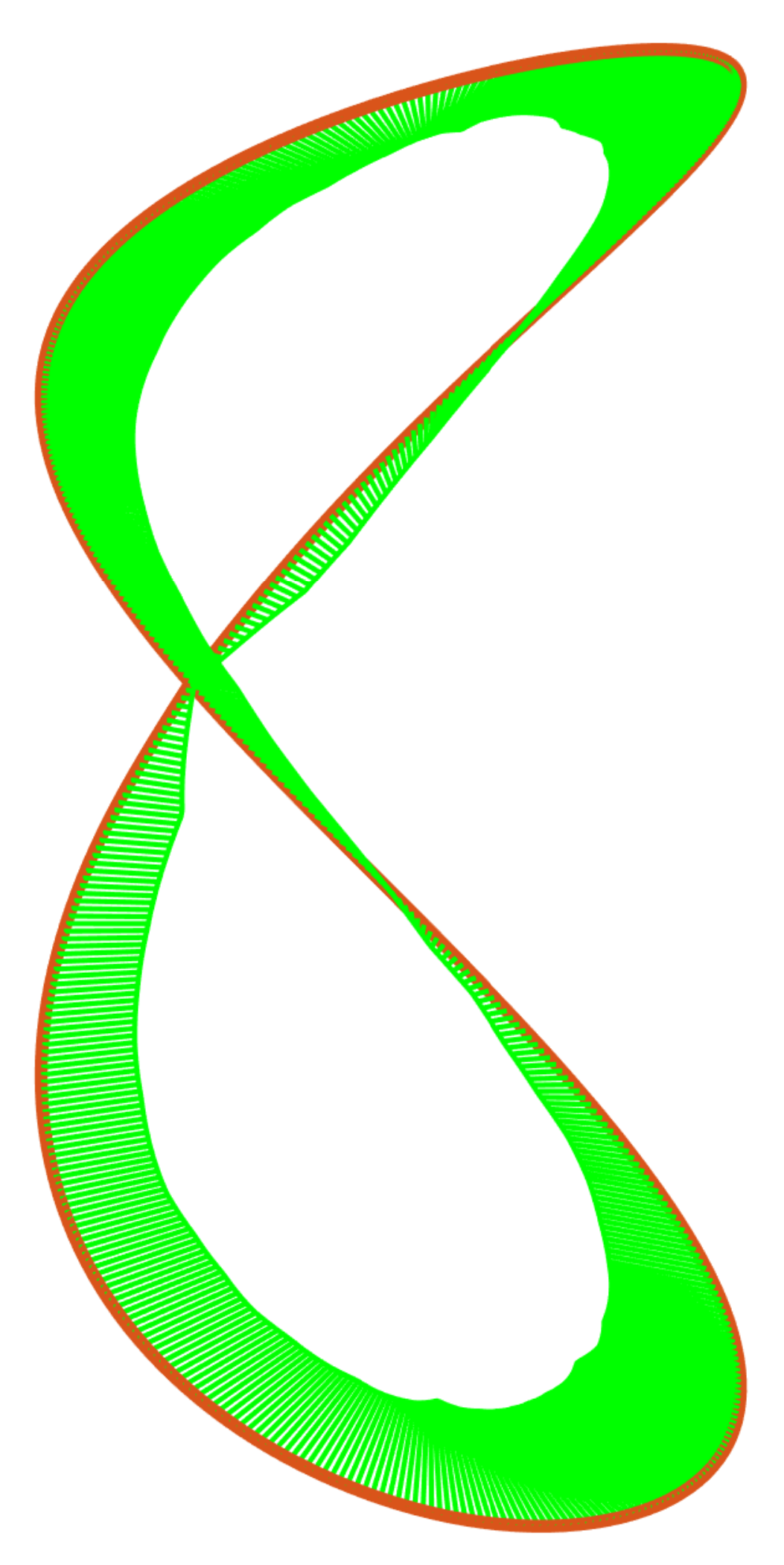}
      \label{fig:Vivi_r2_iter15_comb}
    \end{minipage}
  }
  \hspace{-5mm}
  \subfigure{
    \begin{minipage}[t]{0.2\linewidth}
      \centering
      \includegraphics[width=0.7in]{./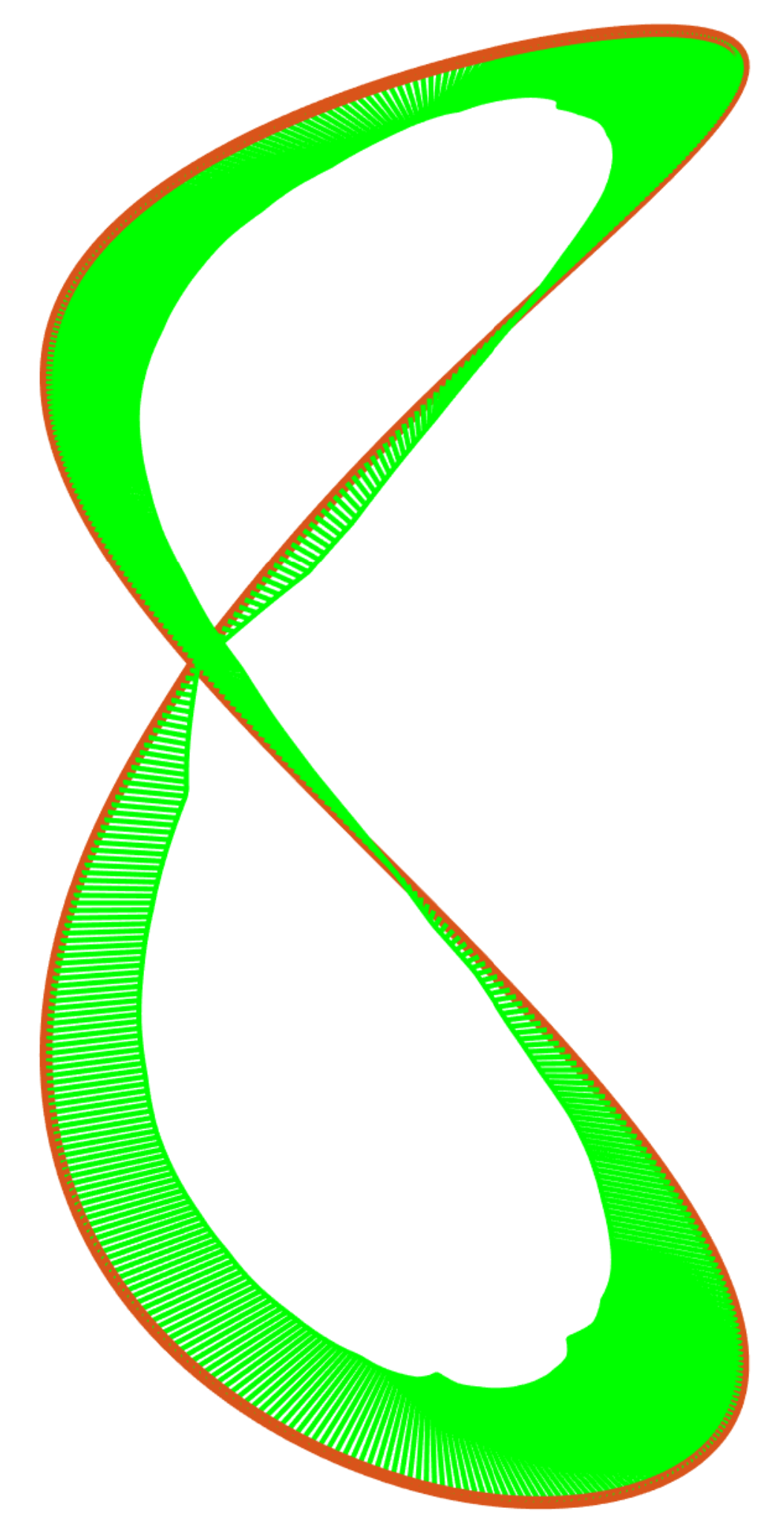}
      \label{fig:Vivi_r2_fairing_comb}
    \end{minipage}
  }
  \vspace{-4mm}

  \setcounter{subfigure}{0}
  \subfigure[Iteration 0]{
    \rotatebox{90}{~~~~~~~~~~~~~~~~$r=3$}
    \begin{minipage}[t]{0.2\linewidth}
      \centering
      \includegraphics[width=0.7in]{./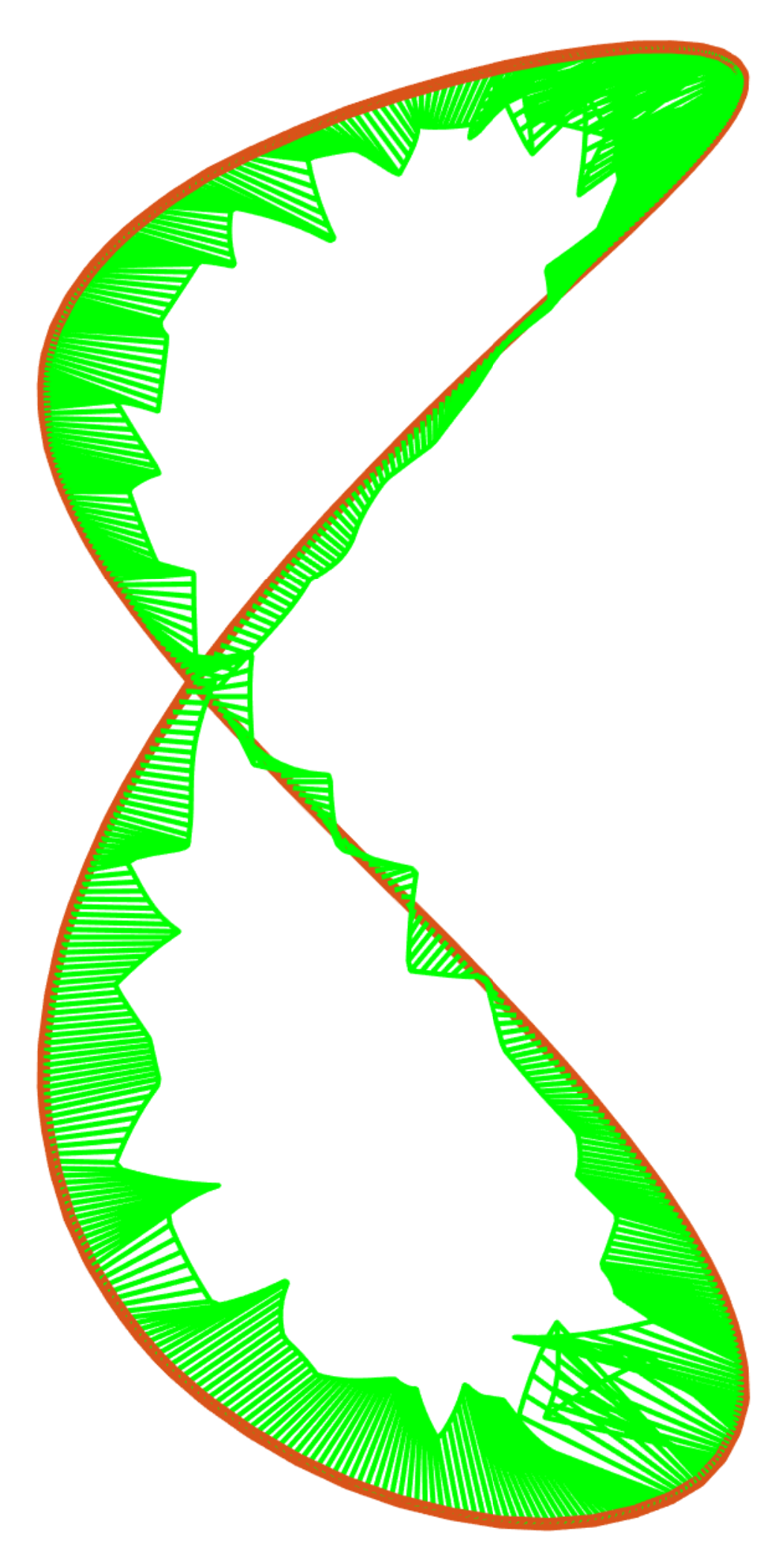}
      \label{fig:Vivi_r3_iter0_comb}
    \end{minipage}
  }
  \hspace{-5mm}
  \subfigure[Iteration 5]{
    \begin{minipage}[t]{0.2\linewidth}
      \centering
      \includegraphics[width=0.7in]{./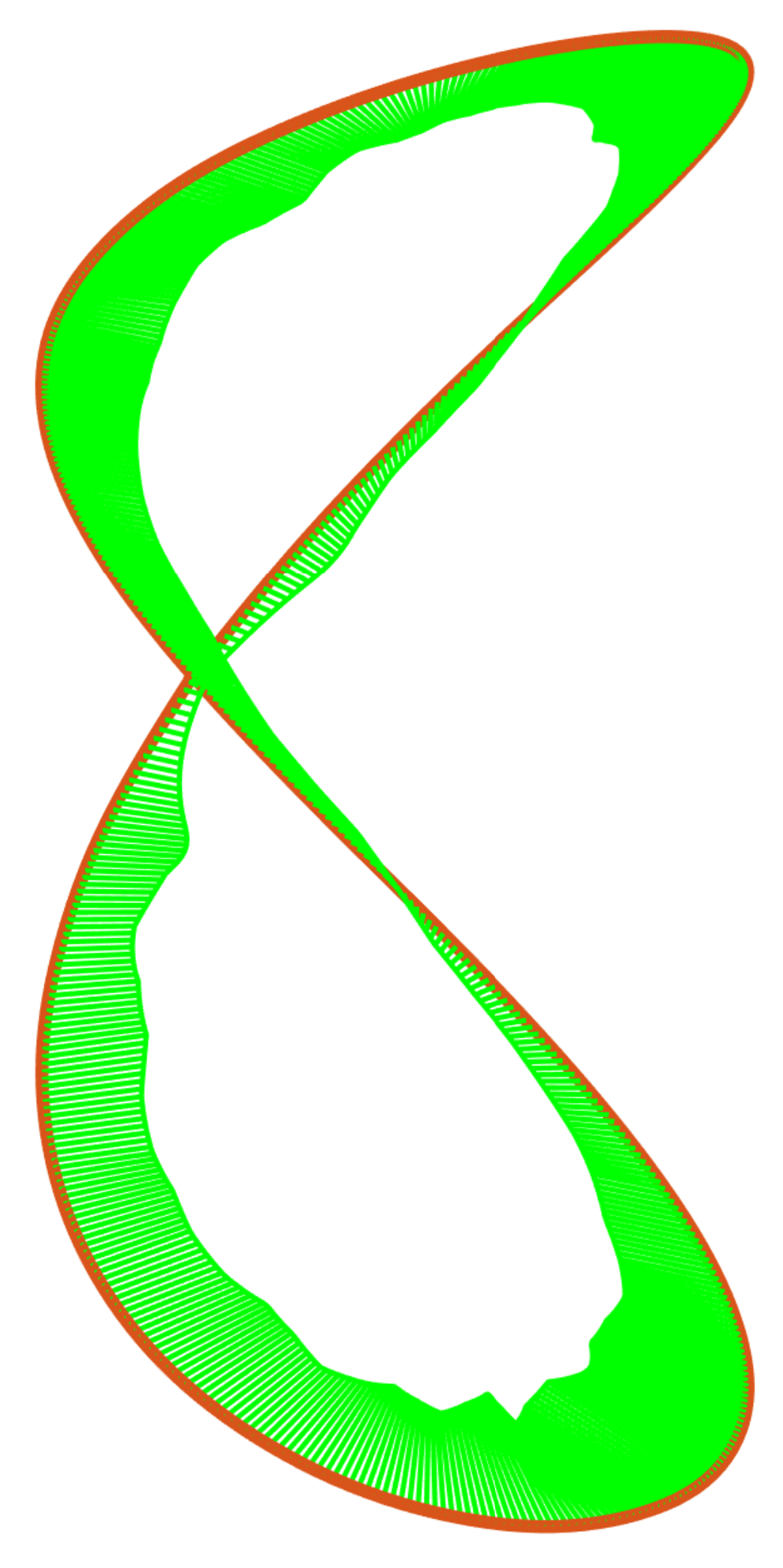}
      \label{fig:Vivi_r3_iter5_comb}
    \end{minipage}
  }
  \hspace{-5mm}
  \subfigure[Iteration 15]{
    \begin{minipage}[t]{0.2\linewidth}
      \centering
      \includegraphics[width=0.7in]{./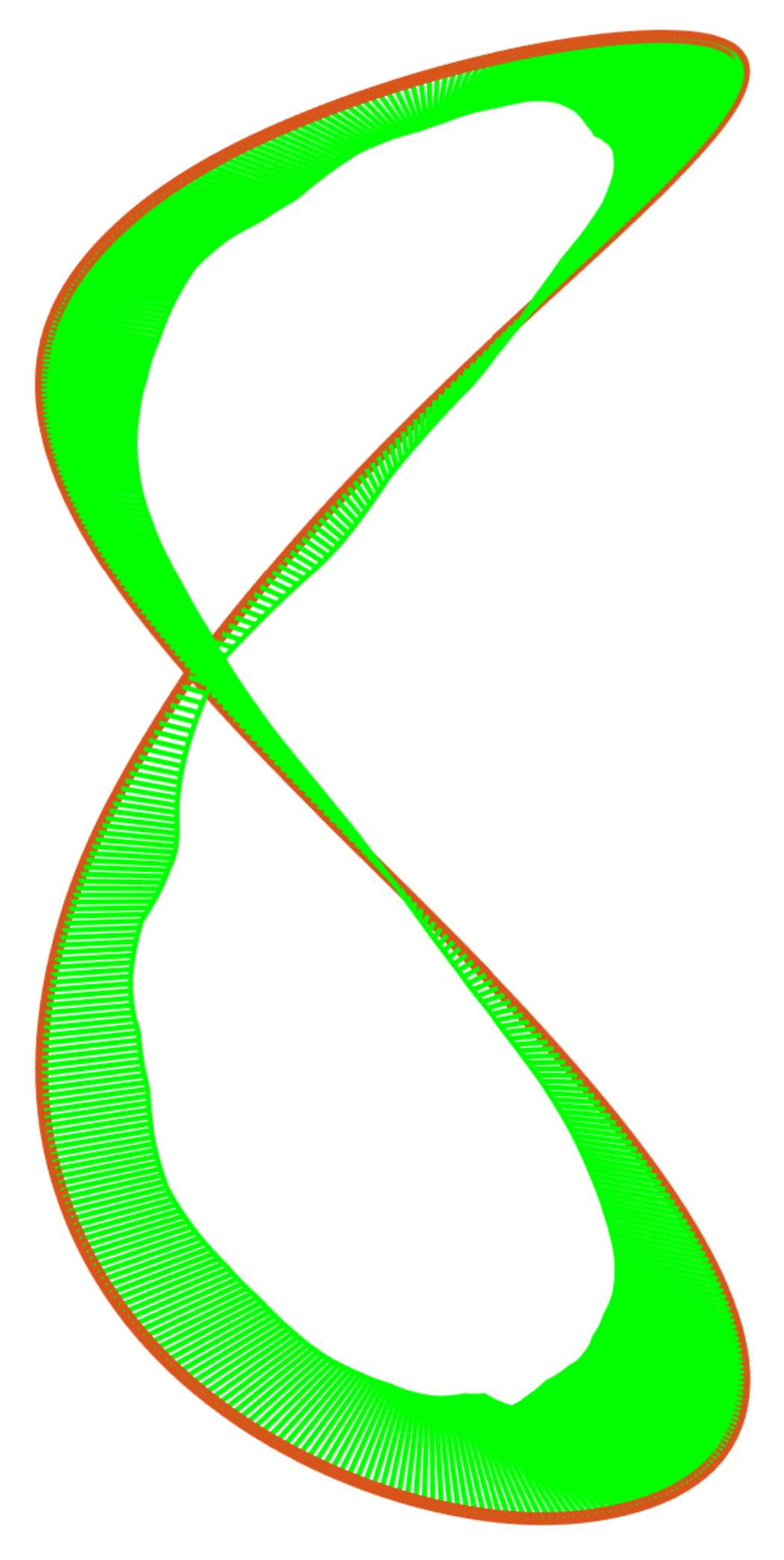}
      \label{fig:Vivi_r3_iter15_comb}
    \end{minipage}
  }
  \hspace{-5mm}
  \subfigure[Iteration stops]{
    \begin{minipage}[t]{0.2\linewidth}
      \centering
      \includegraphics[width=0.7in]{./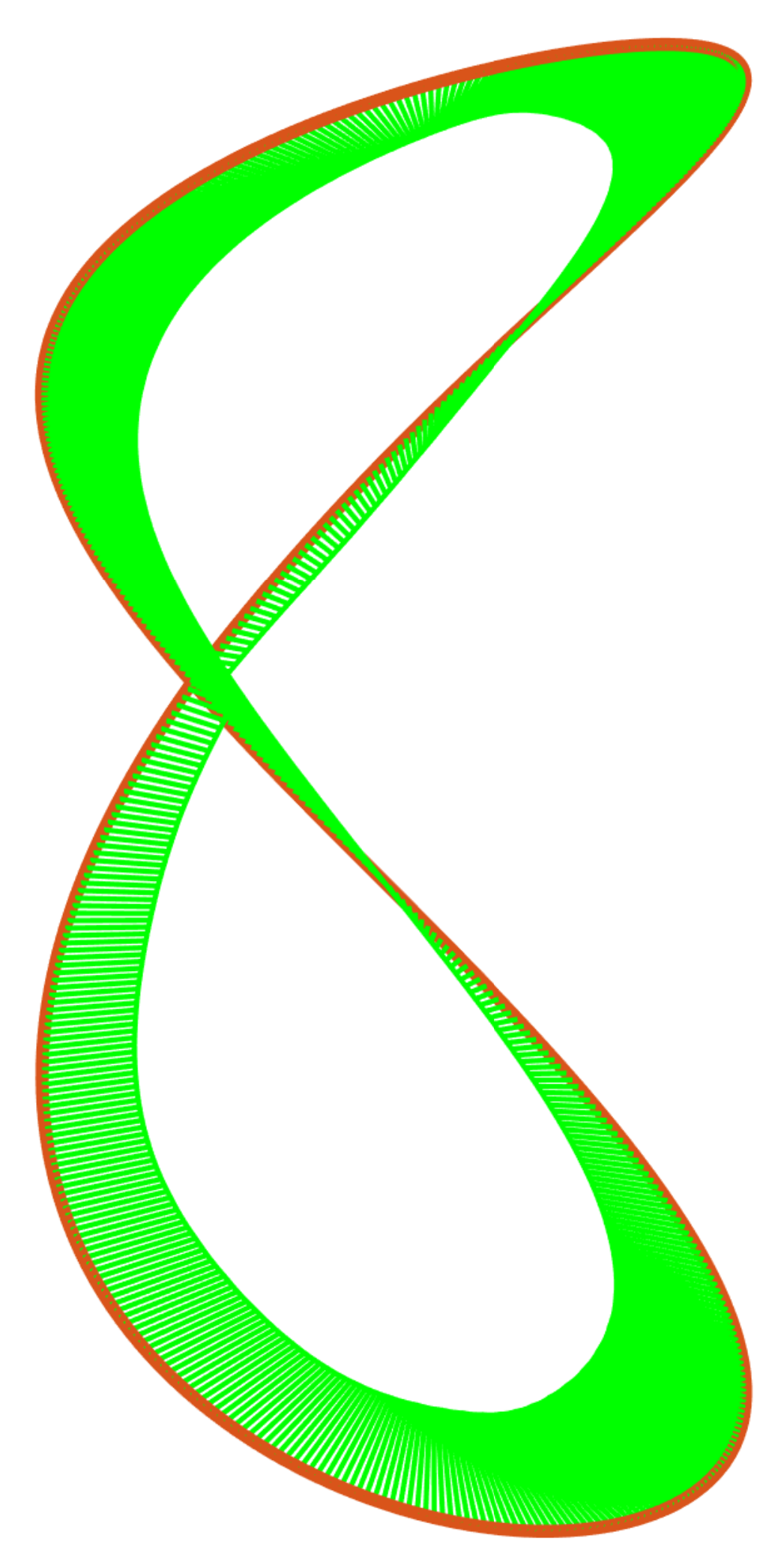}
      \label{fig:Vivi_r3_fairing_comb}
    \end{minipage}
  }
  \caption{
    The \textit{Viviani} example.
    Comparison of different selections of $r$
    (refer to Eq.~\eqref{eq:def_of_functional})
    in the fairing-PIA method.
    The fairing curve is in red,
        and the curvature comb is in green.
    }
  \label{fig:Vivi_Comb}
\end{figure*}

\begin{figure*}[!htb]
  \centering
  \subfigure[]{
    \includegraphics[width=2in]{./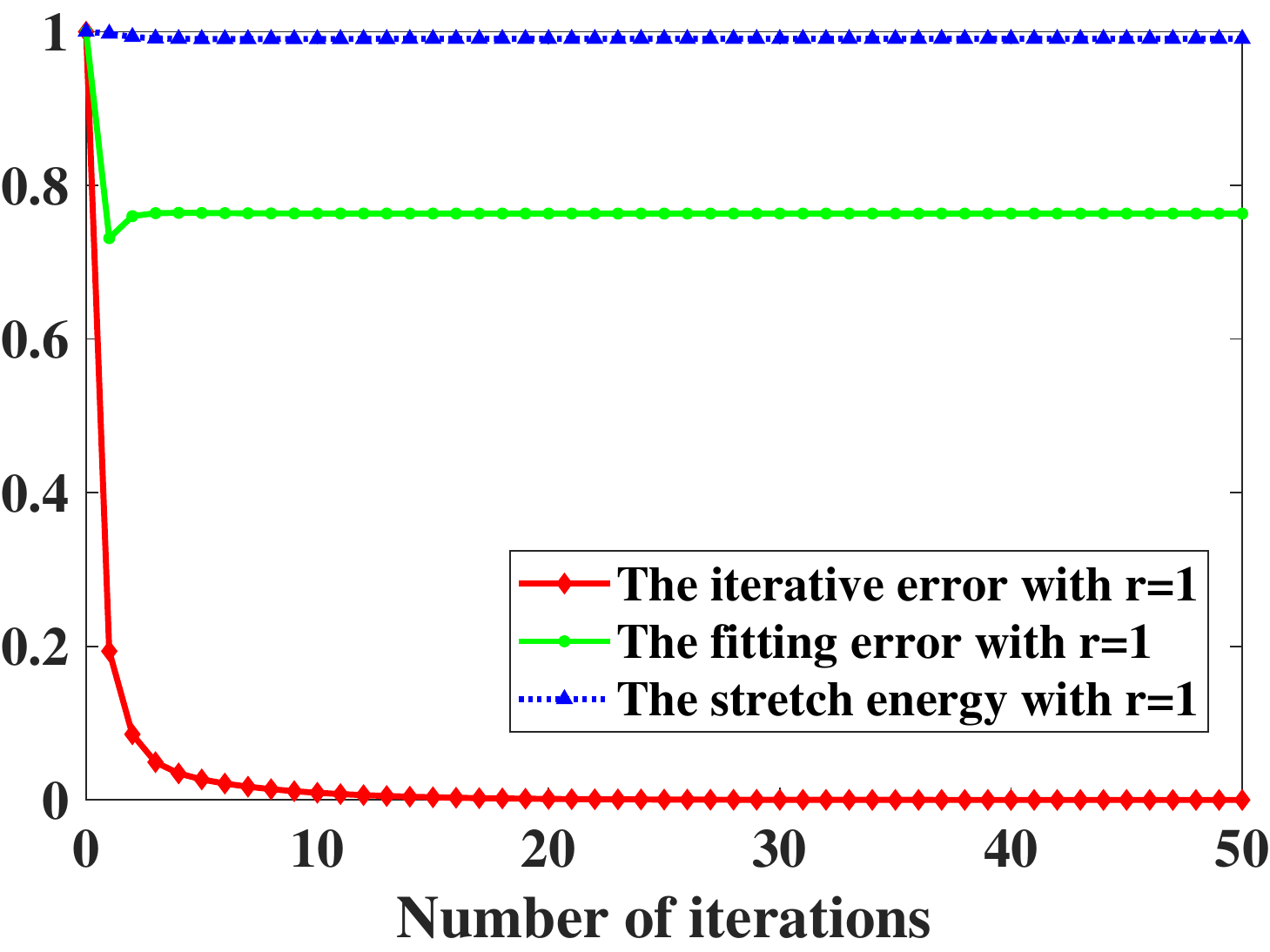}
    \label{fig:Vivi_ErrandEnergyPlot_r1}
  }
  \subfigure[]{
    \includegraphics[width=2in]{./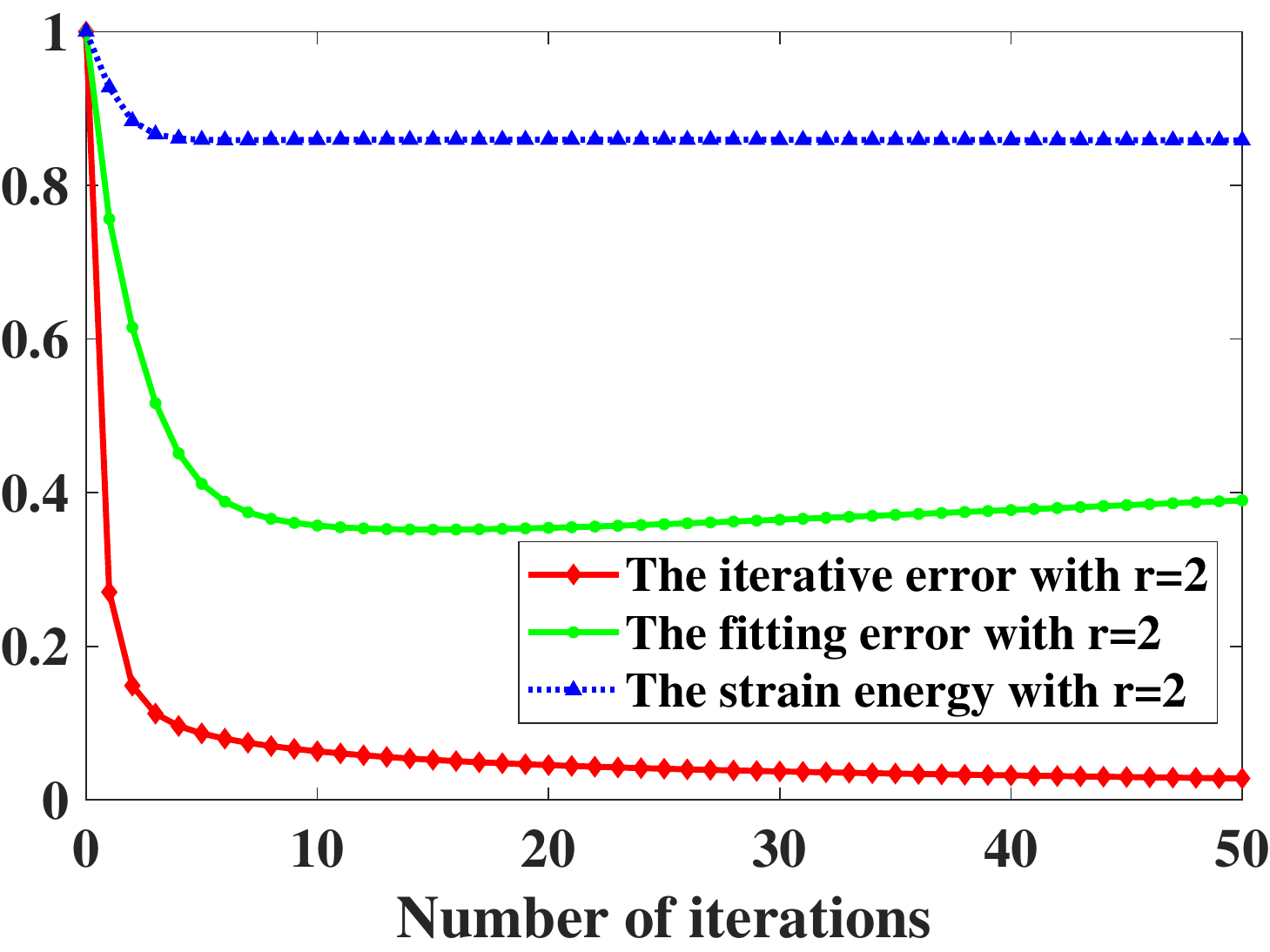}
    \label{fig:Vivi_ErrandEnergyPlot_r2}
  }
  \subfigure[]{
    \includegraphics[width=2in]{./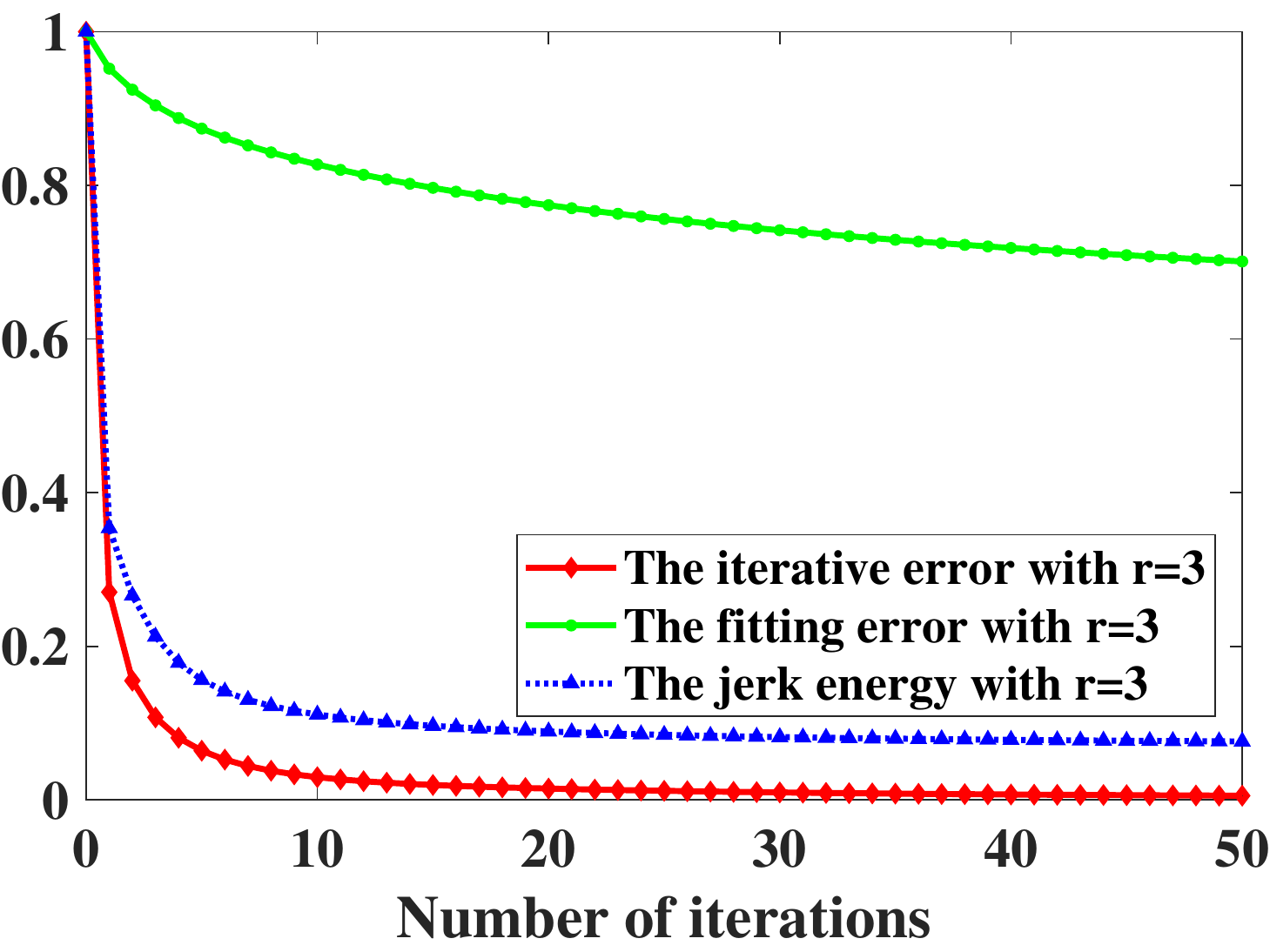}
    \label{fig:Vivi_ErrandEnergyPlot_r3}
  }
  \caption
  { 
    Diagrams of iteration v.s. relative fitting error, relative iterative error, relative energy of the \textit{Viviani} example for $r=1$ (a),
    $r=2$ (b), and $r=3$ (c). 
  }
  \label{fig:Vivi_Plot}
\end{figure*}

\subsection{Space Curves}

 In this section,
    an example of a space curve is presented to show the effects of the parameter $r$ in the fairing functional~\pref{eq:def_of_functional}.
 As stated in Remark~\ref{rmk2:equal_weight_energymeaning},
    different values of $r$, i.e., $r = 1, 2, 3$, 
    correspond to different energies, i.e., stretch energy, strain energy and jerk energy.
 In the space curve example (Fig.~\ref{fig:Vivi_Comb}),
    the $420$ data points are sampled uniformly from the Viviani curve,
    \begin{equation*}
          \left\{
          \begin{aligned}
             &x^2+y^2+z^2=25,\\
             &x^2+y^2-5x=0.
          \end{aligned}
          \right.
        \end{equation*}
 Moreover, we add Gaussian noise with $0$ mean and $0.005$ variance into
    the data set.
 Then, we select $85$ control points from the data points 
    as the initial control points to perform the fairing-PIA iterations with $r = 1,2,3$.
 The smoothing weights for the control points are assigned according to the following strategy.
 First, we calculate \emph{the absolute value of the second order difference}
    (ASOD) of each data point.
 The initial control points are chosen from the data points. 
 Thus, each control point has its own ASOD.
 The smoothing weights of the control points with the $20$ highest
    ASODs are set as $2\times10^{-4}$,
    and the rest are set as $1\times10^{-5}$ for $r=2$ and $3$.
 When $r=1$, the smoothing weights of the control points with the $20$ highest
    ASODs are set as $0.022$,
  and the rest are set as $0.02$.

 The fairing-PIA iteration procedures are illustrated in
    Fig.~\ref{fig:Vivi_Comb} for $r = 1,2,3$.
 As shown in Fig.~\ref{fig:Vivi_Comb},
    the fairness of curves is improved with iterations, 
    and the fairness for $r=2$ and $3$ is better than that for $r=1$.
 In addition,
    the diagrams of iteration v.s. relative iterative error, 
    relative fitting error,
    and relative energy are presented in Fig.~\ref{fig:Vivi_Plot}.
 Similar to those in the planar curve case, 
    the iterative error, fitting error, and energy rapidly decrease in the early iterations. 
 Moreover, the absolute fitting error and three types of absolute energies are shown in Table~\ref{TABLE:FitErrorandEnergy}.

\begin{figure*}[!htb]
  \centering
  \subfigure[]{
    \includegraphics[width=2.1in]{./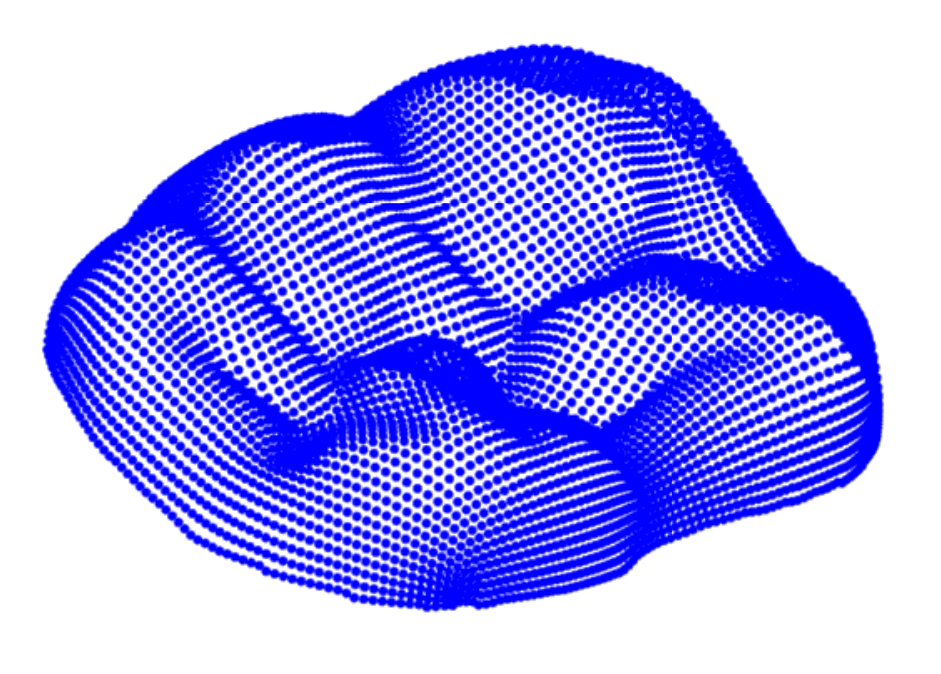}
    \label{fig:tooth_init_surf}
  }
  \subfigure[]{
    \includegraphics[width=1.8in]{./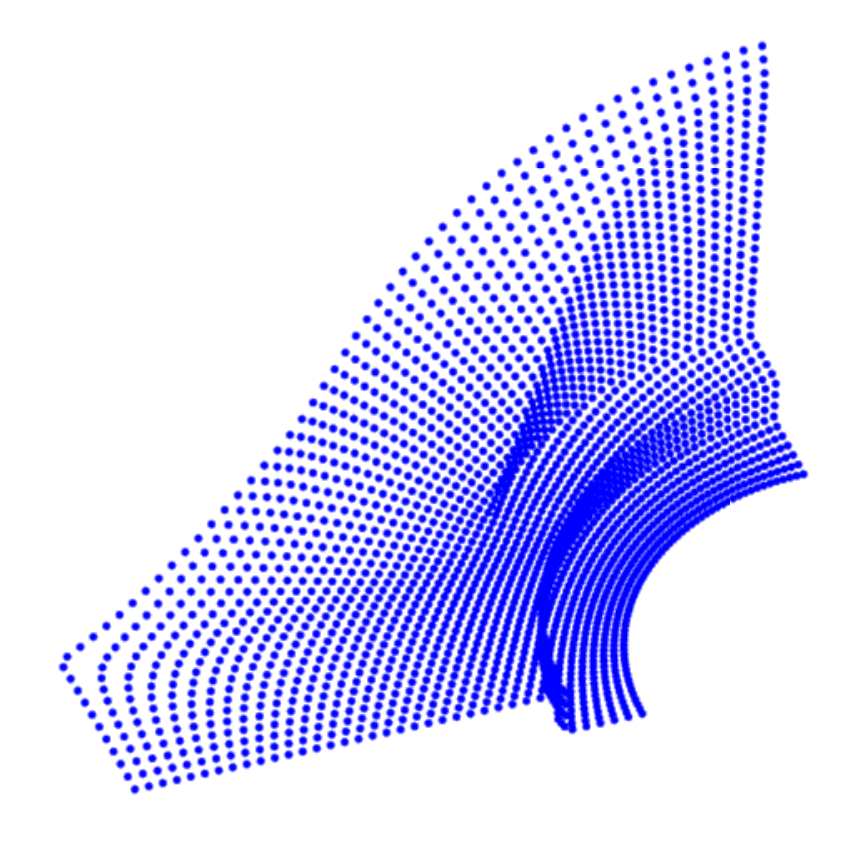}
    \label{fig:Fdisk_init_surf}
  }
  \subfigure[]{
    \includegraphics[width=1.5in]{./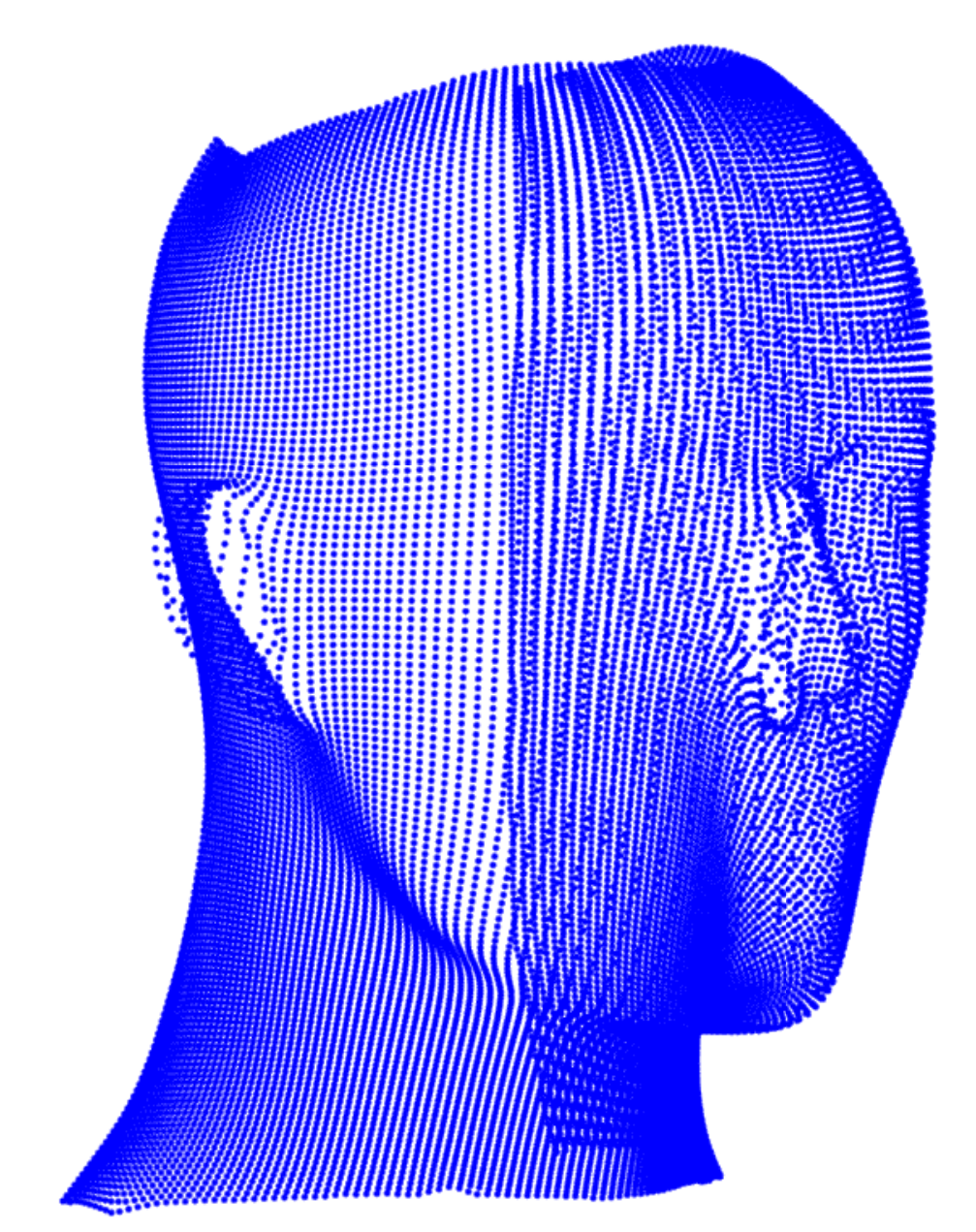}
    \label{fig:Mannequin_init_surf}
  }
  \caption
  {
    The three data models sampled from surfaces.
    (a) The \emph{tooth} model;
    (b) The \emph{fan\_disk} model;
    (c) The \emph{mannequin} model.
  }
  \label{fig:surface_model}
\end{figure*}

\begin{figure*}[!htb]
  \centering
\hspace{-0.5cm}
  \subfigure[]{
    \includegraphics[width=1.8in]{./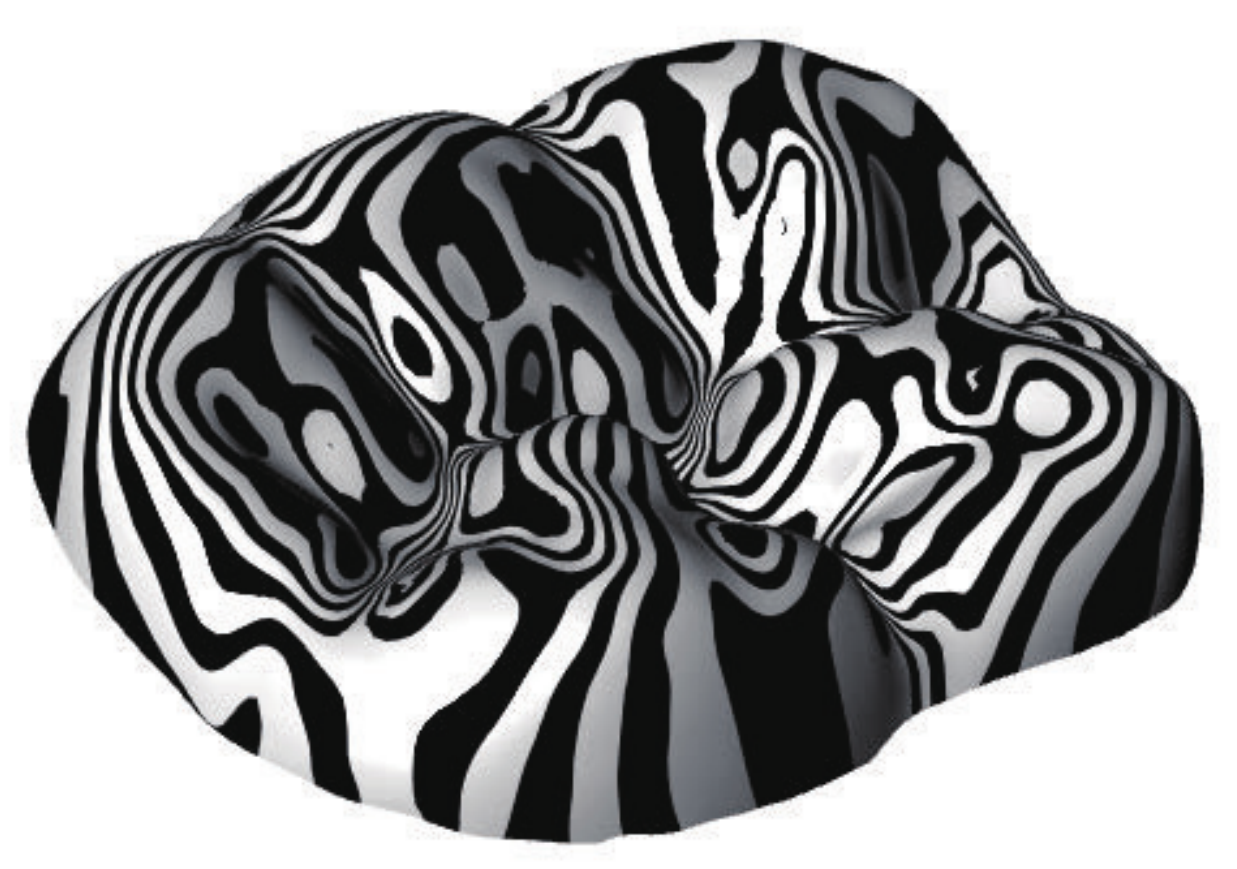}
    \label{fig:tooth_fit_zebra}
  }
  \subfigure[]{
    \includegraphics[width=1.8in]{./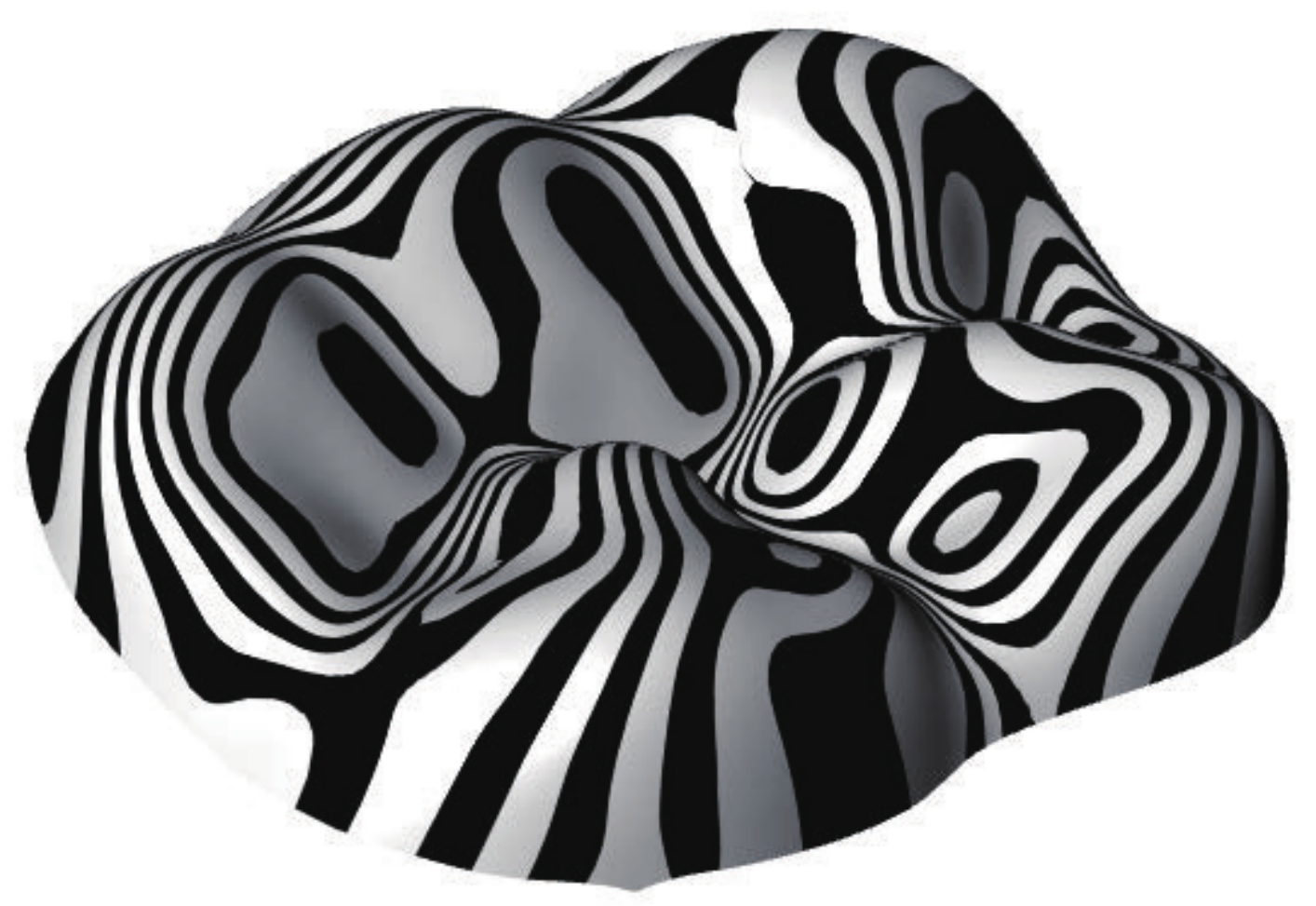}
    \label{fig:tooth_energy_zebra}
  }
  \subfigure[]{
    \includegraphics[width=1.8in]{./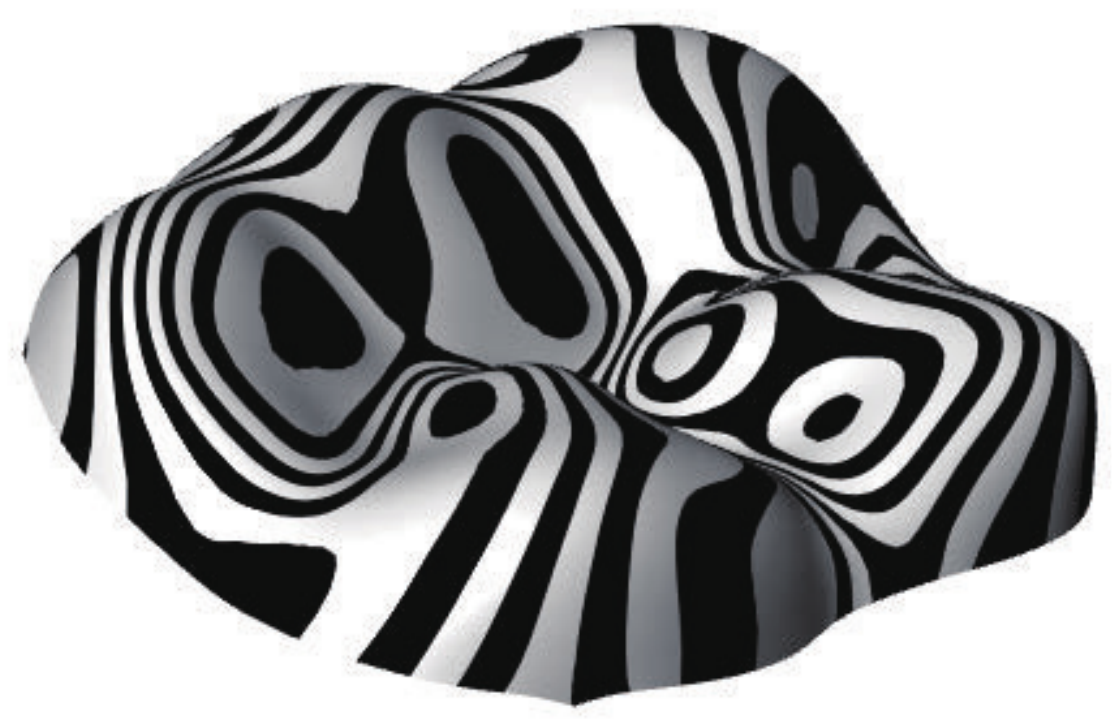}
    \label{fig:tooth_FPIA_zebra}
  }

   \subfigure[]{
    \includegraphics[width=1.8in]{./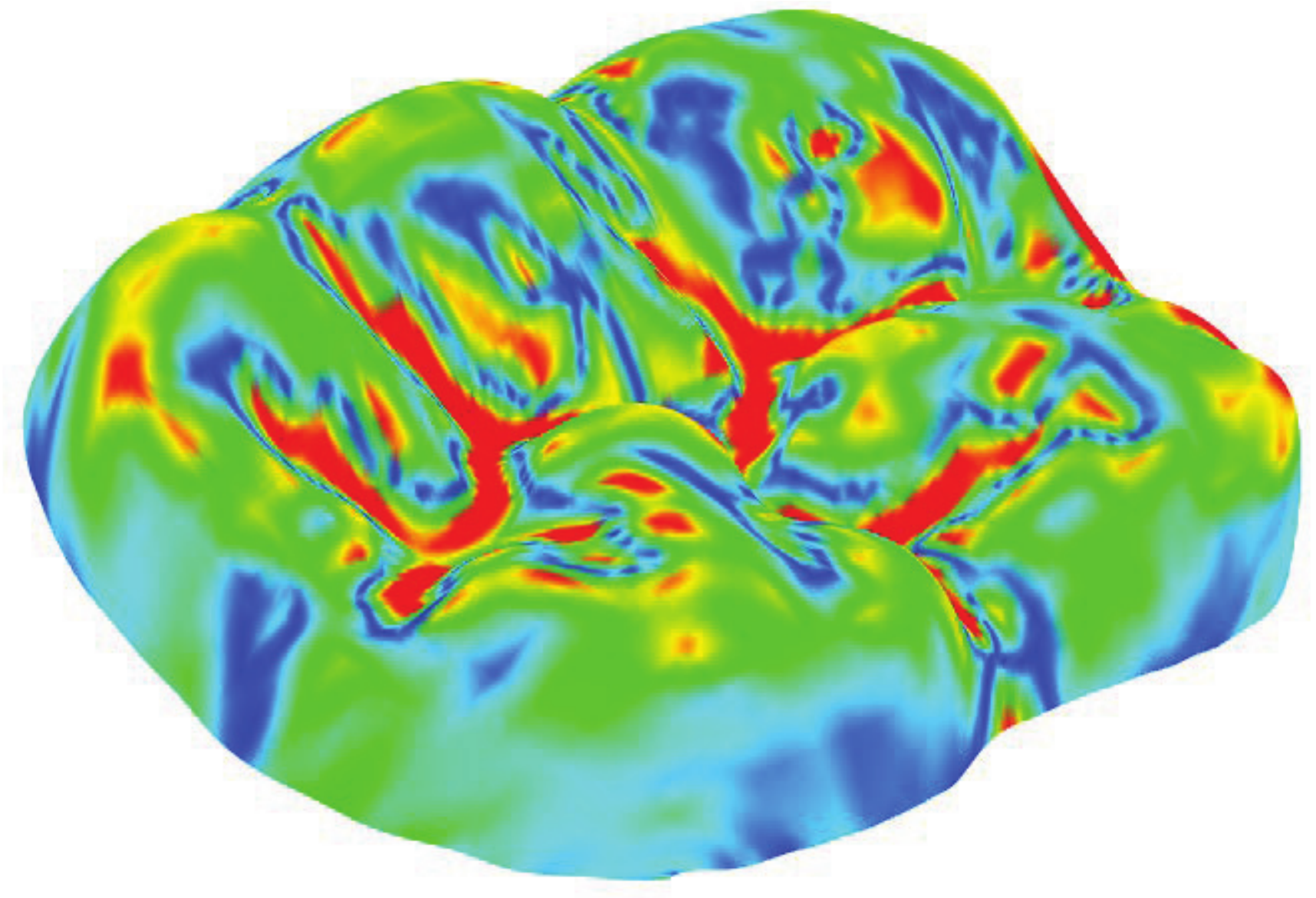}
    \label{fig:tooth_fit_curvature}
  }
\hspace{-0.35cm}
  \subfigure[]{
    \includegraphics[width=1.8in]{./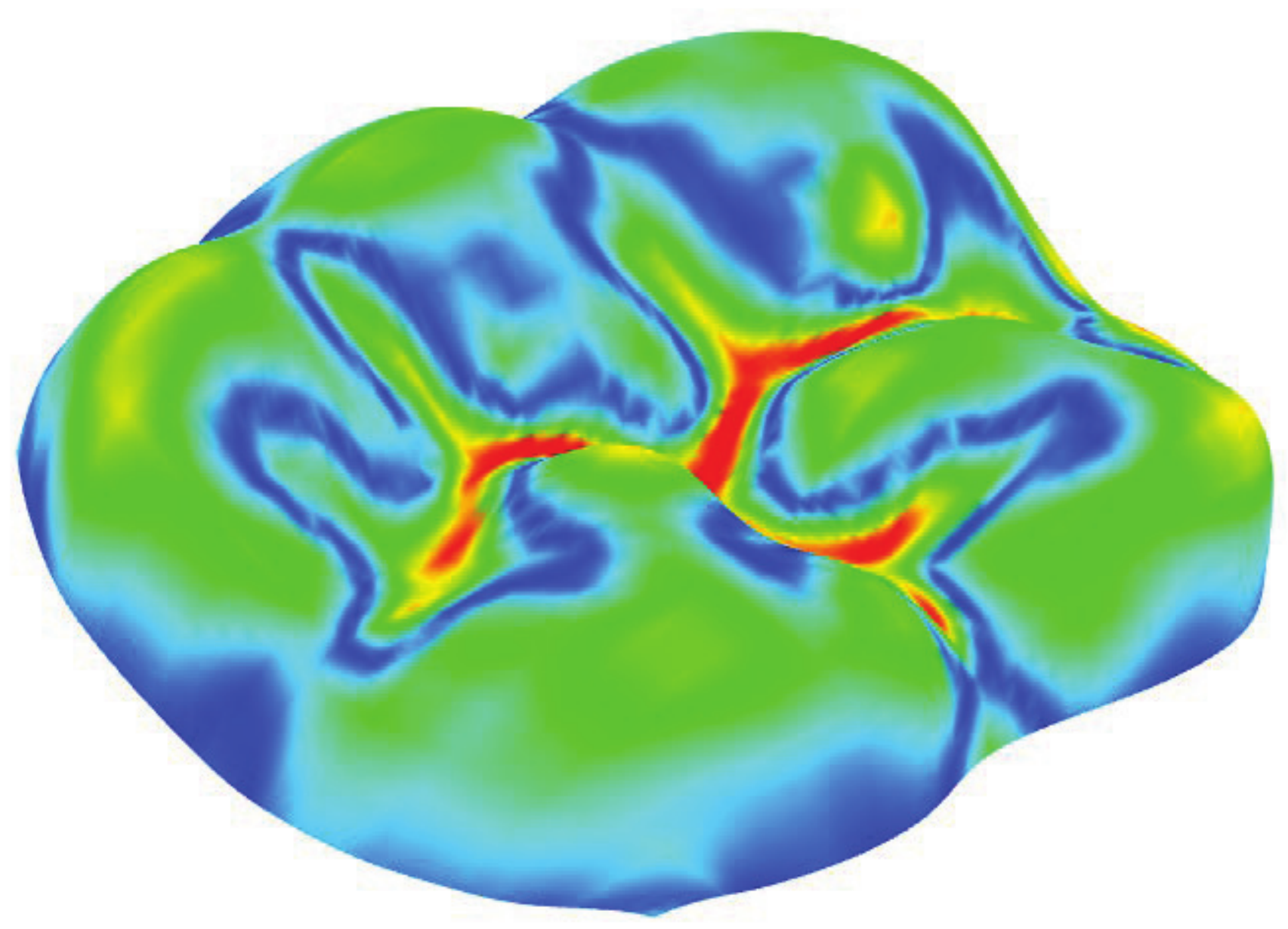}
    \label{fig:tooth_energy_curvature}
  }
\hspace{-0.35cm}
  \subfigure[]{
    \includegraphics[width=2.15in]{./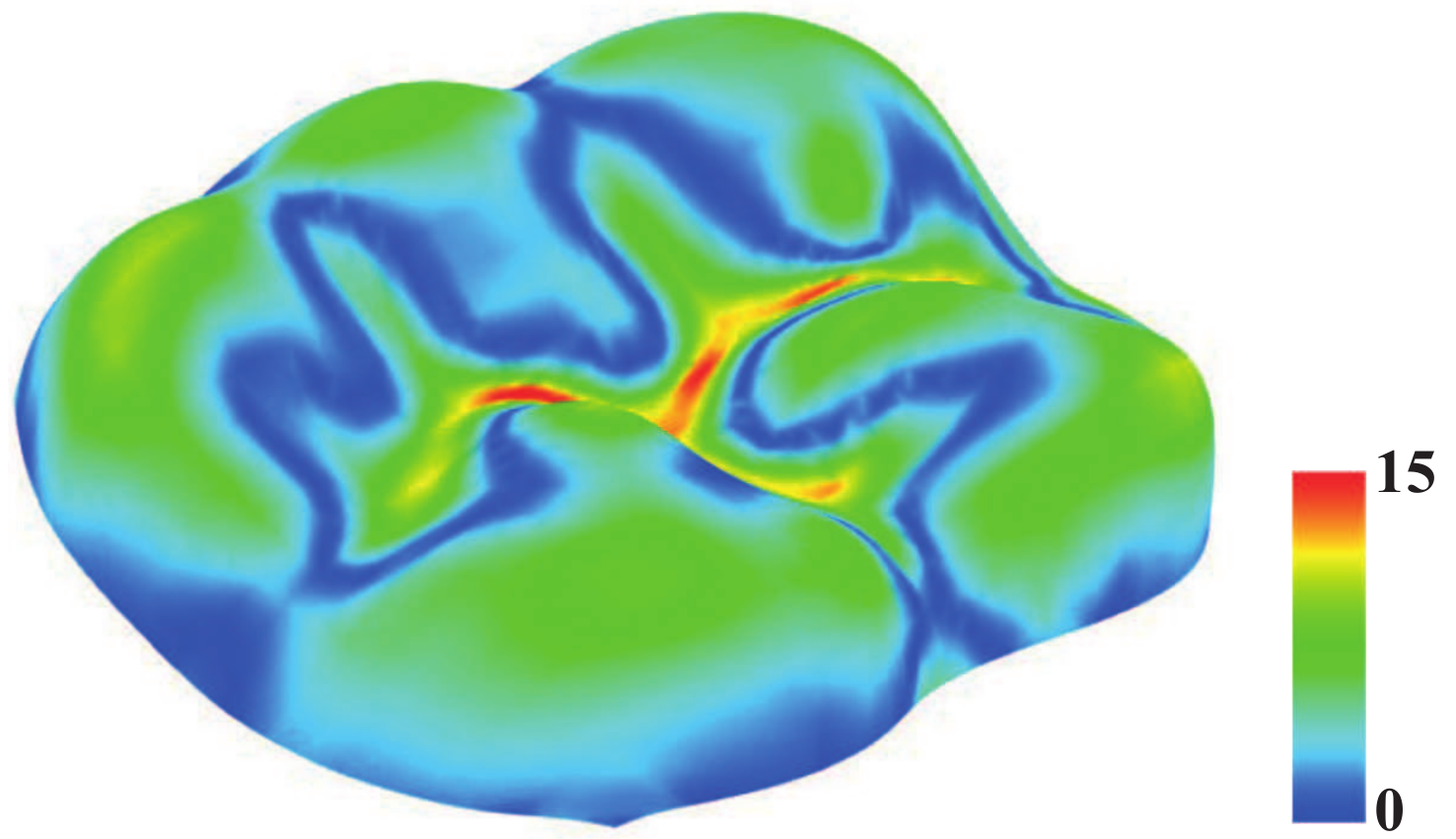}
    \label{fig:tooth_FPIA_curvature}
  }
  \caption
  {
    The \textit{tooth} model.
  (a, b, c) Zebra on the fitting surface,
    fairing surface by the traditional energy minimization method,
    and fairing surface by fairing-PIA, respectively.
  (d, e, f) Absolute value of mean curvature on the fitting surface,
    fairing surface by the energy minimization method,
    and fairing surface by fairing-PIA, respectively.
  }
  \label{fig:tooth_zebra_mean}
\end{figure*}

\begin{figure*}[!htb]
  \centering
\hspace{-1cm}
  \subfigure[]{
    \includegraphics[width=1.8in]{./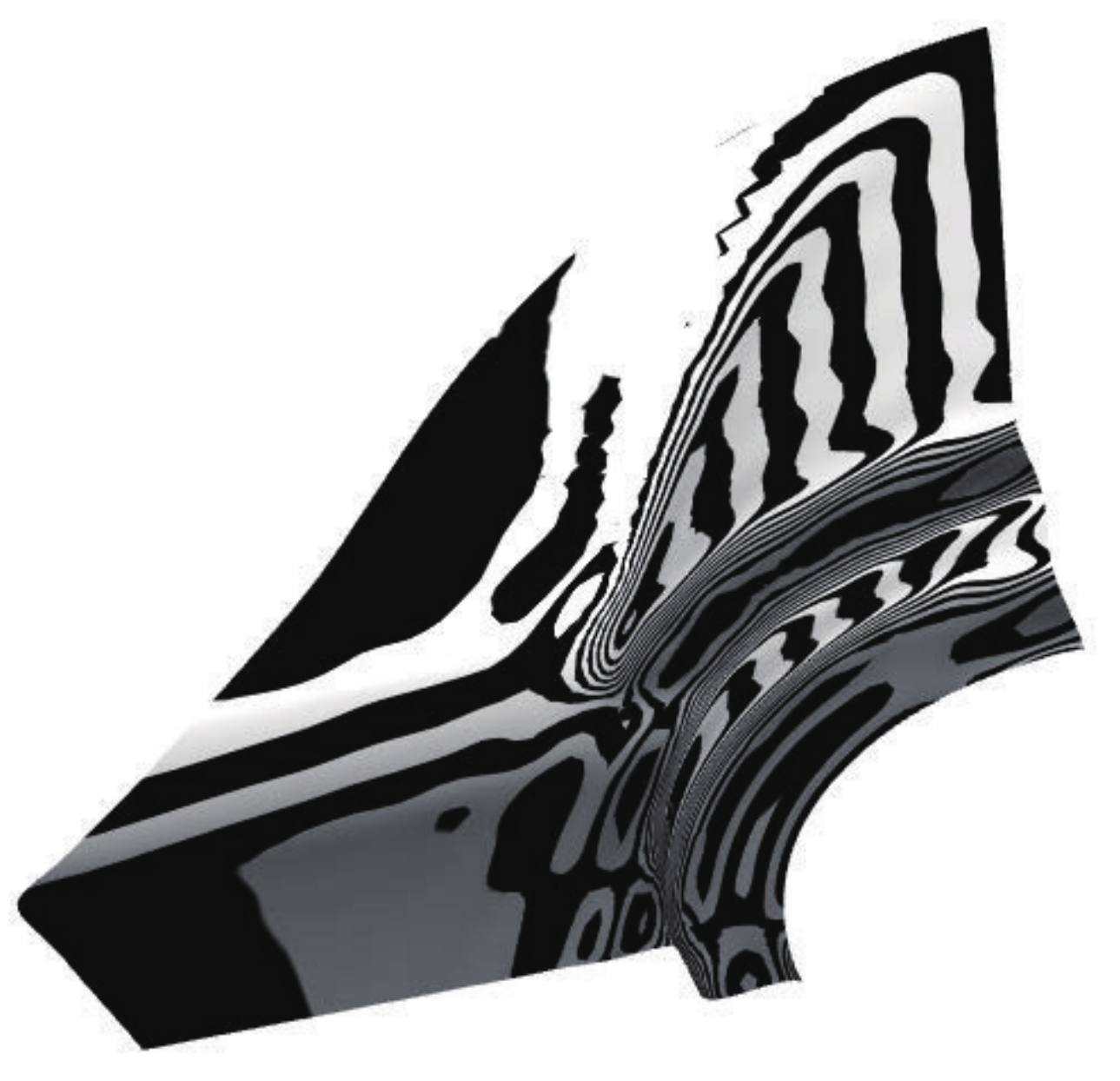}
    \label{fig:Fdisk_fit_zebra}
  }
  \subfigure[]{
    \includegraphics[width=1.8in]{./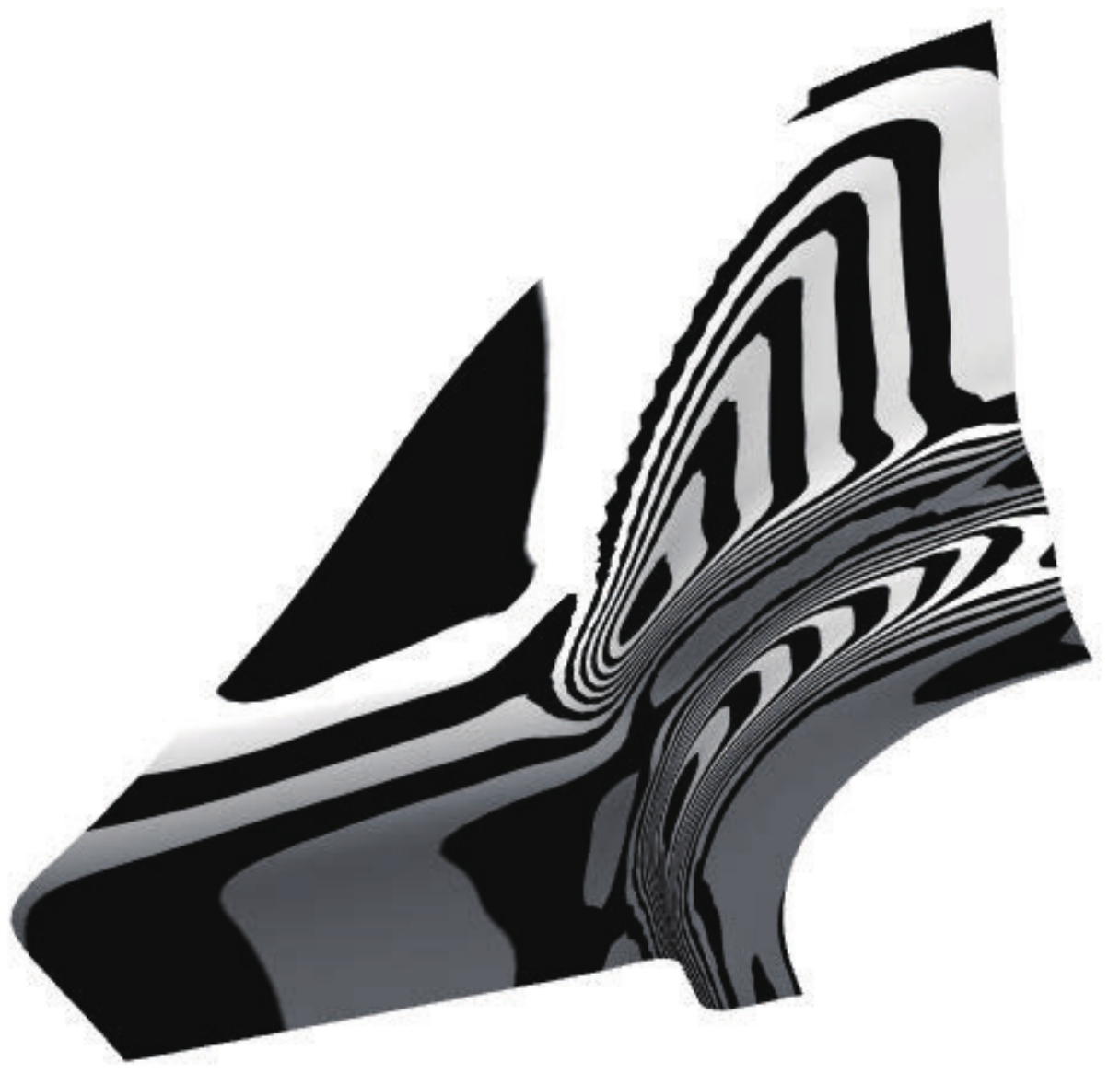}
    \label{fig:Fdisk_energy_zebra}
  }
  \subfigure[]{
    \includegraphics[width=1.8in]{./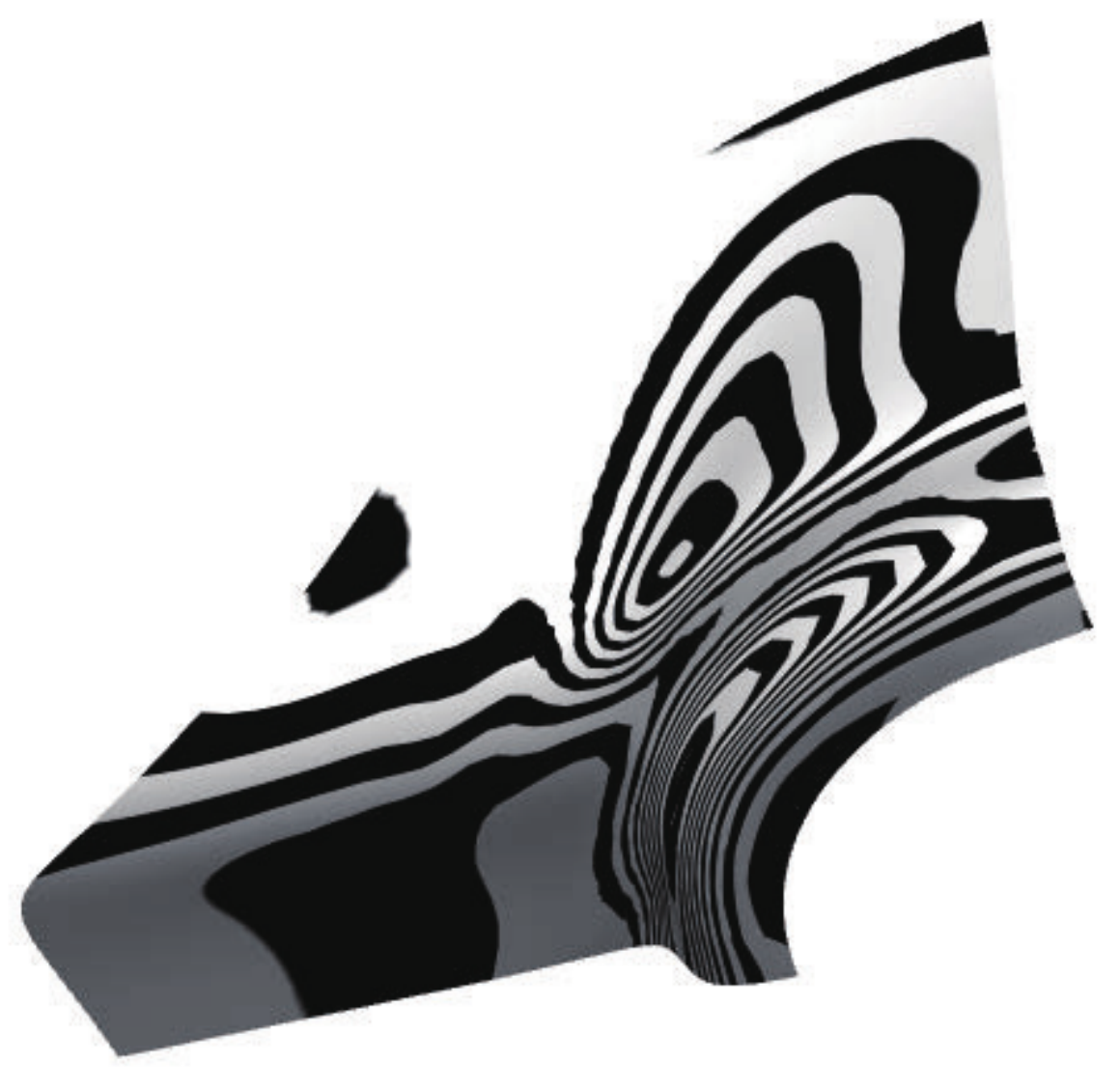}
    \label{fig:Fdisk_FPIA_zebra}
  }

 \hspace{0.5cm}
  \subfigure[]{
    \includegraphics[width=1.8in]{./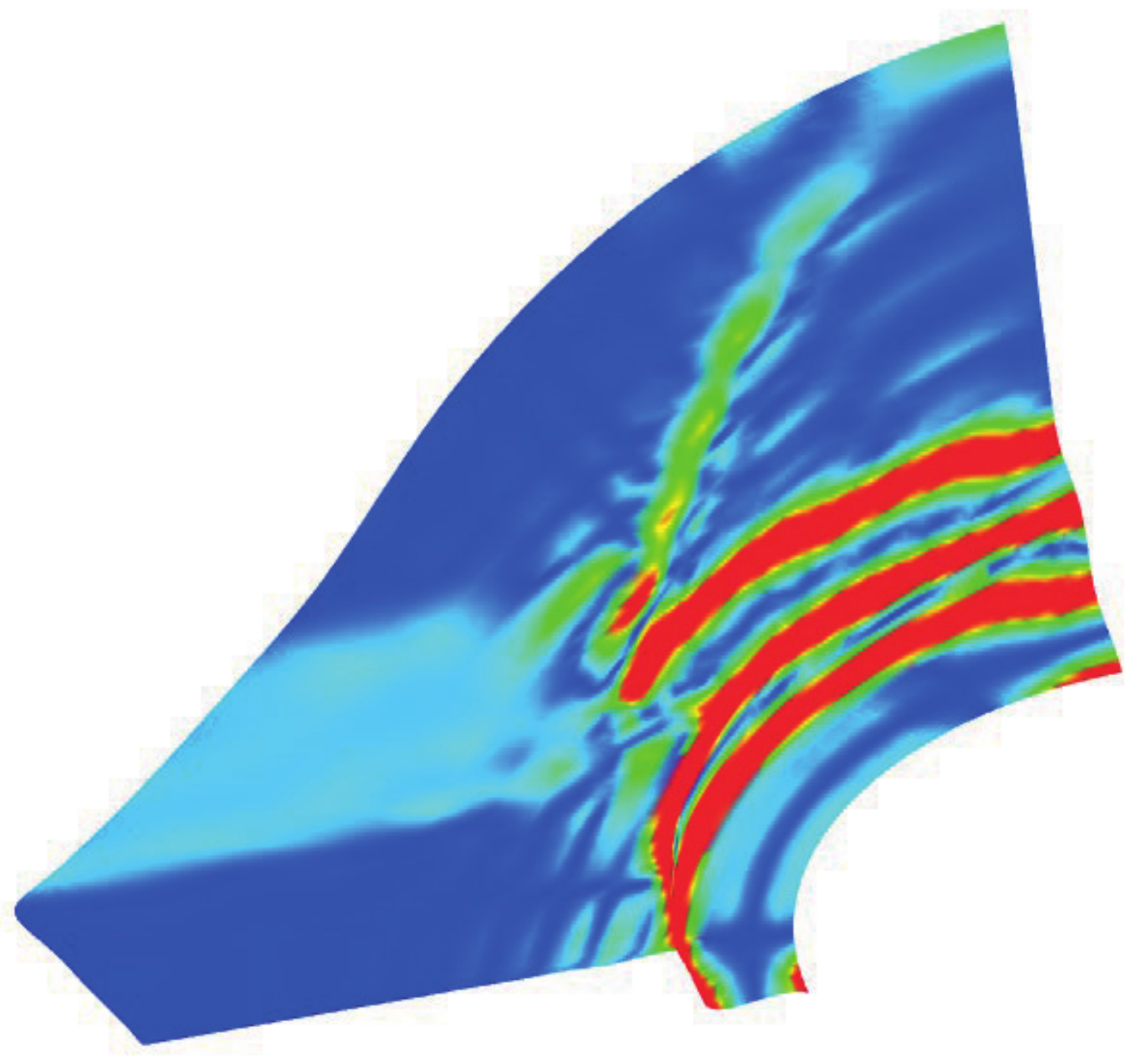}
    \label{fig:Fdisk_fit_curvature}
  }
  \subfigure[]{
    \includegraphics[width=1.8in]{./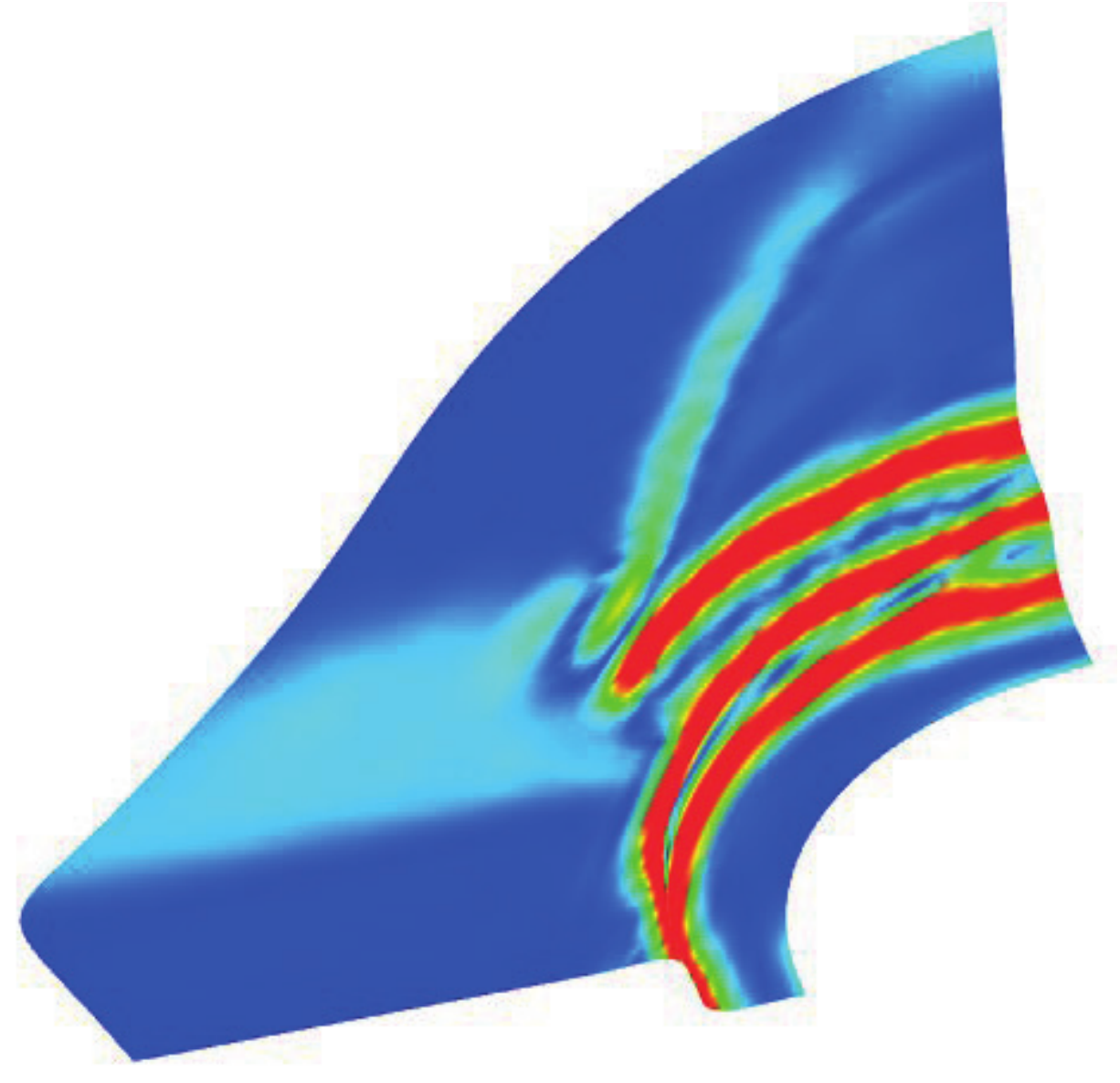}
    \label{fig:Fdisk_energy_curvature}
  }
  \subfigure[]{
    \includegraphics[width=2.2in]{./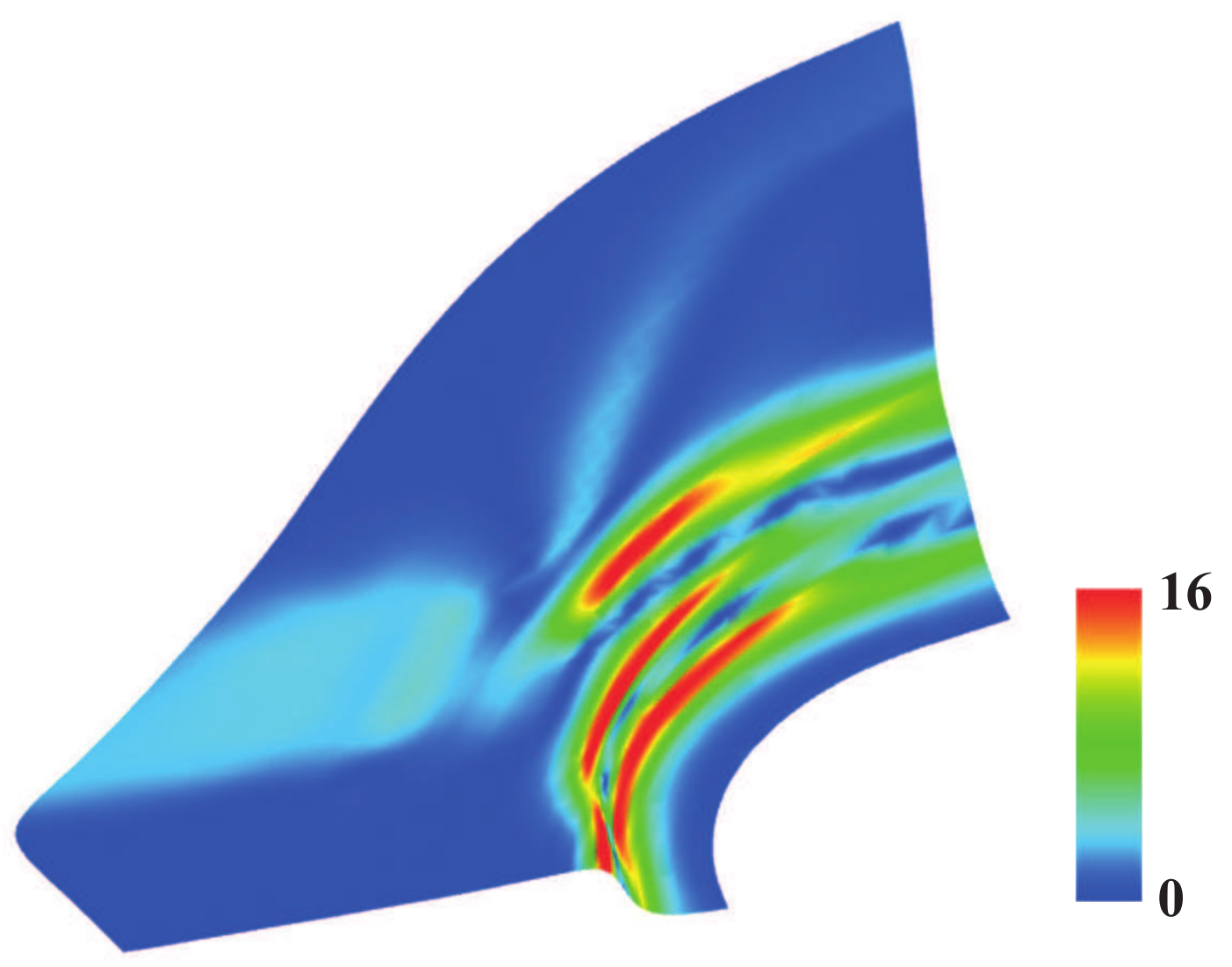}
    \label{fig:Fdisk_FPIA_curvature}
  }
  \caption
  {
    The \textit{fan\_disk} model.
  (a, b, c) Zebra on the fitting surface,
    fairing surface by the traditional energy minimization method,
    and fairing surface by fairing-PIA, respectively.
  (d, e, f) Absolute value of mean curvature on the fitting surface,
    fairing surface by the energy minimization method,
    and fairing surface by fairing-PIA, respectively.
  }
  \label{fig:fan_zebra_mean}
\end{figure*}

\begin{figure*}[!htb]
  \centering
  \subfigure[]{
    \includegraphics[width=1.5in]{./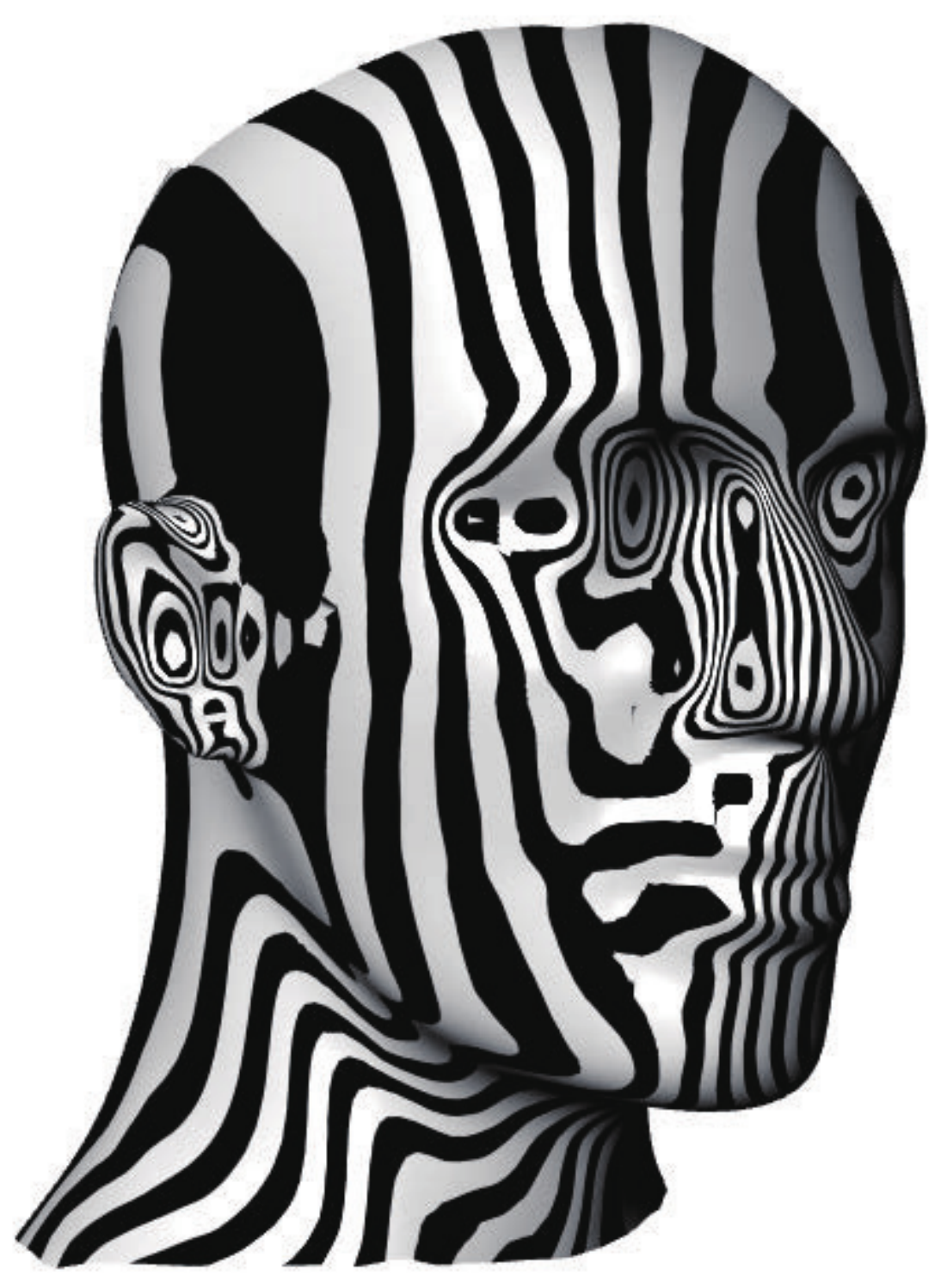}
    \label{fig:Mannequin_fit_zebra}
  }
  \subfigure[]{
    \includegraphics[width=1.5in]{./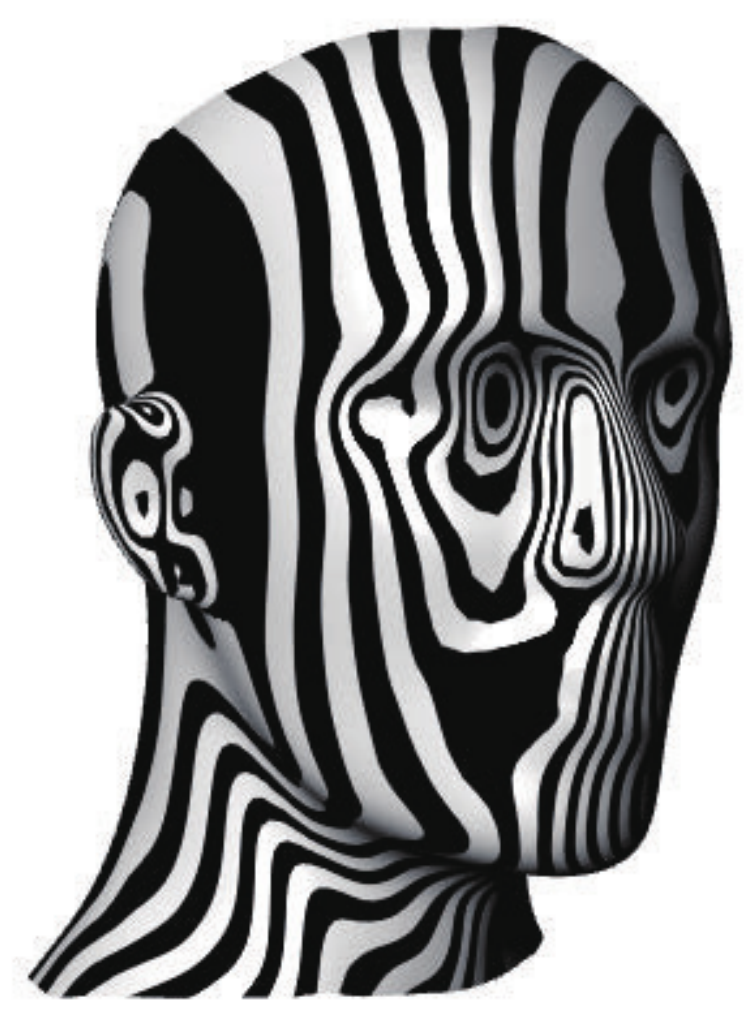}
    \label{fig:Mannequin_energy_zebra}
  }
  \subfigure[]{
    \includegraphics[width=1.5in]{./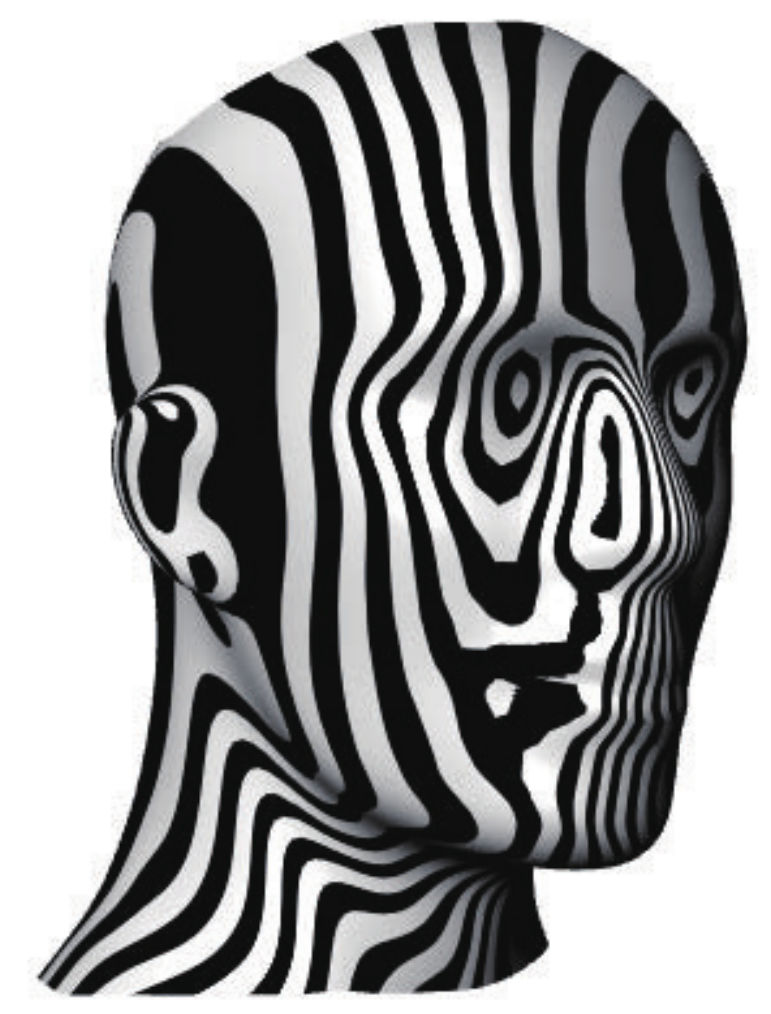}
    \label{fig:Mannequin_FPIA_zebra}
  }

 \hspace{1cm}
  \subfigure[]{
    \includegraphics[width=1.5in]{./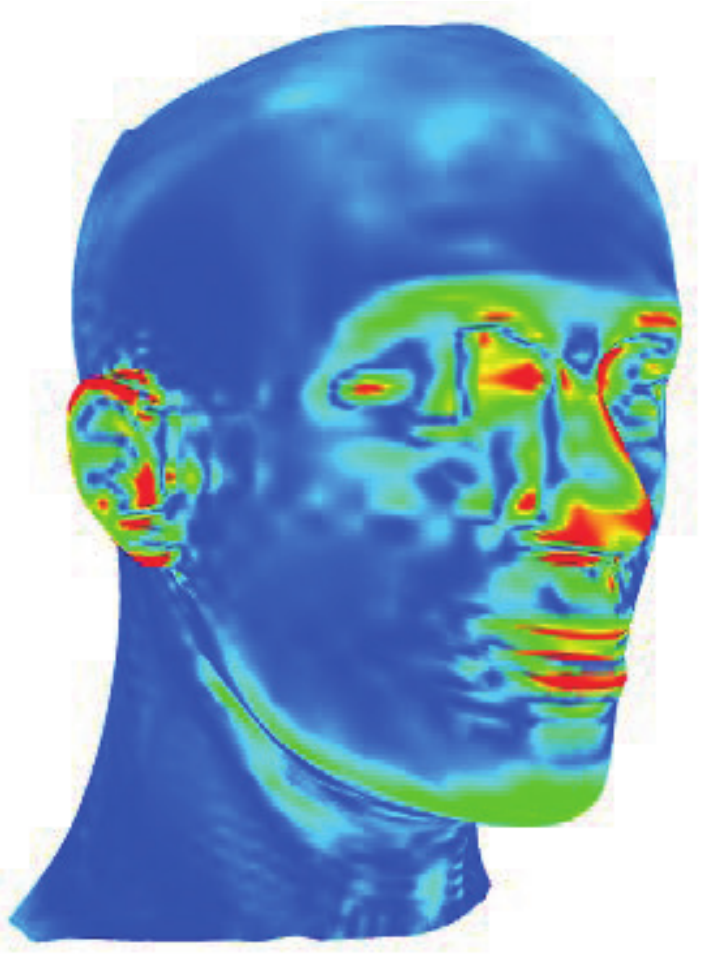}
    \label{fig:Mannequin_fit_curvature}
  }
  \subfigure[]{
    \includegraphics[width=1.5in]{./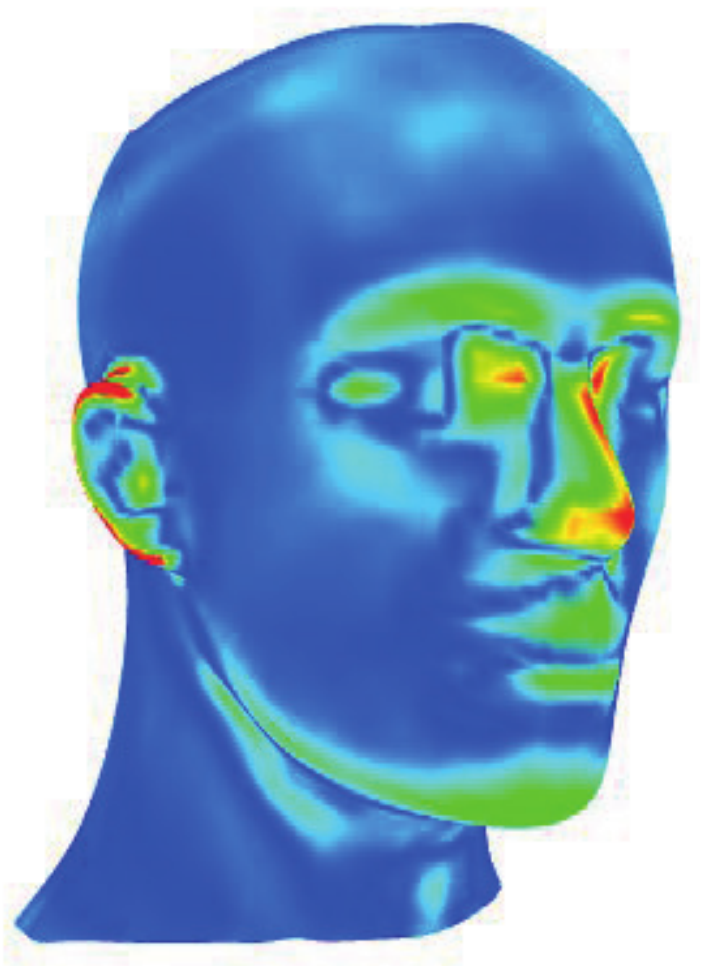}
    \label{fig:Mannequin_energy_curvature}
  }
  \subfigure[]{
    \includegraphics[width=1.75in]{./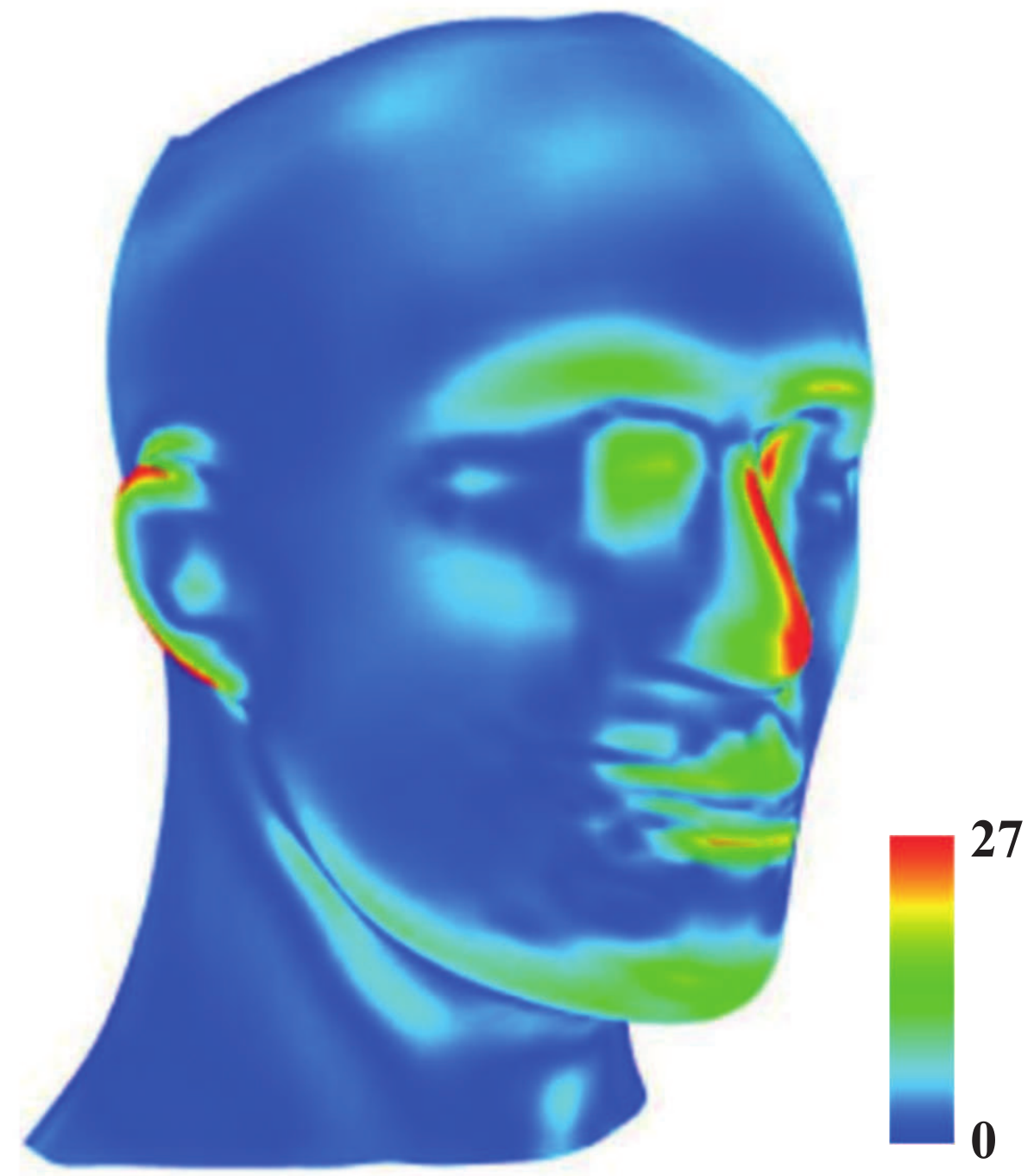}
    \label{fig:Mannequin_FPIA_curvature}
  }
  \caption
  {
    The \textit{mannequin} model.
  (a, b, c) Zebra on the fitting surface,
    fairing surface by the traditional energy minimization method,
    and fairing surface by fairing-PIA, respectively.
  (d, e, f) Absolute value of mean curvature on the fitting surface,
    fairing surface by the energy minimization method,
    and fairing surface by fairing-PIA, respectively.
  }
  \label{fig:man_zebra_mean}
\end{figure*}

\subsection{Surface examples}

 In this section,
    we compare the fairing-PIA method with the traditional energy minimization method~\pref{eq:total_energy} on three models, 
    i.e., \textit{tooth}, \textit{fan\_disk}, \textit{mannequin} (Fig.~\ref{fig:surface_model}).
 The three data models consist of $81\times61$, $41\times61$ and $121\times161$ data points.
 The statistics of the fairing-PIA on the three surface examples are listed in Table~\ref{TABLE:iteration_data}.

 For each model, we generate the fitting surface
    ($\omega =0$ in Eq.~\pref{eq:total_energy}),
    the fairing surface by the traditional energy minimization method, 
    and the fairing surface by the fairing-PIA developed in this study.
 The zebra and \emph{absolute value of mean curvature} (AVMC) are rendered on the surfaces
    to evaluate the fairness of the surfaces.
 In each example, the AVMCs larger than the maximum AVMC of the fairing surface by the fairing-PIA method 
    are mapped to the maximum AVMC of the fairing surface by fairing-PIA 
    to generate a desirable mean curvature distribution image. 
 The minimum and maximum mean curvatures of the surfaces are listed in Table~\ref{TABLE:SurfaceCurvature}.
 The absolute fitting error and absolute energy are given in Table~\ref{TABLE:FitErrorandEnergy}.
 
 For the \textit{tooth} model,
    the smoothing weight $\omega$~\pref{eq:obj_eng} is taken as $0.002$ (Figs.~\ref{fig:tooth_zebra_mean}(b) and~\ref{fig:tooth_zebra_mean}(e))
    in the traditional energy minimization method.
 Some smoothing weights are increased to $0.006$ in the fairing-PIA to improve the fairness of some regions,
    whereas the other smoothing weights remain $0.002$ (Figs.~\ref{fig:tooth_zebra_mean}(c) and ~\ref{fig:tooth_zebra_mean}(f)).
 For the \textit{fan\_disk} model,
    the smoothing weight in the traditional energy minimization method is taken 
    as $\omega = 0.001$~\pref{eq:obj_eng}(Figs.~\ref{fig:fan_zebra_mean}(b) and ~\ref{fig:fan_zebra_mean}(e)).
  Similarly, 
    the corresponding smoothing weights are increased to $0.015$
    to improve the fairness of some regions (Figs.~\ref{fig:fan_zebra_mean}(c) and ~\ref{fig:fan_zebra_mean}(f)).
    
 For the \textit{mannequin} model,
    the smoothing weight in the traditional energy minimization method~\pref{eq:obj_eng} is set as $\omega = 1 \times 10^{-4}$.
 The generated fairing surface is illustrated in Figs.~\ref{fig:man_zebra_mean}(b) and ~\ref{fig:man_zebra_mean}(e).
 To improve the fairness of the ear region, 
    we increase the smoothing weights of the control points at the ear region to $5 \times 10^{-4}$; 
    to keep the feature of the mouth region by reducing the fitting error, we decrease those at the mouth region to $2 \times 10^{-5}$.
 With the modified smoothing weights,
    the fairing surface is produced by fairing-PIA (Figs.~\ref{fig:man_zebra_mean}(c) and ~\ref{fig:man_zebra_mean}(f)).
 The fairness of the ear region is improved, and the feature of the mouth region becomes clear.
 
\begin{table*}[!htb] 
	\centering
	\caption{Minimum and maximum mean curvature value of surfaces.}
  \begin{threeparttable}
  \begin{tabular}{lllllll}
    \toprule
    \multirow{2}{*}{Type} &
    \multicolumn{2}{c}{\textit{Tooth}} &
    \multicolumn{2}{c}{\textit{Fan\_disk}} &
    \multicolumn{2}{c}{\textit{Mannequin}}
    \\
    \cmidrule(rl){2-3} \cmidrule(rl){4-5} \cmidrule(rl){6-7}
    &  MinCur\tnote{a} & MaxCur\tnote{b} & MinCur & MaxCur & MinCur & MaxCur \\
    \midrule
    Fitting surface & -15.8553 & 1791.7996 & -33.7973 & 42.6118  & -23.3849 & 27.0701 \\
    Fairing surface by energy minimization & -11.9428 & 19.2055 & -27.8114 & 31.155 & -20.3517 & 27.0489 \\
    Fairing surface by fairing-PIA & $\mathbf{-9.7458}$ & $\mathbf{14.9174}$ & $\mathbf{-15.3867}$ & $\mathbf{15.7318}$ &  $\mathbf{-20.3517}$ & $\mathbf{23.3825}$ \\
    \bottomrule
  \end{tabular}
  \begin{tablenotes}
    \item[a] Minimum mean curvature value.
    \item[b] Maximum mean curvature value.
  \end{tablenotes}
  \end{threeparttable}
  \label{TABLE:SurfaceCurvature}
\end{table*}

\section{Conclusion}

In this study, we present the fairing-PIA, 
    a novel progressive iterative approximation for fairing curves and surface generation. 
The convergence analysis of the fairing-PIA is presented. 
Fairing-PIA is both a global and local method, 
    where many parameters that can finely improve fairness exist. 
Moreover, fairing-PIA is flexible. 
Thus, the smoothing weights and knot vector can be changed in the iterations. 
Finally, many examples are presented to demonstrate the efficiency and effectiveness of the developed fairing-PIA method.

\section*{Acknowledgements}
This work is supported by the National Natural Science Foundation of China under Grant nos. 61872316, 61932018, and the National Key R\&D Plan of China under Grant no. 2020YFB1708900.

\section*{References}

\bibliography{mybib}

\newpage
\appendix
\section*{Appendix}
\label{Appendix}
In this section,
   we show the convergence of the fairing-PIA method when the diagonal elements of matrix $\bm{\Omega}$ are equal to a constant value $\omega$,
   and coefficient matrix $\bm{B}$ is a positive semidefinite matrix.

\begin{lemma}
Let the diagonal elements of matrix $\bm{\Lambda}$
   satisfy $0<\mu_i<\frac{2}{\lambda_{\text{max}}\left(\bm{B}\right)},\;i=1,2,\cdots,n$,
   where $\lambda_{\text{max}}$ is the largest eigenvalue of $\bm{B}$.
The eigenvalues $\beta$ of the matrix $\bm{\Lambda B}$\eqref{eq:equal_weight_ctrlpts} are all real and satisfy $0\le\beta<2$.
  \label{lemma:eigenvalue_range}
\end{lemma}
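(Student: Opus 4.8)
The plan is to reduce the statement to standard facts about products of symmetric positive (semi)definite matrices, following the template of the proof of Theorem~\ref{thrm:equal_weight__definite_convergent} but paying attention to the kernel that appears now that $\bm{B}$ is only semidefinite. First I would record that $\bm{B}=\left(1-\omega\right)\bm{N}^{T}\bm{N}+\omega\bm{D}_{r}$ is real symmetric and positive semidefinite: $\bm{N}^{T}\bm{N}$ is a Gram matrix, $\bm{D}_{r}$ is the Gram matrix of the functions $N_{r,1},\dots,N_{r,n}$ (its $(l,j)$ entry is $\int_{t_1}^{t_m}N_{r,l}(t)N_{r,j}(t)\,dt$), and since $0\le\omega\le 1$, $\bm{B}$ is a nonnegative combination of positive semidefinite matrices. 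If $\bm{B}=\bm{0}$ the assertion is trivial, so assume $\lambda_{\text{max}}(\bm{B})>0$.

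Since every $\mu_{i}>0$, the diagonal matrix $\bm{\Lambda}$ is symmetric positive definite, so $\bm{\Lambda}^{1/2}$ and $\bm{\Lambda}^{-1/2}$ exist. The key observation is the similarity
\begin{equation*}
\bm{\Lambda B}=\bm{\Lambda}^{1/2}\left(\bm{\Lambda}^{1/2}\bm{B}\,\bm{\Lambda}^{1/2}\right)\bm{\Lambda}^{-1/2},
\end{equation*}
so $\bm{\Lambda B}$ has exactly the eigenvalues of $\bm{\Lambda}^{1/2}\bm{B}\,\bm{\Lambda}^{1/2}$. The latter is real symmetric, hence every eigenvalue $\beta$ of $\bm{\Lambda B}$ is real; and for any $\bm{x}$ one has $\bm{x}^{T}\bm{\Lambda}^{1/2}\bm{B}\,\bm{\Lambda}^{1/2}\bm{x}=\left(\bm{\Lambda}^{1/2}\bm{x}\right)^{T}\bm{B}\left(\bm{\Lambda}^{1/2}\bm{x}\right)\ge 0$, so $\bm{\Lambda}^{1/2}\bm{B}\,\bm{\Lambda}^{1/2}$ is positive semidefinite and therefore $\beta\ge 0$. (Alternatively one can copy the main-text argument verbatim: write $\bm{B}=\bm{C}^{T}\bm{C}$ with $\bm{C}=\bm{S}^{1/2}\bm{U}^{T}$, multiply the eigenequation $\bm{\Lambda}\bm{C}^{T}\bm{C}\bm{v}=\beta\bm{v}$ on the left by $\bm{C}$, and split into the case $\bm{Cv}\ne\bm{0}$, where $\beta$ is an eigenvalue of the symmetric positive semidefinite matrix $\bm{C\Lambda C}^{T}$, and the case $\bm{Cv}=\bm{0}$, where $\beta\bm{v}=\bm{\Lambda}\bm{C}^{T}(\bm{Cv})=\bm{0}$ forces $\beta=0$; this second case is precisely the place where positive definiteness was used in Theorem~\ref{thrm:equal_weight__definite_convergent}.)

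For the strict upper bound I would use the hypothesis on the $\mu_{i}$ together with the norm estimate already appearing in the proof of Theorem~\ref{thrm:equal_weight__definite_convergent}. Since the eigenvalues of $\bm{\Lambda B}$ are real and nonnegative, the largest one equals $\rho(\bm{\Lambda B})$, and submultiplicativity of the spectral norm $\|\cdot\|_{2}$ gives
\begin{equation*}
\rho\left(\bm{\Lambda B}\right)\le\left\|\bm{\Lambda B}\right\|_{2}\le\left\|\bm{\Lambda}\right\|_{2}\left\|\bm{B}\right\|_{2}=\lambda_{\text{max}}\left(\bm{\Lambda}\right)\lambda_{\text{max}}\left(\bm{B}\right)=\Big(\max_{i}\mu_{i}\Big)\lambda_{\text{max}}\left(\bm{B}\right)<\frac{2}{\lambda_{\text{max}}\left(\bm{B}\right)}\,\lambda_{\text{max}}\left(\bm{B}\right)=2,
\end{equation*}
using that $\bm{B}$ is symmetric positive semidefinite so $\left\|\bm{B}\right\|_{2}=\lambda_{\text{max}}(\bm{B})$, and $\bm{\Lambda}$ is diagonal positive so $\left\|\bm{\Lambda}\right\|_{2}=\max_{i}\mu_{i}$. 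Hence every eigenvalue $\beta$ of $\bm{\Lambda B}$ satisfies $0\le\beta<2$.

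I expect the only genuine obstacle to be the bookkeeping in the semidefinite case, i.e.\ making precise that the null space of $\bm{B}$ (equivalently of $\bm{\Lambda}^{1/2}\bm{B}\,\bm{\Lambda}^{1/2}$, or of $\bm{C}$) contributes only the eigenvalue $0$ and cannot produce a complex or negative eigenvalue; everything else is a routine reuse of the $\|\cdot\|_{2}$ bound from Theorem~\ref{thrm:equal_weight__definite_convergent}. This lemma will then presumably feed into a convergence argument for the semidefinite case in exactly the way $\rho(\bm{I}-\bm{\Lambda B})<1$ was used there, by controlling the iteration $\bm{P}^{[k+1]}=\left(\bm{I}-\bm{\Lambda B}\right)\bm{P}^{[k]}+\left(1-\omega\right)\bm{\Lambda}\bm{N}^{T}\bm{Q}$.
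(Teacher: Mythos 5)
Your proposal is correct, and your primary argument takes a slightly different (and cleaner) route than the paper's. The paper factors $\bm{B}=\bm{C}^{T}\bm{C}$ with $\bm{C}=\bm{S}^{1/2}\bm{U}^{T}$, passes from the eigenequation $\bm{\Lambda}\bm{C}^{T}\bm{C}\bm{v}=\beta\bm{v}$ to $\bm{C\Lambda C}^{T}(\bm{Cv})=\beta(\bm{Cv})$, and concludes $\beta\ge 0$ from positive semidefiniteness of $\bm{C\Lambda C}^{T}$; it then gets $\beta<2$ from exactly the norm chain $\lambda(\bm{\Lambda B})\le\left\|\bm{\Lambda}\right\|_{2}\left\|\bm{B}\right\|_{2}<2$ that you use. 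Your main route instead conjugates by $\bm{\Lambda}^{1/2}$, exhibiting $\bm{\Lambda B}$ as similar to the symmetric positive semidefinite matrix $\bm{\Lambda}^{1/2}\bm{B}\bm{\Lambda}^{1/2}$, which delivers realness and nonnegativity of all eigenvalues in one stroke and sidesteps a small gap in the paper's argument: when $\bm{Cv}=\bm{0}$, the vector $\bm{Cv}$ is not a legitimate eigenvector of $\bm{C\Lambda C}^{T}$, and the paper does not treat this case (your parenthetical remark that $\bm{Cv}=\bm{0}$ forces $\beta\bm{v}=\bm{\Lambda}\bm{C}^{T}\bm{Cv}=\bm{0}$, hence $\beta=0$, is precisely the missing patch). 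You are also more careful than the paper in writing $\rho(\bm{\Lambda B})\le\left\|\bm{\Lambda B}\right\|_{2}$ with a non-strict inequality and in disposing of the degenerate case $\bm{B}=\bm{0}$. The only inessential difference is that your justification that $\bm{D}_{r}$ is a Gram matrix, while true, is not needed for the lemma as stated, since positive semidefiniteness of $\bm{B}$ is taken as a hypothesis in the appendix.
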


\begin{proof}
Matrix $\bm{B}$ is a positive semidefinite matrix.
Thus, real orthogonal matrix $\bm{U}$ and diagonal matrix $\bm{S}=\text{diag}\left(s_1,s_2,
    \cdots,s_n\right)$, $s_i\ge 0,\; i=1,2,\cdots,n$ exist,
   such that $\bm{B} = \bm{USU}^T$.
We obtain the following by denoting $\bm{C}=\bm{S}^{\frac{1}{2}}\bm{U}^T$:
  \begin{equation}
    \bm{B} = \bm{C}^T\bm{C}.
    \label{eq:semidefinite_coef}
  \end{equation}

Suppose that $\beta$ is an arbitrary eigenvalue of matrix $\bm{\Lambda C}^T\bm{C}$ w.r.t. eigenvector $\bm{v}$, i.e.,
  \begin{equation}
    \bm{\Lambda C}^T\bm{Cv} = \beta\bm{v}.
    \label{eq:equal_weight_symm_semidefinite}
  \end{equation}
In this case, we obtain
  \begin{equation*}
    \bm{C\Lambda C}^T\left(\bm{Cv}\right) = \beta\left(\bm{Cv}\right),
  \end{equation*}
  by multiplying both sides of Eq.~\eqref{eq:equal_weight_symm_semidefinite} by $\bm{C}$,
  indicating that $\beta$ is also an eigenvalue of the matrix $\bm{C\Lambda C}^T$ w.r.t. eigenvector $\bm{Cv}$.
Moreover, for all nonzero vectors $\bm{x}\in R^{n}$, it holds
  \begin{equation*}
    \bm{x}^T\bm{C\Lambda C}^T\bm{x} =
    \left(\bm{x}^T\bm{C\Lambda}^{\frac{1}{2}}\right)\left(\bm{x}^T\bm{C\Lambda}^{\frac{1}{2}}\right)^T\ge 0.
  \end{equation*}
Therefore, we conclude that the matrix $\bm{C\Lambda C}^T$ is a positive semidefinite matrix.
Thus, all its eigenvalues are nonnegative real numbers, i.e., $\beta\ge 0$.
According to Eq.~\eqref{eq:equal_weight_symm_semidefinite},
   the eigenvalues of matrix $\bm{\Lambda B}=\bm{\Lambda C}^T\bm{C}$ are also all nonnegative real numbers.

On the other hand,  given that $0<\mu_i<\frac{2}{\lambda_{\text{max}}\left(\bm{B}\right)},\;i=1,2,\cdots,n$,
   the eigenvalues of $\bm{\Lambda B}$ meet
  \begin{equation*}
    0 \le\lambda\left(\bm{\Lambda B}\right)
    < \left\| \bm{\Lambda B}\right\|_2
    \leq \left\| \bm{\Lambda}\right\|_2 \left\| \bm{B}\right\|_2
    =
     \lambda_{\text{max}}\left(\bm{\Lambda}\right)\lambda_{\text{max}}\left(\bm{B}\right)
    < 2.
  \end{equation*}
\end{proof}

\begin{remark}
Let $n_0$ denote the dimension of the zero eigenspace of $\bm{B}$~\eqref{eq:mtx_B}.
Given that $\bm{\Lambda}$ is nonsingular,
    the ranks of the matrix $\bm{B}$ and $\bm{\Lambda B}$ are the same, i.e.,
  \begin{equation*}
    \text{rank}\left(\bm{\Lambda B}\right)
    =\text{rank}\left(\bm{B}\right)
    =n-n_0.
  \end{equation*}
  \label{rmk4:zero_eigenspace}
\end{remark}

Moreover, in the proof of Theorem~\ref{thrm:equal_weight__semidefinite_convergent},
   a lemma similar to Lemma 2.5 in~\cite{Lin2018} is used as follows:
\begin{lemma}
  The algebraic multiplicity of the zero eigenvalue of matrix $\bm{\Lambda B}$
  is equal to its geometric multiplicity.
  \label{lemma:GM_AM}
\end{lemma}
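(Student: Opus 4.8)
The plan is to show that $\bm{\Lambda B}$ is diagonalizable; since every eigenvalue of a diagonalizable matrix is semisimple, its algebraic and geometric multiplicities coincide, and in particular this holds for the zero eigenvalue, which is exactly the assertion.

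First I would collect the structural facts already at hand. The matrix $\bm{\Lambda}=\mathrm{diag}(\mu_1,\dots,\mu_n)$ has strictly positive diagonal entries, so it is symmetric positive definite and admits the symmetric positive definite, invertible square root $\bm{\Lambda}^{1/2}=\mathrm{diag}(\sqrt{\mu_1},\dots,\sqrt{\mu_n})$; and $\bm{B}=(1-\omega)\bm{N}^T\bm{N}+\omega\bm{D}_r$ is real symmetric, as noted right after Eq.~\eqref{eq:mtx_B}. The key identity is then
\[
\bm{\Lambda B} = \bm{\Lambda}^{1/2}\bigl(\bm{\Lambda}^{1/2}\bm{B}\bm{\Lambda}^{1/2}\bigr)\bm{\Lambda}^{-1/2},
\]
so $\bm{\Lambda B}$ is similar to $\bm{M}:=\bm{\Lambda}^{1/2}\bm{B}\bm{\Lambda}^{1/2}$. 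Since $\bm{B}$ and $\bm{\Lambda}^{1/2}$ are symmetric, $\bm{M}$ is symmetric, hence orthogonally diagonalizable over $\mathbb{R}$. Similarity preserves diagonalizability, so $\bm{\Lambda B}$ is diagonalizable, and the claim follows. Combined with Lemma~\ref{lemma:eigenvalue_range}, which already places all eigenvalues in $[0,2)$, this pins down the Jordan structure of $\bm{\Lambda B}$ completely.

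A self-contained alternative, more in the spirit of the surrounding arguments (which use the factorization $\bm{B}=\bm{C}^T\bm{C}$ of Eq.~\eqref{eq:semidefinite_coef}), is to prove directly that $\ker(\bm{\Lambda B})=\ker\bigl((\bm{\Lambda B})^2\bigr)$, which is equivalent to the zero eigenvalue being semisimple. Given $(\bm{\Lambda B})^2\bm{x}=0$, set $\bm{u}=\bm{\Lambda B}\bm{x}$; then $\bm{\Lambda C}^T\bm{Cu}=0$, and invertibility of $\bm{\Lambda}$ forces $\|\bm{Cu}\|^2=\bm{u}^T\bm{C}^T\bm{Cu}=0$, i.e. $\bm{Cu}=0$. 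Writing $\bm{u}$ out gives $\bm{C\Lambda C}^T(\bm{Cx})=0$; pairing with $\bm{Cx}$ and using that $\bm{\Lambda}$ is positive definite yields $\bm{C}^T(\bm{Cx})=0$, hence $\bm{\Lambda B}\bm{x}=\bm{\Lambda C}^T\bm{Cx}=0$. Thus $\ker(\bm{\Lambda B})=\ker((\bm{\Lambda B})^2)$, as required.

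I do not expect a serious obstacle: the statement is essentially ``a positive definite matrix times a symmetric positive semidefinite matrix is diagonalizable.'' The one point that genuinely needs care is that a product of two matrices is in general \emph{not} diagonalizable, so the argument must exploit that $\bm{\Lambda}$ is positive definite and not merely invertible — it is the symmetric invertible square root $\bm{\Lambda}^{1/2}$ that converts $\bm{\Lambda B}$ into a symmetric matrix by a similarity transformation. This lemma is then what lets the convergence proof of Theorem~\ref{thrm:equal_weight__semidefinite_convergent} restrict attention to the nonzero eigenvalues of $\bm{\Lambda B}$ on the complement of $\ker(\bm{\Lambda B})$.
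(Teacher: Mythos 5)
Your proposal is correct, but your primary argument takes a genuinely different route from the paper. The paper does not actually write out a proof: it observes that $\bm{\Lambda B}=\bm{\Lambda C}^T\bm{C}$ via the factorization \eqref{eq:semidefinite_coef} and then defers to Lemma 2.5 of~\cite{Lin2018}, whose argument is of the same kernel/rank flavor as your second, ``self-contained alternative''---so that alternative is essentially the paper's intended proof, except that you have actually carried it out (and correctly: $\ker(\bm{\Lambda B})=\ker((\bm{\Lambda B})^2)$ is exactly semisimplicity of the zero eigenvalue, and each step of your chain $\bm{Cu}=0\Rightarrow\bm{C\Lambda C}^T(\bm{Cx})=0\Rightarrow\bm{C}^T\bm{Cx}=0$ uses positive definiteness of $\bm{\Lambda}$ where it must). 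Your main argument---conjugating by $\bm{\Lambda}^{1/2}$ to exhibit $\bm{\Lambda B}$ as similar to the symmetric matrix $\bm{\Lambda}^{1/2}\bm{B}\bm{\Lambda}^{1/2}$---is cleaner and strictly stronger: it shows $\bm{\Lambda B}$ is diagonalizable outright, so \emph{every} eigenvalue is semisimple, which would collapse all the Jordan blocks $\bm{J}_{n_i}(\beta_i,1)$ in Lemma~\ref{lemma:Jordan} to size one and shorten the limit computation in Theorem~\ref{thrm:equal_weight__semidefinite_convergent}; the paper's route only establishes the weaker statement it needs (trivial blocks at the zero eigenvalue), relying on $|1-\beta_i|<1$ from Lemma~\ref{lemma:eigenvalue_range} to kill the possibly nontrivial blocks at nonzero eigenvalues. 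You also correctly identify the one point of genuine care---that positive definiteness (not mere invertibility) of $\bm{\Lambda}$ is what makes the square-root similarity available---which is the reason the product of an arbitrary invertible matrix with a symmetric one need not be diagonalizable.
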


\begin{proof}
Based on Eq.~\eqref{eq:semidefinite_coef},
    matrix $\bm{\Lambda B}$ can be written as $\bm{\Lambda C}^T\bm{C}$.
Then, the proof of this lemma is similar to that of Lemma 2.5 provided in~\cite{Lin2018}.
\end{proof}

\begin{lemma}
The Jordan canonical form of the matrix $\bm{\Lambda B}$\eqref{eq:equal_weight_ctrlpts} can be represented as
  \begin{equation}
    \bm{J}=
    \left[\begin{matrix}
        \begin{matrix}
          \bm{J}_{n_1}(\beta_1,1) &                         &        &                         \\
                                  & \bm{J}_{n_2}(\beta_2,1) &        &                         \\
                                  &                         & \ddots &                         \\
                                  &                         &        & \bm{J}_{n_k}(\beta_k,1)
        \end{matrix}
         & \text{\Huge 0} \\
        \text{\Huge 0}
         &
        \begin{matrix}
          0 &        &   \\
            & \ddots &   \\
            &        & 0
        \end{matrix}
      \end{matrix}\right]_{n\times n},
    \label{eq:jordan_canonical}
  \end{equation}
  where
  \begin{equation*}
    \bm{J}_{n_i}(\beta_i,1)=
    \begin{bmatrix}
      \beta_i & 1       &        &         \\
              & \beta_i & \ddots &         \\
              &         & \ddots & 1       \\
              &         &        & \beta_i
    \end{bmatrix}\in R^{n_i\times n_i}
  \end{equation*}
  is called a Jordan block of size $n_i$ with eigenvalue $\beta_i$,
  and $0\le\beta_i<2$ and $i=1,2,\cdots,k$ are the nonzero eigenvalues of $\bm{\Lambda B}$.
  \label{lemma:Jordan}
\end{lemma}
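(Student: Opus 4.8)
The plan is to establish the Jordan canonical form of $\bm{\Lambda B}$ by assembling the three structural facts already proved in the appendix. First I would recall from Lemma~\ref{lemma:eigenvalue_range} that every eigenvalue $\beta$ of $\bm{\Lambda B}$ is real and satisfies $0\le\beta<2$; in particular the only possible eigenvalue on the boundary of nonnegativity is $0$, and all remaining eigenvalues $\beta_1,\dots,\beta_k$ lie in $[0,2)$ (listed with multiplicity, the nonzero ones among them). Thus any Jordan canonical form of $\bm{\Lambda B}$ splits into a block associated with the nonzero eigenvalues and a block associated with the zero eigenvalue.

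Next I would pin down the zero-eigenvalue block. By Lemma~\ref{lemma:GM_AM}, the algebraic multiplicity of the eigenvalue $0$ equals its geometric multiplicity; equivalently, every Jordan block attached to the eigenvalue $0$ has size $1$. Combined with Remark~\ref{rmk4:zero_eigenspace}, which gives $\operatorname{rank}(\bm{\Lambda B}) = \operatorname{rank}(\bm{B}) = n - n_0$, the geometric (hence algebraic) multiplicity of $0$ is exactly $n_0$. Therefore the portion of the Jordan form corresponding to $0$ is a diagonal zero block of size $n_0\times n_0$, precisely the lower-right block displayed in~\eqref{eq:jordan_canonical}.

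Then I would record that the complementary block, of size $(n-n_0)\times(n-n_0)$, collects all Jordan blocks $\bm{J}_{n_i}(\beta_i,1)$ associated with the nonzero eigenvalues $\beta_i$; since these are the standard Jordan blocks of any square matrix, they assemble exactly into the upper-left block of~\eqref{eq:jordan_canonical}, with $\sum_i n_i = n - n_0$ and each $\beta_i\in(0,2)\subset[0,2)$ by Lemma~\ref{lemma:eigenvalue_range}. Putting the zero block and the nonzero-eigenvalue block together (up to the usual permutation similarity that orders the blocks) yields the claimed form~\eqref{eq:jordan_canonical}.

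The only mildly delicate point is bookkeeping rather than a genuine obstacle: one must be careful that ``nonzero eigenvalues'' in the statement is read as the list of eigenvalues appearing in the nontrivial Jordan blocks (so that $0\le\beta_i$ is written, allowing the degenerate possibility that the matrix is already nonsingular and $n_0=0$, in which case the zero block is empty). Since $\bm{\Lambda B}$ is a real matrix with only real eigenvalues, the existence of a real Jordan canonical form is automatic, and no further argument about complex conjugate blocks is needed; the proof is essentially a citation of Lemmas~\ref{lemma:eigenvalue_range}, \ref{lemma:GM_AM} and Remark~\ref{rmk4:zero_eigenspace} arranged in this order.
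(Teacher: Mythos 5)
Your proposal is correct and follows essentially the same route as the paper: invoke Lemma~\ref{lemma:eigenvalue_range} to get real nonnegative eigenvalues in $[0,2)$, then use Lemma~\ref{lemma:GM_AM} (equality of algebraic and geometric multiplicity of the zero eigenvalue) to force every Jordan block attached to $0$ to be $1\times 1$. Your extra appeal to Remark~\ref{rmk4:zero_eigenspace} to identify the size of the zero block as $n_0$ is a harmless refinement the paper leaves implicit.
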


\begin{proof}
Based on Lemma.~\ref{lemma:eigenvalue_range},
   the eigenvalues of $\bm{\Lambda B}$ are all nonnegative real numbers.
Thus, the Jordan canonical form of $\bm{\Lambda B}$ can be written as
  \begin{equation*}
    \bm{J}=
    \left[\begin{matrix}
        \begin{matrix}
          \bm{J}_{n_1}(\beta_1,1) &                         &        &                         \\
                                  & \bm{J}_{n_2}(\beta_2,1) &        &                         \\
                                  &                         & \ddots &                         \\
                                  &                         &        & \bm{J}_{n_k}(\beta_k,1)
        \end{matrix}
         & \text{\Huge 0} \\
        \text{\Huge 0}
         &
        \begin{matrix}
          \bm{J}_{m_1}(0,1) &        &                   \\
                            & \ddots &                   \\
                            &        & \bm{J}_{m_l}(0,1)
        \end{matrix}
      \end{matrix}\right]_{n\times n},
  \end{equation*}
  where
  $\bm{J}_{n_i}(\beta_i,1)$ is a Jordan block of size $n_i$ with nonzero eigenvalue $\beta_i$ of $\bm{\Lambda B}$, $i=1,2,\cdots,k$, and
  $\bm{J}_{m_j}(0,1)$ is a Jordan block of size $m_i$ with zero eigenvalue of $\bm{\Lambda B}$, $j=1,2,\cdots,l$.

Moreover, Lemma.~\ref{lemma:GM_AM} states that
    the algebraic multiplicity of the zero eigenvalue of the matrix $\bm{\Lambda B}$
    is equal to its geometric multiplicity.
This finding means that the Jordan blocks
    w.r.t the zero eigenvalues of the matrix $\bm{\Lambda B}$
    are all equal to the matrix $(0)_{1\times1}$.
\end{proof}

\begin{theorem}
If $\bm{B}$ is a positive semidefinite matrix,
   then the iterative method \eqref{eq:equal_weight_ctrlpts} is convergent.
  \label{thrm:equal_weight__semidefinite_convergent}
\end{theorem}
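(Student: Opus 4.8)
The plan is to run the same telescoping argument as in Theorem~\ref{thrm:equal_weight__definite_convergent}, but to account for the fact that $\bm{B}$ is now only positive semidefinite, so that $\bm{\Lambda B}$ has a zero eigenvalue and $\rho(\bm{I}-\bm{\Lambda B})=1$ in general. First I would unroll the recursion as in Eq.~\eqref{eq:equal_weight_kst_ctrk_iter},
\begin{equation*}
	\bm{P}^{[k+1]}=\left(\bm{I}-\bm{\Lambda B}\right)^{k+1}\bm{P}^{[0]}+\left(1-\omega\right)\sum_{l=0}^{k}\left(\bm{I}-\bm{\Lambda B}\right)^{l}\bm{\Lambda N}^{T}\bm{Q},
\end{equation*}
so that convergence of $\{\bm{P}^{[k]}\}$ reduces to (a) convergence of $(\bm{I}-\bm{\Lambda B})^{k}$ and (b) convergence of the partial sums $\sum_{l=0}^{k}(\bm{I}-\bm{\Lambda B})^{l}\bm{\Lambda N}^{T}\bm{Q}$.

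By Lemma~\ref{lemma:Jordan} the Jordan form of $\bm{\Lambda B}$ is $\bm{J}=\mathrm{diag}\left(\bm{J}_{n_{1}}(\beta_{1},1),\ldots,\bm{J}_{n_{k}}(\beta_{k},1),0,\ldots,0\right)$ with $0<\beta_{i}<2$; in particular the zero eigenvalue is semisimple. Writing $\bm{\Lambda B}=\bm{V}\bm{J}\bm{V}^{-1}$ yields an invariant splitting $R^{n}=\mathcal{V}_{1}\oplus\mathcal{V}_{0}$, where $\mathcal{V}_{1}=\mathrm{im}(\bm{\Lambda B})$ is spanned by the generalized eigenvectors for the nonzero $\beta_{i}$ and $\mathcal{V}_{0}=\ker(\bm{\Lambda B})=\ker(\bm{B})$. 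On $\mathcal{V}_{1}$ every eigenvalue of $\bm{I}-\bm{\Lambda B}$ has the form $1-\beta_{i}$ with $|1-\beta_{i}|<1$, so $(\bm{I}-\bm{\Lambda B})^{k}|_{\mathcal{V}_{1}}\to\bm{0}$ and, by the Neumann series, $\sum_{l\ge 0}(\bm{I}-\bm{\Lambda B})^{l}|_{\mathcal{V}_{1}}=(\bm{\Lambda B}|_{\mathcal{V}_{1}})^{-1}$; on $\mathcal{V}_{0}$ the map $\bm{I}-\bm{\Lambda B}$ is the identity, so $(\bm{I}-\bm{\Lambda B})^{k}|_{\mathcal{V}_{0}}=\bm{I}$. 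Hence $(\bm{I}-\bm{\Lambda B})^{k}\to\bm{G}$, the projector onto $\mathcal{V}_{0}$ along $\mathcal{V}_{1}$, which disposes of (a).

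The real obstacle is (b): on $\mathcal{V}_{0}$ the partial sums equal $(k+1)\bm{I}$ and diverge, so one must show the forcing term has no component there, i.e. $\bm{\Lambda N}^{T}\bm{Q}\in\mathcal{V}_{1}=\mathrm{im}(\bm{\Lambda B})$. Since $\bm{\Lambda}$ is nonsingular this is equivalent to $\bm{N}^{T}\bm{Q}\in\mathrm{im}(\bm{B})=\ker(\bm{B})^{\perp}$ (using symmetry of $\bm{B}$). For any $\bm{v}\in\ker(\bm{B})$, because $0<\omega<1$ and both $\bm{N}^{T}\bm{N}$ and $\bm{D}_{r}$ are positive semidefinite,
\begin{equation*}
	0=\bm{v}^{T}\bm{B}\bm{v}=\left(1-\omega\right)\left\|\bm{N}\bm{v}\right\|^{2}+\omega\,\bm{v}^{T}\bm{D}_{r}\bm{v}
\end{equation*}
forces $\bm{N}\bm{v}=\bm{0}$, whence $\bm{v}^{T}\bm{N}^{T}\bm{Q}=(\bm{N}\bm{v})^{T}\bm{Q}=0$; thus $\bm{N}^{T}\bm{Q}=\bm{B}\bm{u}$ for some $\bm{u}$ and $\bm{\Lambda N}^{T}\bm{Q}=\bm{\Lambda B}\bm{u}\in\mathcal{V}_{1}$. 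Restricted to $\mathcal{V}_{1}$ the partial sums then converge by the Neumann series above, so $\bm{P}^{[k]}$ converges to $\bm{P}^{[\infty]}=\bm{G}\bm{P}^{[0]}+\left(1-\omega\right)(\bm{\Lambda B})^{\dagger}\bm{\Lambda N}^{T}\bm{Q}$, where $(\bm{\Lambda B})^{\dagger}$ denotes the inverse on $\mathcal{V}_{1}$ extended by zero on $\mathcal{V}_{0}$; one checks directly that $\bm{P}^{[\infty]}$ solves $\bm{B}\bm{P}^{[\infty]}=\left(1-\omega\right)\bm{N}^{T}\bm{Q}$, hence is a minimizer of the energy model~\eqref{eq:energy_min_model}. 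The step I expect to need the most care is (b) — showing the forcing term lies in $\mathrm{im}(\bm{\Lambda B})$ — since this is precisely where the semidefinite case diverges from the definite one and where the assumption $0<\omega<1$ enters; the rest is the Jordan-block bookkeeping already prepared in Lemmas~\ref{lemma:eigenvalue_range}--\ref{lemma:Jordan}.
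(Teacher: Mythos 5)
Your proof is correct, and it differs from the paper's in one substantive way. The paper also relies on the Jordan structure of $\bm{\Lambda B}$ (Lemmas~\ref{lemma:eigenvalue_range}--\ref{lemma:Jordan}), but it handles the forcing term by subtracting a particular solution $(1-\omega)\bm{B}^{+}\bm{N}^{T}\bm{Q}$ of $\bm{B}\bm{P}=(1-\omega)\bm{N}^{T}\bm{Q}$ and telescoping the homogeneous error, invoking the consistency condition $(1-\omega)\bm{B}\bm{B}^{+}\bm{N}^{T}\bm{Q}=(1-\omega)\bm{N}^{T}\bm{Q}$ as the criterion under which that linear system is solvable --- without verifying that the condition actually holds. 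Your step (b) supplies exactly that missing verification: for $\bm{v}\in\ker(\bm{B})$ the identity $0=\bm{v}^{T}\bm{B}\bm{v}=(1-\omega)\|\bm{N}\bm{v}\|^{2}+\omega\,\bm{v}^{T}\bm{D}_{r}\bm{v}$ forces $\bm{N}\bm{v}=\bm{0}$ (for $0<\omega<1$), hence $\bm{N}^{T}\bm{Q}\in\ker(\bm{B})^{\perp}=\mathrm{im}(\bm{B})$, so the divergent part of the Neumann series is never excited. This closes a genuine gap in the paper's argument rather than merely rephrasing it. Your invariant-subspace formulation $R^{n}=\mathrm{im}(\bm{\Lambda B})\oplus\ker(\bm{\Lambda B})$ also sidesteps the paper's somewhat delicate identification $\bm{T}^{-1}\mathrm{diag}(0,\dots,0,1,\dots,1)\bm{T}=\bm{I}-(\bm{V}\bm{T})^{-1}\bm{B}^{+}\bm{B}(\bm{V}\bm{T})$, which mixes the Jordan basis $\bm{T}$ of $\bm{\Lambda B}$ with the orthogonal eigenbasis $\bm{V}$ of $\bm{B}$ and requires careful bookkeeping of eigenvalue orderings; the trade-off is that the paper's limit is expressed through the familiar Moore--Penrose inverse $\bm{B}^{+}$, whereas yours uses the oblique inverse $(\bm{\Lambda B})^{\dagger}$ adapted to the splitting. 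Both limits satisfy $\bm{B}\bm{P}^{[\infty]}=(1-\omega)\bm{N}^{T}\bm{Q}$, and in both approaches the limit retains a $\bm{P}^{[0]}$-dependent component in $\ker(\bm{B})$, which your projector $\bm{G}$ makes explicit.
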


\begin{proof}
A positive semidefinite matrix $\bm{B}$ can be decomposed as
  \begin{equation*}
    \bm{B}=\bm{V}\text{diag}(\lambda_1,\lambda_2,\cdots,\lambda_{n-n_0},\underset{n_0}{\underbrace{0,\cdots,0}})\bm{V}^T,
  \end{equation*}
   where $\bm{V}$ is an orthogonal matrix,
   and $\lambda_i$ and $i=1,2,\cdots,n-n_0$ are both the nonzero eigenvalues and nonzero singular values of $\bm{B}$.
Then, the Moore-Penrose (M-P) inverse of $\bm{B}$ has the corresponding decomposition as follows:
  \begin{equation*}
    \bm{B}^{+}=\bm{V}\text{diag}(\frac{1}{\lambda_1},\frac{1}{\lambda_2},\cdots,\frac{1}{\lambda_{n-n_0}},\underset{n_0}{\underbrace{0,\cdots,0}})\bm{V}^T.
  \end{equation*}
Thus,
  \begin{equation}
    \bm{B}^{+}\bm{B} = \bm{V}\text{diag}(\underset{n-n_0}{\underbrace{1,1,\cdots,1}},\underset{n_0}{\underbrace{0,\cdots,0}})\bm{V}^T.
    \label{eq:equal_weight_decomposition}
  \end{equation}

Based on Lemma.~\ref{lemma:Jordan},
   the Jordan canonical form of the matrix $\bm{\Lambda B}$ is $\bm{J}$~\eqref{eq:jordan_canonical}.
We obtain the following by using the Jordan decomposition: $\bm{\Lambda B}=\bm{T}^{-1}\bm{JT}$,
   where $\bm{T}$ is an invertible matrix.
Hence,
  \begin{equation*}
    \bm{I}-\bm{\Lambda B} =
    \bm{T}^{-1}
    \left[\begin{matrix}
        \begin{matrix}
          \bm{J}_{n_1}(1-\beta_1,-1) &                            &        &                            \\
                                     & \bm{J}_{n_2}(1-\beta_2,-1) &        &                            \\
                                     &                            & \ddots &                            \\
                                     &                            &        & \bm{J}_{n_k}(1-\beta_k,-1)
        \end{matrix}
         & \text{\Huge 0} \\
        \text{\Huge 0}
         &
        \begin{matrix}
          1 &        &   \\
            & \ddots &   \\
            &        & 1
        \end{matrix}
      \end{matrix}\right]
    \bm{T}.
  \end{equation*}
Based on Lemma.~\ref{lemma:eigenvalue_range},
    $-1<1-\beta_i<1$, $i=1,2,\cdots,k$.
Then, we obtain the following by combining Eq.~\eqref{eq:equal_weight_decomposition}:
  \begin{align}
    \lim_{h\to \infty}\left ( \bm{I}-\bm{\Lambda B} \right )^h
    \nonumber
     & = \bm{T}^{-1}\text{diag}( \underset{n-n_0}{\underbrace{0,0,\cdots,0}},\underset{n_0}{\underbrace{1,\cdots,1}})\bm{T}         \\
    \nonumber
     & = \bm{T}^{-1}(\bm{I}-\text{diag}(\underset{n-n_0}{\underbrace{1,1,\cdots,1}},\underset{n_0}{\underbrace{0,\cdots,0}}))\bm{T} \\
     & = \bm{I}-\left(\bm{VT}\right)^{-1}\bm{B}^{+}\bm{B}\left(\bm{VT}\right).
    \label{eq:iter_mtx_limit}
  \end{align}
On the other hand, the coefficient matrix $\bm{B}$ is singular;
    thus, the linear system in Eq.~\eqref{eq:equal_weight_classical} has its solution
    if and only if
  \begin{equation*}
    \left(1-\omega\right)\bm{BB}^{+}\bm{N}^T\bm{Q}=\left(1-\omega\right)\bm{N}^T\bm{Q}.
  \end{equation*}
Therefore, we obtain the following by subtracting $\left(1-\omega\right)\bm{B}^{+}\bm{N}^T\bm{Q}$ from both sides of the iterative scheme \eqref{eq:equal_weight_ctrlpts}:
  \begin{equation*}
    \begin{aligned}
      \bm{P}^{[k+1]}-\left(1-\omega\right)\bm{B}^{+}\bm{N}^T\bm{Q}
       & = \left(\bm{I}-\bm{\Lambda B}\right)\bm{P}^{[k]}
      +\left(1-\omega\right)\bm{\Lambda R}^T\bm{Q}
      - \left(1-\omega\right)\bm{B}^{+}\bm{N}^T\bm{Q}                                                                          \\
       & = \left(\bm{I}-\bm{\Lambda B}\right)\bm{P}^{[k]}
      + \left(1-\omega\right)\bm{\Lambda}\bm{BB}^{+}\bm{N}^T\bm{Q}
      - \left(1-\omega\right)\bm{B}^{+}\bm{N}^T\bm{Q}                                                                          \\
       & = \left(\bm{I}-\bm{\Lambda B}\right)\bm{P}^{[k]}
      - \left(1-\omega\right)\left(\bm{I}-\bm{\Lambda B}\right)\bm{B}^{+}\bm{N}^T\bm{Q}                                        \\
       & = \left(\bm{I}-\bm{\Lambda B}\right)\left( \bm{P}^{[k]}-\left(1-\omega\right)\bm{B}^{+}\bm{N}^T\bm{Q}\right)          \\
       & = \left(\bm{I}-\bm{\Lambda B}\right)^{k+1}\left( \bm{P}^{[0]}-\left(1-\omega\right)\bm{B}^{+}\bm{N}^T\bm{Q}\right).
    \end{aligned}
  \end{equation*}
From Eq.~\eqref{eq:iter_mtx_limit}, when $k\rightarrow\infty$,
   the above equation tends to
  \begin{equation*}
    \bm{P}^{[\infty]} - \left(1-\omega\right)\bm{B}^{+}\bm{N}^T\bm{Q}
    = \left(\bm{I}-\left(\bm{VT}\right)^{-1}\bm{B}^{+}\bm{B}\left(\bm{VT}\right)\right)
    \left( \bm{P}^{[0]}-\left(1-\omega\right)\bm{B}^{+}\bm{N}^T\bm{Q}\right).
  \end{equation*}
Consequently,
  \begin{equation*}
    \begin{aligned}
      \bm{P}^{[\infty]}
       & = \left(\bm{I}-\left(\bm{VT}\right)^{-1}\bm{B}^{+}\bm{B}\left(\bm{VT}\right)\right)
      \left( \bm{P}^{[0]}-\left(1-\omega\right)\bm{B}^{+}\bm{N}^T\bm{Q}\right)
      + \left(1-\omega\right)\bm{B}^{+}\bm{N}^T\bm{Q}                                                                 \\
       & = \left(1-\omega\right)\left(\bm{VT}\right)^{-1}\bm{B}^{+}\bm{B}\left(\bm{VT}\right)\bm{B}^{+}\bm{N}^T\bm{Q}
      + \left(\bm{I}-\left(\bm{VT}\right)^{-1}\bm{B}^{+}\bm{B}\left(\bm{VT}\right)\right)\bm{P}^{[0]},
    \end{aligned}
  \end{equation*}
  where $\bm{P}^{[0]}$ is the initial control point,
  which can be randomly selected.
When $\bm{P}^{[0]}=0$,
   the iterative method \eqref{eq:equal_weight_ctrlpts} converges to
  \begin{equation}
    \bm{P}^{[\infty]} = \left(1-\omega\right)\left(\bm{VT}\right)^{-1}\bm{B}^{+}\bm{B}\left(\bm{VT}\right)\bm{B}^{+}\bm{N}^T\bm{Q}.
    \label{eq:equal_weight_limit_semidefinite}
  \end{equation}
\end{proof}
\begin{remark}
If matrix $\bm{V}$ is the inverse matrix of $\bm{T}$,
   that is, $\bm{VT}=\bm{TV}=\bm{I}$, then Eq.~\eqref{eq:equal_weight_limit_semidefinite} becomes
  \begin{equation*}
    \bm{P}^{[\infty]} = \left(1-\omega\right)\bm{B}^{+}\bm{N}^T\bm{Q},
  \end{equation*}
   where $\bm{B}^{+}$ is the M-P pseudo-inverse of the matrix $\bm{B}$.
Then the iterative method \eqref{eq:equal_weight_ctrlpts} converges to the solution of the classical energy minimization equation \eqref{eq:equal_weight_classical}.
\end{remark}

\end{document}